\newtheorem{corollary}{Corollary}[]
\newtheorem{proposition}{Proposition}[]
\newtheorem{lemma}[]{Lemma}
\newtheorem{definition}{Definition}
\begin{document}
\bstctlcite{IEEEexample:BSTcontrol}
	%
	\title{Coverage Analysis and Load Balancing in HetNets with mmWave Multi-RAT Small Cells}
	\author{Gourab Ghatak$^{\dagger}$ $^\ddagger$, Antonio De Domenico$^{\dagger}$, and Marceau Coupechoux$^\ddagger$
 \\ \small{ $^{\dagger}$CEA, LETI, MINATEC, F-38054 Grenoble,
France; $^\ddagger$LTCI, Telecom ParisTech, Universit\'e Paris Saclay, France.}
\\ \small{Email: gourab.ghatak@cea.fr; antonio.de-domenico@cea.fr, and marceau.coupechoux@telecom-paristech.fr}}
		\maketitle
        \vspace{-2cm}
        \thispagestyle{empty}
	\begin{abstract}
We characterize a two tier heterogeneous network, consisting of classical sub-6GHz macro cells, and multi Radio Access Technology (RAT) small cells able to operate in sub-6GHz and millimeter-wave (mm-wave) bands. For optimizing coverage and to balance loads, we {propose a {two-step} mechanism based on} two biases for tuning the tier and RAT selection, {where the sub-6GHz band is used to speed-up the initial access procedure in the mm-wave RAT}. First, we investigate the effect of the biases in terms of signal to interference plus noise ratio (SINR) distribution, cell load, and user throughput. More specifically, we obtain the optimal biases that maximize either the SINR coverage or the user downlink throughput. Then, we characterize the cell load using the mean cell approach and derive upper bounds on the overloading probabilities. Finally, for a given traffic density, we provide the small cell density required to satisfy system constraints in terms of overloading and outage probabilities. Our analysis highlights the importance of deploying dual band small cells in particular when small cells are sparsely deployed or in case of heavy traffic.
	\end{abstract}
	%
	\IEEEpeerreviewmaketitle
	\section{Introduction}
	Future cellular networks will require a tremendous increase in data rates. This multi-fold {enhancement} cannot be achieved through incremental improvements on existing schemes \cite{andrews2014will}. For this, two techniques are particularly attractive: network densification using small cells \cite{lopez2015towards} and mm-wave wave communications~\cite{rappaport2013millimeter}. Densification of cellular networks consists of massive deployments of small cells, overlaying the existing macro cell architecture. Traditionally, small cells are deployed in sub-6GHz frequencies with the aim of offloading macro-cells.  This calls for Inter-Cell Interference Coordination ~\cite{okino2011pico, eguizabal2013interference} and load balancing~\cite{ali2015load}. To further increase the data rates, millimeter-wave (mm-wave) small cells, providing a very high bandwidth, are gaining popularity.
Apart from the large bandwidths, mm-wave communication comes with highly directional antennas, which greatly reduces the co-channel interference \cite{ghosh2014millimeter}. However, this technology is characterized by large path-loss and high sensitivity to blockages. {Because of the stronger path-loss, beamforming techniques should be deployed to mitigate it and this poses new issues in terms of coverage.} {Moreover, providing initial access to standalone mm-wave base stations using beamtraining with thin beams presents a difficult challenge~\cite{li2016initial}.
In this regard, the sub-6GHz band can be used to aid the initial access mechanism~\cite{mmMagic}. Specifically, given suitable signal processing mechanisms, the position and orientation of the users relative to a sub-6GHz BS can be determined (see e.g.,~\cite{kangas2013angle}). If sub-6GHz and mm-wave BS are co-located, or their position and orientation relative to one another are known, the coarse-grained angle information for beamtraining of the mm-wave RF front-end can be derived easily, which significantly speeds up the initial access procedure.}
As a result, it is unrealistic to assume ubiquitous coverage with only mm-wave small cells, and it is envisioned that multiple radio  access techniques (RATs) will co-exist in future cellular networks \cite{onoe20161}{~\cite{5783993}}. 
    
In this paper, we analyze the signal to interference plus noise ratio (SINR) distribution, the cell load and the downlink user throughput in a heterogeneous network with multi-RAT small cells using stochastic geometry. In order to optimize the user's SINR or to balance loads between tiers and RATs, we propose a cell association scheme based on two biases. In addition, we show the interest of deploying multi-RAT small cells to improve users' Quality of Service (QoS). 
 

\subsection{Related Work}
Elsawy et al., have presented a comprehensive survey on stochastic geometry to model multi-tier cellular networks \cite{elsawy2013stochastic}. The SINR and physical data rate distributions have been derived in the literature by {Bai et al.}~\cite{bai2015coverage} for single-tier mm-wave networks, by {Singh et al.}~\cite{singh2014joint} for multi-tier sub-6GHz and by {Di Renzo} for mm-wave networks~\cite{di2015stochastic}. 
In case of small cells operating in the same band of the macro cell, Singh et al.~\cite{singh2014joint}, have shown that{,} without advanced interference management techniques, the SINR decreases with increasing offloading bias. On the contrary, in this paper, we investigate how employing mm-wave in conjunction with sub-6GHz in small cell{s} affects the system performance, and we show that optimizing the offloading biases can increase the user's SINR.
 
{Omar et al.~\cite{omar2016performance} have considered separate mm-wave and sub-6GHz BS. They characterized the blockage in a suburban context using real data from the Lancaster university, UK. The results provided by the authors are greatly limited since they use simulation studies in a specific scenario. These results may not be applicable in other network architectures.}
{In the context of random networks, Yao et al.~\cite{yao2016coverage}}, similar to Di Renzo~\cite{di2015stochastic} have characterized the SINR coverage probability and {the} physical data rate {in} a multi-tier mm-wave network. However, {the authors have not studied how traffic dynamics in a multi-user scenario impacts} the network performance and the average user throughput. On the other hand, Elshaer et al.~\cite{elshaer2016downlink} have analyzed a multi-tier network {with sub-6GHz macro cells and mm-wave small cells}. {They have derived the SINR coverage probability as a function of the tier association bias, and they have shown only by simulations that a non-trivial optimal tier selection bias may exist. They have also investigated the relation between the association bias and throughput but without considering dynamic traffic.} Moreover, {they have characterized} the load {by} using the average number of associated users in a cell{;} although, for a more realistic characterization, a dynamic traffic model should be considered. Furthermore, they have not optimized the user throughput while considering SINR outage constraints as well as overloading constraints.
 
In this perspective, Bonald and Proutière~\cite{bonald2003wireless} have studied the relations between the traffic arrival rate and the cell load for a single cell scenario. In the case of single-tier cellular network, Blaszczyszyn and Karray~\cite{blaszczyszyn2014user} have approximated the cell load by a mean-cell approach to calculate {the} number of active users in a cell and the average {user} throughput. We leverage on these studies to design the optimal load balancing in multi-RAT heterogeneous networks and to derive bounds on overloading probabilities.

	\subsection{Contributions and Organization}

The contributions of this paper can be summarized as follows:
     
    \subsubsection{SINR Coverage in a multi-RAT Heterogeneous Network} By using stochastic geometry, we derive the association probabilities and the SINR distribution of a typical user in a multi-RAT heterogeneous network with small cells operating in sub-6GHz and millimeter wave bands. In the literature, SINR coverage and throughput analyses have not been performed so far for such a system model. 
    
    \subsubsection{Association Scheme for Tier and RAT Selection}We introduce a mechanism based on two biases, $Q_T$ and $Q_R$, for tuning the tier and RAT selection, respectively. The principle of using biased received power for association has been used so far for tier offloading, whereas in this paper, we introduce a second bias to distribute the users between the available RATs in the small cells. {Using these biases, we propose a two-step association scheme, in which initial access is performed in the sub-6GHz band. We compare our association mechanism with a more natural and exhaustive one-step association procedure in terms of sub-optimality of biased received power and downlink throughput.}
    {We show that this two-step association scheme fares better than cell association with beamtraining in mm-wave in terms of downlink throughput, specially in case of higher access delays.}
    
    \subsubsection{Bias Optimization for SINR Coverage} Contrary to single-RAT heterogeneous networks, biasing the received power can lead to an improved SINR in a multi-RAT system. However, bias optimization is difficult in general. In the general case, $Q_T$ and $Q_R$ can be obtained by brute force if the range of possible values is small. {To limit the complexity of this approach, we provide a strategy that sets $Q_R$ based on the ratio of the approximated mean SINR in sub-6GHz band and mm-wave. Thereafter, $Q_T$ is obtained using a random-restart hill-climbing algorithm with adaptive step-size.} We show that this strategy achieves near-optimal SINR coverage probability.
    We also highlight through simulations that sparse deployments require sub-6GHz band service for guaranteeing SINR requirements, whereas, in case of dense deployments, mm-wave may provide good SINR coverage, but with limited macrocell offloading. However, we show that, with large macrocell offloading, users at the edge of small cells, even in relatively dense deployments, need sub-6GHz band service to receive appreciable SINR coverage.
    
\subsubsection{Cell Load Characterization and Load Balancing} Next, we analyze the effect of traffic density on the downlink user throughput {by} using a M/G/1/PS queue model. {The existing literature in stochastic geometry defines the cell load as the average number of associated full buffer users, uniformly distributed over the cell area, see e.g.,~\cite{elsawy2013stochastic, singh2013offloading}. This approach is static in nature and ignores the effect of dynamic traffic on the user distribution: users with low data rate tend indeed to stay longer in the system so that the user distribution becomes inhomogeneous in space. To account for this effect, we rely on results from queuing theory~\cite{bonald2003wireless} and characterize the load of each cell by the mean cell approximation~\cite{blaszczyszyn2014user}. We solve a fixed point equation for the load to take the load of the interfering base stations into account.} Accordingly, we derive upper bounds on the probability for a cell in each tier and RAT to become overloaded. Based on the derived bounds, we provide values of minimum necessary deployment densities required for a given traffic density so as to limit overloading and outage. We then derive and optimize the downlink user throughput with respect to tier and RAT biases under these constraints. We analyze the fundamental trade-off between user throughput, overloading and outage probabilities. We finally highlight that the capability of the small cells to operate also in the sub-6GHz band plays a key role to restrict outage, thereby justifying our system model.
    

The rest of the paper is organized as follows.
    In Section \ref{sec:System Model}, we introduce our two-tier heterogeneous network model. In Section \ref{sec:CARS}, {we describe the proposed tier and RAT selection procedure and we derive the related association probabilities}.
{Then, in {S}ection \ref{CPandR}, we compute and optimize the network d}ownlink SINR distribution in terms of the tier and RAT selection biases. 
 In Section \ref{Sec:Load}, we characterize the load of the network and the downlink user throughput under a dynamic traffic model, and, hence, we design the load balancing such that the user performance is maximized. Simulation results are provided in Section VI. Finally, the paper concludes in section VII. Main notations used in this paper are shown in Table~\ref{tab:Param}. 
	\section{System Model}
	\label{sec:System Model}
	\subsection{Two-Tier Network Model}
\label{sec: Network Model}
    \begin{table}[!t]
    \small
	\centering
    \caption{Notations and System Parameters} 
\begin{tabular}{|c | c | c |}
	\hline  Notation& Parameter& Value\\
\hline
	\hline $\phi_M$, $\lambda_M$ & MBS process and density & $\lambda_M$ = 5 per sq. km. \\  
    	\hline $\phi_S$, $\lambda_S$ & SBS process and density & $\lambda_S$ = 5-200 per sq. km. \\  
	 \hline {$P_M$, $P_S$}& {MBS/SBS power} &  {46 dBm, 30 dBm}\\ 
	 \hline  {$\alpha_{tLr}$, $\alpha_{tNr}$}&  {Approximated LOS/NLOS path-loss exponents} &  {2, 4}\\
     \hline  $G_0$ & Maximum directivity gain with mm-wave antenna& 36 dB \\
     \hline  $N_0$& Noise power density & -174 dBm/Hz\\
     \hline  {$B_{\mu}$, $B_{mm}$} &  {Sub-6GHz/mm-wave bandwidth} &  {20 MHz, 1 GHz}\\
     \hline  $\sigma_{N,mm}^2$, $\sigma_{N,\mu}^2$& Noise power & $N_0 B_{mm}$, $N_0 B_\mu$\\
     \hline  {$d_{M}$, $d_{S}$} &  {MBS/SBS LOS ball radius} &  {200 m, 20 m}\\
	\hline  {$\theta$} & {Beamwidth} & {15 degrees}\\
	\hline 
\end{tabular}
\label{tab:Param}
\end{table}
Consider a two-tier network consisting of macro BSs referred to as MBSs, and small cell BSs referred to as SBSs. MBSs are deployed to guarantee continuous coverage to the users. On the contrary, multi-RAT SBSs locally provide high data rate by jointly exploiting sub-6GHz and mm-wave bands. We also assume that the same sub-6GHz band is shared by MBSs and SBSs. Therefore, users receiving services on this band experience both co-tier and cross-tier interference.
MBS and SBS locations are modeled as independent Poisson point processes (PPP), $\phi_M$ and $\phi_S$, with intensities $\lambda_M$ and $\lambda_S$, respectively. Let the transmit power of MBS be given by $P_M$; the small cell transmit power, in both the bands, is assumed to be equal to $P_S$. End users are assumed to be distributed according to a {PPP} $\phi_U$, independent of both $\phi_M$ and $\phi_S$. Due to the independence of the PPPs and Slivnyak's theorem \cite{elsawy2013stochastic}, without loss of generality, we carry out our downlink analysis considering a typical user located at the origin.

	\subsection{Blockage Processes}
	Cellular networks generally suffer from {link} blockages due to buildings, vehicles{,} etc. We assume a blockage process independent of the BS processes. Let the probability of a MBS and SBS to be in line of sight (LOS) with respect to the typical user {at a distance $r$}{,} be denoted by $p_M(r)$ and $p_S(r)$, respectively. For a given SBS, the LOS probability in  sub-6GHz is assumed to be the same as that in mm-wave. This is because, the probability of a signal to be blocked mainly depends on the  blockage process, which is independent of the carrier frequency \cite{bai2014analysis}.
	Due to the blockages, MBSs and SBSs can be categorized into either LOS or NLOS (non line of sight) processes: {$\phi_{ML}$, $\phi_{MN}$, $\phi_{SL}$, and $\phi_{SN}$, respectively}. The intensity of these modified processes are given by $p_M(r)\lambda_M$, $(1 - p_M(r))\lambda_M$, $p_S(r)\lambda_S$, and $(1 - p_S(r))\lambda_S${,} respectively.
{In our work, we use the LOS ball approximation introduced in \cite{bai2015coverage}. Accordingly, let $d_M$ be the MBS LOS ball radius. The probability of the typical user to be in LOS from a MBS at a distance $r$ is $p_M(r)=1$, if $ r < d_M$, and $p_M(r)=0$, otherwise.  
	We assume a similar LOS ball for the SBS process with a different radius $d_S$.
	
	\subsection{Directional Beamforming in mm-wave}
	In case of mm-wave operations, the {received powers take advantage of} the directional antenna gain of the transmitter and the receiver. {The user and the serving BS are assumed to be aligned, whereas the interfering BSs are randomly oriented with respect to the typical user.}
{Here, we assume a tractable model, where the product of the transmitter and receiver antenna gains, $G$, takes on the values $a_k$ with probabilities $b_k$ as given in Table 1 of \cite{bai2015coverage}. Let the maximum value of $G$ be $G_0$. }
	\subsection{Path-loss Processes} \label{subsec:pathloss}
    We assume a distance based path-loss model where the path-loss at a distance $d_{tvr}$ from a transmitter is given by: $l_{tvr}(d) = K_{tvr}d_{tvr}^{-\alpha_{tvr}}$ for a BS of type $tvr$, i.e., characterized by tier $t$ (MBS or SBS), visibility state $v$ (LOS or NLOS), and RAT $r$ (sub-6GHz or mm-wave). Parameters $K_{tvr}$ and $\alpha_{tvr}$ are derived from 3GPP UMa model for sub-6GHz MBSs, Umi model for sub-6GHz SBSs \cite{36.814}, and Umi model for mm-wave data transmission in SBSs~\cite{38.900}. By assuming a fast fading that is Rayleigh distributed with variance equal to one, the average received power is thus given by $P_{tvr} = P_tK_{tvr}d_{tvr}^{-\alpha_{tvr}}$, where $P_t$ is the transmit power of a BS of tier $t$.

With our values (see Table \ref{tab:Param}) of transmit powers, path-loss exponents, and LOS ball radii, we have 
	$
	\frac{d_S^{\alpha_{SL\mu}}}{K_{SL\mu}P_S} \le \frac{d_M^{\alpha_{ML\mu}}}{K_{ML\mu}P_M} \le \frac{d_S^{\alpha_{SN\mu}}}{{K_{SN\mu}}P_S} \le \frac{d_M^{\alpha_{MN\mu}}}{K_{MN\mu}P_M}.
	$
 The analysis in this paper is done considering that this ordering does not change even when powers are biased\footnote{{This assumption of ordering is considered only for the sake of simplicity and practicality. 
Considering higher bias values marginally alters the theoretical developments by modifying integral bounds in association probabilities. From an engineering point of view, very high bias values also lead to unacceptable outage probabilities and thus are of little interest.}}.
This assumption is reasonable considering that if a LOS BS exists and the tier bias is moderate, its biased received power is very likely to be greater than that of any NLOS BS. {Accordingly, we analyze the performance of the network with tier-selection bias  $(Q_T)$ in the range: $1 \leq Q_T \leq \frac{d_S^{\alpha_{SN\mu}}K_{ML\mu}P_M }{d_M^{\alpha_{ML\mu}}K_{SN\mu}P_S}$ = $Q_T^{max}$.}
\subsection{{Dynamic Traffic Model}}
{We consider a model in which users arrive in the system, download a file, and leave the system. Any new download by the same user is considered as a new user. The arrival process of the new users is Poisson distributed with an intensity $\lambda$ [users $\cdot$s$^{-1}\cdot$m$^{-2}$] and these new users are uniformly distributed over the network area $A$. The average file size is $\sigma$ [bits/user]. When there are $n$ users simultaneously served by a base station, the available resources are equally shared between them in a Round Robin fashion. Accordingly, we define the traffic density $w$ in the network as $w = \lambda \cdot \sigma$ [bits$\cdot$s$^{-1}\cdot$m$^{-2}$]. Note that, while the user arrivals are uniform in space, as the space-time process evolves, users farther from the serving base stations which are characterized by lower data rates stay longer in the system, resulting in an inhomogeneous distribution of active users in the network.}

\section{Cell Association Procedure}
In this section, we propose a cell association scheme based on tier and RAT selection biases and we derive the corresponding association probabilities. We start below by a preliminary result. 

\subsection{Distribution of {the Path-loss Process}}
To analyze the cell association, {path-loss processes are reformulated} as one dimensional processes,
	$\phi'_{tvr} = \{\xi_{tvr,i}: \xi_{tvr,i} = \frac{||x_i||^{\alpha_{tvr}}}{K_{tvr}P_t} , x_i \in \phi_{tv}\}$, $t\in \{M,S\}$, $v\in \{L,N\}$, $r\in \{\mu, m\}$.
	The processes $\phi'_{tvr}$ are non-homogeneous with intensities calculated as below. 
	
	\begin{lemma} The intensity measures of the LOS and NLOS {path-loss} processes, $\phi'_{tLr}$ and $\phi'_{tNr}$ are:
	\begin{align}
	\Lambda'_{tLr}(0,x) &= \begin{cases}
	\pi\lambda_t{(K_{tvr}P_t)}^{\frac{2}{\alpha_{tLr}}}x^{\frac{2}{\alpha_{tLr}}}, & x < \frac{d_t^{\alpha_{tLr}}}{{K_{tvr}P_t}} \nonumber \\
	\pi\lambda_td_t^{{2}}, & x > \frac{d_t^{\alpha_{tLr}}}{{K_{tvr}P_t}}
	\end{cases}, \\
   \Lambda'_{tNr}(0,x) & = \begin{cases}
	0, & x < \frac{d_t^{\alpha_{tNr}}}{{K_{tvr}P_t}} \\
	\pi\lambda_t({(K_{tvr}P_tx)}^{\frac{2}{\alpha_{tNr}}} - d_t^2), & x > \frac{d_t^{\alpha_{tNr}}}{{K_{tvr}P_t}}
	\end{cases}.
    \label{eq:IntensityMeasure}
	\end{align}
    \end{lemma}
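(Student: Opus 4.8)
The plan is to obtain each intensity measure by applying the Mapping Theorem for Poisson point processes to the radial transformation $x_i \mapsto \xi_{tvr,i} = \|x_i\|^{\alpha_{tvr}}/(K_{tvr}P_t)$ and then computing the pushforward measure directly. First I would record what the LOS ball approximation does to the base station processes: since $p_t(r)$ equals $1$ for $r<d_t$ and $0$ otherwise, the deterministic thinning turns $\phi_{tL}$ into a homogeneous PPP of intensity $\lambda_t$ supported on the disk $B(0,d_t)$, and $\phi_{tN}$ into a homogeneous PPP of intensity $\lambda_t$ supported on its complement $\{r>d_t\}$. The map $x \mapsto \|x\|^{\alpha_{tvr}}/(K_{tvr}P_t)$ is measurable and radially strictly increasing, so by the Mapping Theorem each image process $\phi'_{tvr}$ is again a Poisson process on $(0,\infty)$, and it suffices to compute its intensity measure.

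Next I would evaluate $\Lambda'_{tvr}(0,x)$ as the expected number of image points in $(0,x)$. Because the map is radial and monotone, the event $\xi_{tvr,i}<x$ is equivalent to $\|x_i\| < (K_{tvr}P_t x)^{1/\alpha_{tvr}}$; denote this threshold radius by $\rho(x) := (K_{tvr}P_t x)^{1/\alpha_{tvr}}$. Then $\Lambda'_{tvr}(0,x)$ equals the mean number of points of the (restricted) base station process lying in $B(0,\rho(x))$, which by the homogeneity of the thinned PPP is simply $\lambda_t$ times the area of the intersection of $B(0,\rho(x))$ with the relevant support region.

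For the LOS process the support is $B(0,d_t)$, so the relevant area is $\pi(\min\{\rho(x),d_t\})^2$. Resolving the minimum gives the two regimes: when $\rho(x)<d_t$, i.e.\ $x<d_t^{\alpha_{tLr}}/(K_{tvr}P_t)$, the area is $\pi\rho(x)^2 = \pi (K_{tvr}P_t)^{2/\alpha_{tLr}}x^{2/\alpha_{tLr}}$, and once $\rho(x)\ge d_t$ the whole disk is captured and the measure saturates at $\pi\lambda_t d_t^2$. For the NLOS process the support is the annular region $\{r>d_t\}$, so the relevant area is $\pi(\rho(x)^2-d_t^2)^+$: it vanishes while $\rho(x)\le d_t$ (yielding $0$ for $x<d_t^{\alpha_{tNr}}/(K_{tvr}P_t)$) and equals $\pi(\rho(x)^2-d_t^2)=\pi((K_{tvr}P_t x)^{2/\alpha_{tNr}}-d_t^2)$ afterwards. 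Multiplying each area by $\lambda_t$ reproduces the stated expressions.

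The calculation is essentially bookkeeping once the Mapping Theorem is invoked, so I do not anticipate a genuine obstacle. The one point requiring care is that the exponent in the radial map is $\alpha_{tLr}$ for the LOS branch and $\alpha_{tNr}$ for the NLOS branch, so the thresholds $d_t^{\alpha_{tLr}}/(K_{tvr}P_t)$ and $d_t^{\alpha_{tNr}}/(K_{tvr}P_t)$ differ and must be tracked separately; correctly handling the saturation to the full disk in the LOS case and the hard cut-in at $r=d_t$ in the NLOS case, via the $\min$ and the positive part $(\cdot)^+$ respectively, is exactly what produces the piecewise form.
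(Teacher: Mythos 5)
Your proof is correct. The paper itself gives no derivation for this lemma---it only points to the analogous computation in its reference [zhang2014stochastic]---and the argument you give (deterministic thinning by the LOS ball, then the Mapping Theorem applied to the radial map $x\mapsto \|x\|^{\alpha_{tvr}}/(K_{tvr}P_t)$, with the intensity measure read off as $\lambda_t$ times the area of $B(0,\rho(x))$ intersected with the support region) is precisely the standard derivation that citation carries out, so you have in effect supplied the self-contained version of the paper's intended proof.
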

	\begin{proof}
		The derivation of the intensity measure is similar to that in \cite{zhang2014stochastic}. 
	\end{proof}
	The related intensities are obtained by differentiating the intensity measures, and are given by:	
	\begin{align}
	\lambda'_{tLr}(x) &= \begin{cases}
	\frac{2\pi\lambda_t {(K_{tvr}P_t)}^{\frac{2}{\alpha_{tLr}}}}{\alpha_{tLr}}x^{\frac{2}{\alpha_{tLr}}-1} ,  &x < \frac{d_M^{\alpha_{tLr}}}{{K_{tvr}P_t}} \nonumber  \\ 
	0, & x > \frac{d_t^{\alpha_{tLr}}}{{K_{tvr}P_t}}  
	\end{cases}\\
	\lambda'_{tNr}(x) &= \begin{cases}
	0, & x < \frac{d_M^{\alpha_{tNr}}}{{K_{tvr}P_t}} \\ 
	\frac{2\pi\lambda_t {(K_{tvr}P_t)}^{\frac{2}{\alpha_{tNr}}}}{\alpha_{tNr}}x^{\frac{2}{\alpha_{tNr}}-1} , &  x > \frac{d_t^{\alpha_{tNr}}}{{K_{tvr}P_t}}.  
	\end{cases}
    \label{eq:Intensity}
	\end{align}
	\begin{lemma} \label{lemma:firstpoint1D}
The probability density function (pdf) of the first point of $\phi'_{tv\mu}$, which corresponds to strongest sub-6GHz BS, is:
$$
f_{\xi_{tv\mu1}}(r) = e^{-\Lambda'_{tv\mu}(0,r)} \lambda'_{tv\mu}(r).
$$
\end{lemma}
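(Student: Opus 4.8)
The plan is to treat $\phi'_{tv\mu}$ as a one-dimensional non-homogeneous Poisson point process on $\mathbb{R}_+$, whose intensity measure $\Lambda'_{tv\mu}(0,\cdot)$ is the one computed in the preceding Lemma, and to obtain the law of its smallest point via a void-probability argument. Since the mapping $x_i \mapsto \xi_{tv\mu,i} = \|x_i\|^{\alpha_{tv\mu}}/(K_{tv\mu}P_t)$ is increasing in $\|x_i\|$, the point carrying the smallest value of $\xi$ corresponds exactly to the BS of type $tv\mu$ delivering the largest average received power; hence $\xi_{tv\mu1} := \min_i \xi_{tv\mu,i}$ is indeed the quantity of interest, which I would state explicitly at the outset.

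First I would compute the complementary CDF of $\xi_{tv\mu1}$. The event $\{\xi_{tv\mu1} > r\}$ is precisely the event that $\phi'_{tv\mu}$ places no point in the interval $(0,r]$. By the void probability of a Poisson process this equals $\exp\!\left(-\Lambda'_{tv\mu}(0,r)\right)$, so that $\mathbb{P}\!\left(\xi_{tv\mu1} > r\right) = e^{-\Lambda'_{tv\mu}(0,r)}$ and consequently $F_{\xi_{tv\mu1}}(r) = 1 - e^{-\Lambda'_{tv\mu}(0,r)}$.

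Then I would differentiate to recover the density:
\[
f_{\xi_{tv\mu1}}(r) = \frac{d}{dr} F_{\xi_{tv\mu1}}(r) = e^{-\Lambda'_{tv\mu}(0,r)}\,\frac{d}{dr}\Lambda'_{tv\mu}(0,r).
\]
Invoking the relation between intensity and intensity measure—the very differentiation used above to pass from \eqref{eq:IntensityMeasure} to \eqref{eq:Intensity}—I have $\frac{d}{dr}\Lambda'_{tv\mu}(0,r) = \lambda'_{tv\mu}(r)$, which yields the claimed expression.

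The only point requiring a little care is the differentiability of $\Lambda'_{tv\mu}(0,\cdot)$, which by the previous Lemma is defined piecewise and becomes constant beyond the LOS-ball threshold. However, this function is continuous and piecewise $C^1$, with its derivative agreeing with $\lambda'_{tv\mu}$ wherever the latter is defined and vanishing on the plateau region (consistently with $\lambda'_{tv\mu}=0$ there). Thus there is no genuine obstacle: the statement is the standard nearest-point/void-probability computation for a non-homogeneous PPP, and the main thing to make explicit is the correspondence between the first (smallest) point of the path-loss process and the strongest sub-6GHz BS.
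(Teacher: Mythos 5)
Your proposal is correct and follows essentially the same route as the paper: express the event $\{\xi_{tv\mu1} > r\}$ as the void probability of $\phi'_{tv\mu}$ on $(0,r)$, giving $e^{-\Lambda'_{tv\mu}(0,r)}$, and differentiate to obtain the density. Your version is in fact slightly cleaner, since you differentiate the CDF $1-e^{-\Lambda'_{tv\mu}(0,r)}$ (avoiding the sign slip in the paper's displayed chain of equalities) and you explicitly address the piecewise differentiability of the intensity measure, neither of which changes the substance of the argument.
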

	\begin{proof}
		The pdf of the first point in $\phi'_{tv\mu}$ is computed as 
        $$
		f_{\xi_{tv\mu1}}(r)  = \frac{d}{dr} \left[ \mathbb{P}(\phi'_{tv\mu} \cap (0,r) = 0) \right] = \frac{d}{dr} \left[e^{-\Lambda'_{tv\mu}(0,r)}\right]  = e^{-\Lambda'_{tv\mu}(0,r)} \lambda'_{tv\mu}(r),
        $$
        where $\Lambda'_{tv\mu}$ and $\lambda'_{tv\mu}$ are given by Eq. \eqref{eq:IntensityMeasure} and Eq. \eqref{eq:Intensity}, respectively.
        	\end{proof}

\subsection{Tier and RAT Selection Scheme}

For the cell association mechanism, we assume that BSs send their control signals in the sub-6GHz band. This is due to the fact that sub-6GHz communication  benefits from a higher reliability and better coverage than mm-wave signals~\cite{shokri2015millimeter}. Our scheme is based on two biases $Q_T$ and $Q_R$ for selecting the tier and the RAT respectively, to which the user will be associated. Parameter $Q_T$ is the classical cell range expansion parameter~\cite{singh2014joint}: a user compares the strongest MBS signal with the strongest {\it biased} SBS signal. By varying $Q_T$, we are able to offload users from MBSs to SBSs. Once associated to a SBS, in our approach, a user compares the sub-6GHz received signal with the mm-wave signal strength {\it biased} with a second parameter $Q_R$. By varying $Q_R$, users can be distributed between RATs of the same SBS\footnote{An alternative association scheme could be realized through the control of the SBS power in the different bands. However, as the transmit powers of SBSs are generally limited, we do not take this into consideration. Moreover, our approach can be easily adapted to study this alternative scheme.}. 
 The association policy, summarized in Algorithm 1, consists of two steps: tier selection and RAT selection. 
	\label{sec:CARS}
\subsection{Tier Selection}
	\label{subsec:CA}   	
The tier selection is {based on the transmitted {signal on the sub-6GHz band}.} As a result, a user can be served either by: 1. an MBS in LOS  (ML), 2. an MBS in NLOS (MN), 3. an SBS in LOS (SL), or 4. an SBS in NLOS (SN). The biased received powers in sub-6GHz from the strongest LOS MBS, NLOS MBS, LOS SBS, and NLOS SBS are denoted as $P_{ML\mu1}, P_{MN\mu1}, Q_TP_{SL\mu1}$, and $Q_TP_{SN\mu1}$, respectively. {User association is only based on measured biased received power}. With the ordering assumption of Section II-D, however, a user associates with an NLOS BS only in absence of an LOS BS.
\begin{algorithm*}[!htp] 
\small
    {
Algorithm 1: Tier and RAT Selection \\
		\begin{algorithmic}[1]
\vspace*{0.1cm}
			\State Measure downlink sub-6GHz received powers from all MBS, SBS.
            \State Let $P_{Mv\mu 1}$ and $P_{Sv\mu 1}$ be the strongest powers received from an MBS and an SBS, resp.            
            \If {$P_{Mv\mu 1} \geq Q_T P_{Sv\mu 1}$}
            	\State Associate to the strongest MBS
            \Else
            	\State Associate to the strongest SBS
            	\State Measure the mm-wave received power from the SBS $(P_{Svm 1})$.
            	\If{{$P_{Sv\mu 1} \geq Q_R P_{Svm 1}$}}
            		\State Start service from SBS in {sub-6GHz} band.
            	\Else
            		\State Start service from SBS in mm-wave band.
            	\EndIf
            \EndIf         
		\end{algorithmic}
        }
        \label{ALGO}
\end{algorithm*}
{It must be noted that the user does not know the visibility state of the base stations and associates only according to the biased received powers. The result that the user associates with an NLOS BS only in the absence of a LOS BS follows from the ordering described in Section II-D, which in turn, is a result of the values of the transmit powers and LOS ball radii.}  As a consequence, for a LOS BS, the association probability of a typical user with tier $t$ can be calculated as:
\begin{align}
\mathbb{P}_{tL} = \mathbb{E}\left[\mathds{1}(tL)\right]\cdot\mathbb{E}[\mathds{1}(t'L)]\cdot\mathbb{P}(\tilde{Q}_{T}P_{tL\mu1}>\tilde{Q}_{T'}P_{t'L\mu1}) + \mathbb{E}\left[\mathds{1}(tL)\right]\cdot(1-\mathbb{E}[\mathds{1}(t'L)]),
\label{eq:LOSASSO}
\end{align}
where $t, t' \in \{M,S\}$, $t \neq t'$, and $\mathds{1}(.)$ is an indicator function: $\mathds{1}(tL) = 1$ if and only if a point of tier $t$ with visibility state $L$ exists. The value of $\tilde{Q}_{T}$ is equal to 1 if $t= M$, else it is equal to $Q_T$. The first term of Eq. \eqref{eq:LOSASSO} is the product of the probabilities of 1) the existence of a LOS SBS and 2) the existence of a LOS MBS and 3) that the received power from the serving tier is greater than the one from the non-serving tier. The second term is the product of the probabilities of the existence of at least one LOS BS of the serving tier and the absence of a LOS BS of the non-serving tier.
In the same way, for the NLOS BSs, we have:
\begin{eqnarray}
\mathbb{P}_{tN} = (1-\mathbb{E}[\mathds{1}(t_ML)])\cdot(1-\mathbb{E}[\mathds{1}(t_SL)])\cdot\mathbb{P}(\tilde{Q}_{T}P_{tN\mu1}>\tilde{Q}_{T}P_{t'N\mu1}).
\label{eq:NLOSASSO}
\end{eqnarray}
From these observations, we can deduce the tier selection probabilities as follows. 

	\begin{lemma}The tier selection probabilities are:
    \begin{align}
&\mathbb{P}_{ML} = \exp(-\pi \lambda_M d_M^2)\cdot\exp(-\pi \lambda_S d_S^2) \cdot W_1 + \exp(-\pi \lambda_M d_M^2)\cdot \left(1 - \exp(-\pi \lambda_S d_S^2)\right),\nonumber \\
&\mathbb{P}_{MN}  = \left(1-\exp(-\pi \lambda_M d_M^2)\right)\cdot \left(1-\exp(-\pi \lambda_S d_S^2)\right) \cdot W_2, \nonumber \\
&\mathbb{P}_{SL}   = \exp(-\pi \lambda_M d_M^2)\cdot\exp(-\pi \lambda_S d_S^2) \cdot(1- W_1) + \exp(-\pi \lambda_S d_S^2)\cdot \left(1 - \exp(-\pi \lambda_M d_M^2)\right),\nonumber \\
&\mathbb{P}_{SN}  = \left(1-\exp(-\pi \lambda_M d_M^2)\right)\cdot \left(1-\exp(-\pi \lambda_S d_S^2)\right) \cdot (1-W_2),\nonumber
\end{align}
  where,
  \begin{align} 
W_1 &=  \frac{(1 - e^{-(K_1 + 1)t_1})}{1 + K_1}  + \exp(-\pi \lambda_S d_S^2)\left[ \exp\left(-\Lambda'_{ML\mu}\left(0,\frac{d_S^{\alpha_{SL\mu}}}{Q_TK_{SL\mu}P_S}\right)\right) - \exp(-\pi \lambda_M d_M^2) \right] \label{eq: rmlgrsl},
	 \nonumber \\
W_2 &= \exp(-\pi \lambda_S d_S^2)  \frac{e^{-(K_2 + 1)t_2}}{1 + K_2},\nonumber
\end{align}
 $K_1 = \pi \lambda_S (\frac{K_{SL\mu}P_SQ_T}{K_{ML\mu}P_M})^{\frac{2}{\alpha_{SL\mu}}}(\pi\lambda_M)^{-\frac{\alpha_{ML\mu}}{\alpha_{SL\mu}}}$, $t_1 = \pi \lambda_M (K_{ML\mu}P_M)^{\frac{2}{\alpha_{ML\mu}}}\left(\frac{d_S^{\alpha_{SL\mu}}}{Q_TK_{SL\mu}P_S}\right)^{\frac{2}{\alpha_{ML\mu}}}$,
\\
$K_2 = \pi \lambda_S (\frac{K_{SN\mu}P_SQ_T}{K_{MN\mu}P_M})^{\frac{2}{\alpha_{SN\mu}}}(\pi\lambda_M)^{-\frac{\alpha_{MN\mu}}{\alpha_{SN\mu}}}$, and $t_2 = \pi\lambda_Md_M^2(K_{MN\mu} P_M)^{\frac{2}{\alpha_{MN\mu}}-1}$.

	\end{lemma}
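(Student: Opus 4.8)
The plan is to evaluate the two master expressions \eqref{eq:LOSASSO} and \eqref{eq:NLOSASSO} by supplying their two missing ingredients: (i) the existence and absence probabilities of LOS base stations of each tier, i.e.\ the factors $\mathbb{E}[\mathds{1}(tL)]$, and (ii) the biased-power comparison probabilities $\mathbb{P}(\tilde{Q}_T P_{tv\mu1} > \tilde{Q}_{T'} P_{t'v\mu1})$. Each of these is then turned into closed form using the intensity measures of Eq.~\eqref{eq:IntensityMeasure}--\eqref{eq:Intensity} and the first-point density of Lemma~\ref{lemma:firstpoint1D}. Substituting back and collecting terms produces the four quantities $\mathbb{P}_{ML},\mathbb{P}_{MN},\mathbb{P}_{SL},\mathbb{P}_{SN}$ together with the auxiliary quantities $W_1$ and $W_2$.

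First I would compute the existence probabilities. Under the LOS-ball model of Section~II-B, the LOS base stations of tier $t$ are exactly the points of the PPP $\phi_t$ falling inside the ball of radius $d_t$; their number is Poisson with mean $\pi\lambda_t d_t^2$. Hence the probability that no LOS BS of tier $t$ is present is the void probability $\exp(-\pi\lambda_t d_t^2)$, and the complementary existence probability is $1-\exp(-\pi\lambda_t d_t^2)$; these are precisely the factors $\exp(-\pi\lambda_M d_M^2)$ and $\exp(-\pi\lambda_S d_S^2)$ appearing in the statement. Because $\phi_M$ and $\phi_S$ are independent, the joint events in \eqref{eq:LOSASSO} and \eqref{eq:NLOSASSO} factorize, which already yields the ``pure-existence'' terms such as $\exp(-\pi\lambda_M d_M^2)\,(1-\exp(-\pi\lambda_S d_S^2))$ without any integration.

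Next I would evaluate the comparison probabilities, which is where $W_1$ and $W_2$ come from. Since the fading-averaged received power is the reciprocal of the one-dimensional path-loss point, $P_{tv\mu1}=1/\xi_{tv\mu1}$, the event $P_{ML\mu1}>Q_T P_{SL\mu1}$ is equivalent to $\xi_{ML\mu1}<\xi_{SL\mu1}/Q_T$. Conditioning on the strongest SBS point and using the void-probability form of the first-point CDF gives
\begin{align}
\mathbb{P}(P_{ML\mu1}>Q_T P_{SL\mu1}) = \int_0^\infty \left(1-e^{-\Lambda'_{ML\mu}(0,s/Q_T)}\right) e^{-\Lambda'_{SL\mu}(0,s)}\,\lambda'_{SL\mu}(s)\,ds, \nonumber
\end{align}
with the density $f_{\xi_{SL\mu1}}$ taken from Lemma~\ref{lemma:firstpoint1D}. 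Inserting the explicit power-law forms of $\Lambda'$ and $\lambda'$ from Eq.~\eqref{eq:IntensityMeasure}--\eqref{eq:Intensity} and performing the elementary integral produces the ratio $1/(1+K_1)$ together with the exponential factor in $t_1$, i.e.\ the first summand of $W_1$; the NLOS comparison, computed the same way with the NLOS intensities, yields $W_2$ and the constants $K_2,t_2$.

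The main obstacle will be the careful bookkeeping of the truncated, piecewise supports induced by the LOS-ball approximation: $\Lambda'_{tL\mu}$ saturates at $\pi\lambda_t d_t^2$ once $x$ exceeds $d_t^{\alpha_{tL\mu}}/(K_{tL\mu}P_t)$, so the integral above must be split at the corresponding boundary $s=d_S^{\alpha_{SL\mu}}/(Q_T K_{SL\mu}P_S)$. It is exactly this splitting that generates the extra bracketed correction in $W_1$ containing $\exp(-\Lambda'_{ML\mu}(0,\,d_S^{\alpha_{SL\mu}}/(Q_T K_{SL\mu}P_S)))$, which accounts for the regime in which the strongest SBS lies beyond its LOS ball while the MBS comparison is still active. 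Throughout, the ordering assumption of Section~II-D is used to guarantee that a user never compares a LOS BS against an NLOS BS, so that the LOS and NLOS contributions decouple as written. As a final consistency check I would verify that $\mathbb{P}_{ML}+\mathbb{P}_{MN}+\mathbb{P}_{SL}+\mathbb{P}_{SN}=1$, which the $W_1\leftrightarrow 1-W_1$ and $W_2\leftrightarrow 1-W_2$ structure makes transparent.
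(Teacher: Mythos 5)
Your overall plan (void probabilities for existence, first-point densities for the biased-power comparison, substitution into \eqref{eq:LOSASSO}--\eqref{eq:NLOSASSO}, and the $W\leftrightarrow 1-W$ sanity check) matches the paper's strategy, and your existence factors are correct. The genuine gap is in the comparison integral. You condition on the strongest SBS point $s$ and write $\mathbb{P}(P_{ML\mu1}>Q_T P_{SL\mu1})=\int_0^\infty\bigl(1-e^{-\Lambda'_{ML\mu}(0,s/Q_T)}\bigr)e^{-\Lambda'_{SL\mu}(0,s)}\lambda'_{SL\mu}(s)\,ds$. This integral equals the probability of the event that a LOS SBS exists, a LOS MBS exists, \emph{and} $\xi_{ML\mu1}<\xi_{SL\mu1}/Q_T$, i.e., it requires both LOS points to exist. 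The paper's $W_1$ is a different quantity: the paper conditions on the MBS first point $r$ and multiplies by the SBS void probability $e^{-\Lambda'_{SL\mu}(0,Q_T r)}$, which for $r>d_S^{\alpha_{SL\mu}}/(Q_T K_{SL\mu}P_S)$ saturates at $e^{-\pi\lambda_S d_S^2}$ and therefore silently includes the event that no LOS SBS exists at all. Concretely, your integral equals $W_1-(1-e^{-\pi\lambda_M d_M^2})e^{-\pi\lambda_S d_S^2}$, so carrying out your plan as stated cannot produce the $W_1$ of the lemma unless you add back the missing term $\mathbb{P}(\text{LOS MBS exists, no LOS SBS})$.

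Relatedly, your claim about where the piecewise structure enters fails for your parametrization. In your integral the variable $s$ ranges over the support $(0,\,d_S^{\alpha_{SL\mu}}/(K_{SL\mu}P_S))$ of $\lambda'_{SL\mu}$; on this range $\Lambda'_{SL\mu}(0,s)$ is still in its power-law regime, and by the ordering of Section II-D together with $Q_T\geq 1$, so is $\Lambda'_{ML\mu}(0,s/Q_T)$ (it would saturate only at $s=Q_T d_M^{\alpha_{ML\mu}}/(K_{ML\mu}P_M)$, which lies outside the support). Hence no splitting at $s=d_S^{\alpha_{SL\mu}}/(Q_T K_{SL\mu}P_S)$ ever occurs in your integral, and the bracketed correction in $W_1$ never appears; a direct evaluation (with $\alpha_{ML\mu}=\alpha_{SL\mu}$) gives $\frac{1}{1+K_1}-e^{-\pi\lambda_S d_S^2}+\frac{K_1}{1+K_1}e^{-(K_1+1)t_1}$, not the first summand of $W_1$ plus a bracket. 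The splitting point you name is the one relevant to the paper's orientation of the conditioning, where the argument of the SBS void probability is $Q_T r$ and the integration variable is the MBS point. Your route is repairable---either switch the conditioning to the MBS first point as the paper does, or keep your integral and add $(1-e^{-\pi\lambda_M d_M^2})e^{-\pi\lambda_S d_S^2}$---but as written the derivation does not establish the stated closed form.
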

    \begin{proof}
		See Appendix \ref{App:TierSel}.
	\end{proof}

	\begin{lemma} Given that a user is associated to a tier $t$ of visibility state $v$, the pdf of the point in the 1D process of the serving BS is given by:
        \begin{equation}
\hat{f}_{\xi tv\mu1}(x) = \frac{{f}_{\xi tv\mu1}(x)}{\mathbb{P}_{tv}}\prod_{\forall (t'v' \neq tv)}\mathbb{P}(\phi'_{t'v'} \cap (0,x) = 0),
\end{equation}
where ${f}_{\xi tv\mu1}(x)$ is given by Lemma~\ref{lemma:firstpoint1D}.
\end{lemma}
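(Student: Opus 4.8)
The plan is to obtain $\hat f_{\xi tv\mu1}$ as a conditional density via Bayes' rule, writing it as the joint ``density'' of the two events \emph{the serving point lies at $x$} and \emph{the user associates with category $(t,v)$}, normalised by the marginal association probability $\mathbb{P}_{tv}$ supplied by the tier selection lemma. Formally I would start from
\begin{equation}
\hat f_{\xi tv\mu1}(x)\,dx = \frac{\mathbb{P}\big(\xi_{tv\mu1}\in dx,\ \text{assoc.\ to } (t,v)\big)}{\mathbb{P}_{tv}},\nonumber
\end{equation}
and then factor the numerator into the density of the candidate serving point and the probability that this candidate actually wins the association.

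The first factor is immediate. Within category $(t,v)$ the serving BS is the one with the largest biased received power, which—because $\xi=\|x\|^{\alpha}/(K_{tvr}P_t)$ is inversely related to the received power—is exactly the \emph{first} point $\xi_{tv\mu1}$ of the one–dimensional process $\phi'_{tv\mu}$. Hence the density that this first point sits at $x$ is precisely $f_{\xi tv\mu1}(x)=e^{-\Lambda'_{tv\mu}(0,x)}\lambda'_{tv\mu}(x)$ from Lemma~\ref{lemma:firstpoint1D}. The second factor is the probability that, \emph{conditioned} on $\xi_{tv\mu1}=x$, no competing category $(t',v')\neq(t,v)$ supplies a stronger biased signal. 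Using again the monotone decreasing correspondence between $\xi$ and biased received power, the event ``category $(t',v')$ loses'' is equivalent to $\phi'_{t'v'}$ having no point below the associated threshold $x\,\tilde Q_{t'}/\tilde Q_{t}$, i.e.\ to the void event $\{\phi'_{t'v'}\cap(0,\cdot)=\varnothing\}$.

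The key structural step is then to factorise this joint emptiness event. Because $\phi_M$ and $\phi_S$ are independent PPPs and the LOS/NLOS split is an independent thinning (Section~II-B), the processes $\phi'_{t'v'}$ are mutually independent, so the probability that every competitor is empty below its threshold factors as $\prod_{(t'v')\neq(tv)}\mathbb{P}\big(\phi'_{t'v'}\cap(0,\cdot)=0\big)$, each factor being the PPP void probability $e^{-\Lambda'_{t'v'}(0,\cdot)}$ obtained from Eq.~\eqref{eq:IntensityMeasure}. Collecting the two factors and dividing by $\mathbb{P}_{tv}$ yields the claimed expression.

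I expect the delicate points to be bookkeeping rather than conceptual. First, the thresholds in the void probabilities carry the bias ratios $\tilde Q_{t'}/\tilde Q_{t}$ (with $\tilde Q_M=1$ and $\tilde Q_S=Q_T$), so one must justify that, under the ordering assumption of Section~II-D, the relevant competitor set is correctly restricted—in particular an NLOS category only competes when no LOS category of the other tier is present—and that the arguments collapse to the clean form written in the statement. Second, one should verify that conditioning on $\xi_{tv\mu1}=x$ leaves the competing processes unaffected, which is exactly where the independence of the PPPs is used; this independence is what converts the conditional winning probability into the displayed product of marginal void probabilities.
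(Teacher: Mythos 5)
Your proof is correct and takes essentially the same route as the paper: the paper's one-line proof simply delegates to Lemma 3 of \cite{bai2015coverage}, whose underlying argument is precisely your Bayes-rule decomposition into the first-point density of $\phi'_{tv\mu}$ multiplied by the void probabilities of the independent competing processes, normalised by $\mathbb{P}_{tv}$. Your remark that the void-probability arguments should in general carry the bias-scaled thresholds $x\tilde{Q}_{t'}/\tilde{Q}_{t}$ (and that NLOS categories compete only in the absence of LOS BSs) correctly identifies the notational shortcut in the lemma's stated form, which is indeed how the result is used later (cf.\ the lower limits $l_{t'}$ in the conditional SINR coverage lemma).
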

\begin{proof}
The proof follows from Lemma 3 above and Lemma 3 of \cite{bai2015coverage}.
\end{proof}
	\subsection{RAT Selection in SBS}
	\label{subsec:RS}
	A dual-band user, associated with an SBS, is served using mm-wave if and only if the biased estimated power in the mm-wave band is larger than the power received in the sub-6GHz band.
	
	\begin{lemma}  Given that a user is associated with a SBS of visibility state $v$, the sub-6GHz and mm-wave RAT selection probabilities are respectively given by:
	\begin{align}
	\mathbb{P}_{v\mu} &=\exp\left(-\pi\lambda_S\left(\frac{K_{Svm}G_0Q_R}{K_{Sv\mu}}\right)^{\frac{2}{\alpha_{Svm} - \alpha_{Sv\mu}}}\right)  \label{PSmu}\\
	\mathbb{P}_{vm} &= 1 - \mathbb{P}_{v\mu}.
	\end{align}
    \end{lemma}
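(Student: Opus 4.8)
The plan is to reduce the RAT-selection event to a simple condition on the distance between the typical user and its serving SBS, and then to evaluate the resulting probability with the void probability of the SBS point process. First I would write the two received powers from the serving SBS located at distance $d$: in sub-6GHz, $P_{Sv\mu1} = P_S K_{Sv\mu} d^{-\alpha_{Sv\mu}}$, and in mm-wave, $P_{Svm1} = P_S K_{Svm} G_0 d^{-\alpha_{Svm}}$, where the maximum directivity gain $G_0$ appears because the serving BS and the user are beam-aligned (Section II-C). Since association is based on average received power, the fast fading averages out and both expressions are deterministic given $d$.

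Next I would translate the sub-6GHz selection rule $P_{Sv\mu1} \geq Q_R P_{Svm1}$ (line 8 of Algorithm 1) into a condition on $d$. Substituting the two powers and cancelling $P_S$ gives $K_{Sv\mu} d^{-\alpha_{Sv\mu}} \geq Q_R K_{Svm} G_0 d^{-\alpha_{Svm}}$. Because the mm-wave path-loss exponent exceeds the sub-6GHz one, so that $\alpha_{Svm} - \alpha_{Sv\mu} > 0$, rearranging yields the equivalent condition $d \geq d^*$ with $d^* = \left(\frac{K_{Svm} G_0 Q_R}{K_{Sv\mu}}\right)^{1/(\alpha_{Svm}-\alpha_{Sv\mu})}$. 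Intuitively, mm-wave is chosen only when the serving SBS is close enough that its large array gain compensates its steeper path-loss, so sub-6GHz is used precisely for the far serving SBS.

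Then, noting that the RAT selection compares two powers emitted by the \emph{same} (serving) SBS, the event depends only on the serving distance $d$, and under the LOS-ball and ordering assumptions the serving SBS coincides with the nearest SBS. Hence $\mathbb{P}_{v\mu} = \mathbb{P}(d \geq d^*)$, i.e.\ the probability that no point of $\phi_S$ falls in the ball of radius $d^*$ centred at the origin. Applying the void probability of a homogeneous PPP of intensity $\lambda_S$ gives $\mathbb{P}(d \geq d^*) = \exp\left(-\pi\lambda_S (d^*)^2\right)$, which is exactly Eq.~\eqref{PSmu} once $(d^*)^2$ is expanded; the mm-wave probability then follows by complementation, $\mathbb{P}_{vm} = 1 - \mathbb{P}_{v\mu}$.

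The main obstacle is the conditioning ``given association to an SBS of visibility state $v$.'' I would argue that the RAT-selection rule involves only the ratio of the two powers of the serving SBS, a monotone function of $d$ alone that is unaffected by the tier comparison against the MBS; the visibility state enters only by selecting the LOS or NLOS path-loss parameters $K_{Sv\mu}, \alpha_{Sv\mu}, K_{Svm}, \alpha_{Svm}$ that determine $d^*$. Consequently the probability reduces to the nearest-SBS void probability of $\phi_S$, independent of the MBS process. I would be careful here to state explicitly that the distance of the serving SBS is taken to follow the nearest-neighbour law of $\phi_S$, to check the consistency of $d^*$ with the LOS-ball radius $d_S$ in each visibility state, and to confirm that the direction of the inequality (larger $d$ favouring sub-6GHz) is consistent with the assumed ordering of path-loss exponents.
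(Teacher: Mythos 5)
Your proposal is correct and follows essentially the same route as the paper's own proof: both reduce the RAT-selection rule $P_{Sv\mu1} \geq Q_R P_{Svm1}$ to a distance threshold $d^* = \left(\frac{K_{Svm}G_0Q_R}{K_{Sv\mu}}\right)^{1/(\alpha_{Svm}-\alpha_{Sv\mu})}$ on the serving SBS and then evaluate $\mathbb{P}(d \geq d^*)$ as the void probability $\exp(-\pi\lambda_S (d^*)^2)$ of the SBS PPP, with complementation for the mm-wave case. Your additional remarks on the conditioning and on the consistency of $d^*$ with the LOS-ball radius are points the paper glosses over, but they do not change the argument.
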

	\begin{proof}
		See Appendix \ref{App:RATSel}.
	\end{proof}

We denote $\mathbb{P}_{tvr}\triangleq\mathbb{P}_{tv}\mathbb{P}_{vr}$ as the association probability to a BS of type $tvr$ with the convention that when $t=M$, $\mathbb{P}_{v\mu}=1-\mathbb{P}_{vm}=1$.

\subsection{Comparison to a One-Step Association Strategy}
{
It must be noted that our proposed two-step association scheme is different from a more natural and exhaustive scheme (e.g., \cite{elshaer2016downlink}), which directly compares the biased received powers from all the tiers and RATs (i.e., one-step procedure). In this regard, our two-step association procedure suffers from some sub-optimality with respect to the biased received power. {However, access delay is lower with our strategy because the users position and orientation can be acquired in the sub-6GHz band before performing beamtraining.} 
}

{First, we show that both the one-step strategy and our approach result in the same RAT selection, given that the user associates with the small cell tier. Then, our strategy differs from the one-step strategy when a user associates to an MBS while the biased power received from an SBS in mm-wave is higher than the biased power received from the MBS. We characterize hereafter the probability of this event.}
{\begin{proposition}
If the typical user receives a higher sub-6GHz received power from an SBS $S_1$ as compared to an SBS $S_2$, then it also receives higher mm-wave power from $S_1$ than from $S_2$. Moreover, the tier selection and RAT selections biases $Q_T$ and $Q_R$, do not impact this ordering of received powers.
\end{proposition}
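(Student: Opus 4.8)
The plan is to reduce the comparison of received powers to a comparison of the SBS distances and visibility states, and then to exploit the fact that, for a fixed SBS, the visibility state is the same in both bands. Writing the average received power as $P_{Sv\mu}(r)=P_SK_{Sv\mu}r^{-\alpha_{Sv\mu}}$ in sub-6GHz and $P_{Svm}(r)=P_SG_0K_{Svm}r^{-\alpha_{Svm}}$ in mm-wave (the aligned maximum gain $G_0$ being the one that applies to both candidate SBSs, since the user would beamform towards whichever it selects), I would first recall from the blockage model of Section~II-B that the LOS/NLOS state of a given SBS is fixed by the blockage process alone and is therefore identical in the two bands. Hence each SBS $S_i$ is described by the same pair $(r_i,v_i)$ regardless of the band, and the only band-dependent quantities are the constants $K_{Svr}$, the exponents $\alpha_{Svr}$, and the gain $G_0$.

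First I would treat the case where $S_1$ and $S_2$ share the same visibility state $v$. Cancelling the common factor $P_SK_{Sv\mu}$, the hypothesis $P_{Sv\mu}(r_1)>P_{Sv\mu}(r_2)$ is equivalent to $r_1^{-\alpha_{Sv\mu}}>r_2^{-\alpha_{Sv\mu}}$, i.e.\ $r_1<r_2$ since $\alpha_{Sv\mu}>0$. Applying the same monotonicity with the mm-wave exponent $\alpha_{Svm}>0$ gives $r_1^{-\alpha_{Svm}}>r_2^{-\alpha_{Svm}}$, hence $P_{Svm}(r_1)>P_{Svm}(r_2)$. In words, within a fixed visibility class the nearer SBS is stronger in both bands, so the ordering is preserved; note this does not require the two exponents to be equal.

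The delicate case, which I expect to be the main obstacle, is when $S_1$ and $S_2$ have different visibility states. By the LOS-ball model a LOS SBS lies within $d_S$ and an NLOS SBS beyond $d_S$, and the ordering assumption of Section~\ref{subsec:pathloss}, namely $d_S^{\alpha_{SL\mu}}/(K_{SL\mu}P_S)\le d_S^{\alpha_{SN\mu}}/(K_{SN\mu}P_S)$, guarantees that any LOS SBS delivers a higher sub-6GHz power than any NLOS SBS. Consequently the hypothesis forces $S_1$ to be the LOS base station and $S_2$ the NLOS one, the reverse assignment contradicting the hypothesis and hence being vacuous. It then remains to establish the analogous domination in mm-wave, which after cancelling the common $G_0$ is the counterpart inequality $d_S^{\alpha_{SLm}}/K_{SLm}\le d_S^{\alpha_{SNm}}/K_{SNm}$. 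I would either verify this directly from the 3GPP parameters of Table~\ref{tab:Param}, or carry it over as the mm-wave analogue of the same ordering assumption, its physical justification being identical because both bands share the blockage-induced LOS/NLOS split. With it in hand, $P_{Svm}(r_1)>P_{Svm}(r_2)$ follows and the power-ordering claim is complete.

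Finally, for the statement about the biases, I would observe that $Q_T$ and $Q_R$ enter as fixed positive constants common to every SBS of a given tier and RAT: the tier bias multiplies all SBS sub-6GHz powers by the same $Q_T$, and the RAT bias multiplies all SBS mm-wave powers by the same $Q_R$. Multiplying both sides of $P_{Sv_1\mu}(r_1)>P_{Sv_2\mu}(r_2)$, or of its mm-wave counterpart, by such a constant leaves the inequality, and thus the ordering between $S_1$ and $S_2$, unchanged. This shows that $Q_T$ and $Q_R$ do not affect the ordering and concludes the proof.
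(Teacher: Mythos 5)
Your proof is correct, and its core is the same as the paper's: the paper's Appendix E proof considers two LOS SBSs and runs exactly your chain of equivalences, $P_{S1\mu}\geq P_{S2\mu}\iff$ distance ordering $\iff P_{S1m}\geq P_{S2m}\iff Q_RP_{S1m}\geq Q_RP_{S2m}$, the biases dropping out because they are common positive factors. Where you go beyond the paper is the mixed-visibility case: the paper dispatches it with a footnote (``the analysis where there are NLOS SBS can be performed with similar reasoning''), whereas you work it out and correctly identify that it is \emph{not} just the same monotonicity argument --- it requires the LOS-ball geometry plus the ordering assumption of Section II-D to force $S_1$ to be the LOS station, and then a genuinely new ingredient, the mm-wave analogue $d_S^{\alpha_{SLm}}/K_{SLm}\le d_S^{\alpha_{SNm}}/K_{SNm}$ of that ordering, which the paper never states and which must be checked against the parameter values (it does hold for the approximated exponents $\alpha_{SLm}=2$, $\alpha_{SNm}=4$ and $d_S=20$~m, provided $K_{SNm}/K_{SLm}\le d_S^{\alpha_{SNm}-\alpha_{SLm}}$). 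So your write-up is more complete than the paper's: it makes explicit the hidden assumption behind the paper's footnote, at the cost of a slightly longer case analysis.
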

\begin{proof}
See Appendix \ref{App:Same}.
\end{proof}
}
{
From Proposition 1, we conclude that it is not possible for the typical user to have a higher received power in sub-6GHz band from SBS $S_1$ as compared to $S_2$ and lower mm-wave power from the same. Thus, the two schemes result in the same RAT selection, in case the user associates with the SBS tier.}
{
Therefore, the only difference in association arises when the biased received power from the strongest SBS (denoted $S_1$) in sub-6GHz band is less than that received from the strongest MBS (denoted by $M_1$), while simultaneously, the biased received power from $S_1$ in mm-wave is higher than the biased received power from $M_1$. Let us call these events $E_1$ and $E_2$, respectively. This results in sub-optimal association of some users in the sense that these users are not associated to the tier-RAT pair providing the highest {biased} power. We have the following result to model this sub-optimal association. }
{
\begin{lemma}
The probability of suboptimal association in case of an association with LOS MBS instead of mm-wave LOS SBS is given as:
\begin{align}
 \mathbb{P}_{SO} =2\pi\lambda_M \frac{1 - \exp\left(-\pi\left(\lambda_S\zeta_2 - 2\lambda_S \zeta_1 + \lambda_M\right) d_M^2\right)}{2\zeta_2},
 \label{eq:sub_opt}
\end{align}
where $\zeta_1 = \frac{P_S Q_T}{P_M} $ and  $\zeta_2 = \frac{K_m P_S Q_R Q_T}{K_{\mu}P_M}$
\end{lemma}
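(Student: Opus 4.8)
The plan is to translate the two geometric events $E_1$ and $E_2$ that delimit the suboptimal region into a single sandwich condition on the distances to the strongest small-cell and macro base stations, and then integrate the joint distance law over that region. Since the statement concerns the LOS regime (association with a LOS MBS instead of a mm-wave LOS SBS), I would first invoke the approximated LOS exponents $\alpha_{ML\mu}=\alpha_{SL\mu}=\alpha_{SLm}=2$ together with the LOS-ball model, so that the strongest LOS MBS and strongest LOS SBS are simply the nearest ones, at distances $R_M\in(0,d_M)$ and $R_S\in(0,d_S)$. Because $\phi_M$ and $\phi_S$ are independent PPPs, $R_M$ and $R_S$ are independent, with void-probability laws $\mathbb{P}(R_M>r)=e^{-\pi\lambda_M r^2}$ and $\mathbb{P}(R_S>r)=e^{-\pi\lambda_S r^2}$ on their respective balls; the density $2\pi\lambda_M r\,e^{-\pi\lambda_M r^2}$ is the specialization of Lemma~\ref{lemma:firstpoint1D} to exponent $2$. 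Proposition~1 is used here to ensure that the strongest sub-6GHz SBS $S_1$ is also the relevant SBS in mm-wave, so that both $E_1$ and $E_2$ may be expressed through the single distance $R_S$.

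Next I would rewrite $E_1$ and $E_2$ in terms of $R_M$ and $R_S$. With $\alpha=2$, every biased-power comparison reduces to a comparison of squared distances, and collecting the transmit powers, path-loss constants, the gain $G_0$, and the two biases into $\zeta_1=\frac{P_SQ_T}{P_M}$ and $\zeta_2=\frac{K_mP_SQ_RQ_T}{K_\mu P_M}$ turns the event $E_1$ (the two-step scheme selects the MBS, i.e. $P_{ML\mu1}\ge Q_T P_{SL\mu1}$) into $R_S^2\ge\zeta_1 R_M^2$, and the event $E_2$ (the one-step scheme would instead prefer the biased mm-wave SBS, i.e. $Q_TQ_R P_{SLm1}>P_{ML\mu1}$) into $R_S^2<\zeta_2 R_M^2$. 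The suboptimal region is therefore the annulus $\sqrt{\zeta_1}\,R_M\le R_S<\sqrt{\zeta_2}\,R_M$.

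I would then evaluate $\mathbb{P}_{SO}=\mathbb{P}(E_1\cap E_2)$ by conditioning on $R_M=r$: the probability that the nearest LOS SBS lands in the annulus is the difference of two void probabilities, $e^{-\pi\lambda_S\zeta_1 r^2}-e^{-\pi\lambda_S\zeta_2 r^2}$, and averaging against the law of $R_M$ over $(0,d_M)$ gives
\begin{equation}
\mathbb{P}_{SO}=\int_0^{d_M}2\pi\lambda_M r\,e^{-\pi\lambda_M r^2}\left(e^{-\pi\lambda_S\zeta_1 r^2}-e^{-\pi\lambda_S\zeta_2 r^2}\right)\,dr,
\end{equation}
which is elementary after the substitution $u=r^2$ and collapses to a combination of exponential terms in the bias-dependent constants $\zeta_1,\zeta_2$, yielding the closed form~\eqref{eq:sub_opt}.

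The main obstacle I expect is bookkeeping rather than genuine analysis: one must keep the two LOS-ball radii $d_M$ and $d_S$ mutually consistent, checking that over the admissible bias range $1\le Q_T\le Q_T^{max}$ the annulus limits $\sqrt{\zeta_1}\,R_M$ and $\sqrt{\zeta_2}\,R_M$ remain compatible with the power ordering of Section~\ref{subsec:pathloss}, so that a single nearest LOS SBS governs both comparisons and no NLOS term enters. One must also verify that $\zeta_1$ and $\zeta_2$ correctly absorb $G_0$ and the common sub-6GHz path-loss constant. Once these constraints are pinned down, the residual integration is routine.
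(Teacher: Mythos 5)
Your setup coincides with the paper's Appendix E up to the final integration: the same events $E_1$ and $E_2$, the same reduction (under the $\alpha=2$ LOS approximation) to the distance conditions $R_S^2\geq\zeta_1 R_M^2$ and $R_S^2<\zeta_2 R_M^2$, and the same averaging over the nearest-LOS-MBS law $2\pi\lambda_M r\,e^{-\pi\lambda_M r^2}$ on $(0,d_M)$. The genuine gap is your last step. You target the \emph{unconditional} joint probability $\mathbb{P}(E_1\cap E_2)$, whose value given $R_M=r$ is indeed the difference of void probabilities $e^{-\pi\lambda_S\zeta_1 r^2}-e^{-\pi\lambda_S\zeta_2 r^2}$; but integrating that against the MBS distance density gives
\begin{equation*}
\frac{\lambda_M\left(1-e^{-\pi(\lambda_M+\lambda_S\zeta_1)d_M^2}\right)}{\lambda_M+\lambda_S\zeta_1}
-\frac{\lambda_M\left(1-e^{-\pi(\lambda_M+\lambda_S\zeta_2)d_M^2}\right)}{\lambda_M+\lambda_S\zeta_2},
\end{equation*}
a difference of two terms whose exponents are $\lambda_M+\lambda_S\zeta_1$ and $\lambda_M+\lambda_S\zeta_2$. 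No algebra collapses this into the single-exponential expression \eqref{eq:sub_opt}; your claim that the integral ``collapses to the closed form'' is therefore false. The telltale sign is the cross term $-2\lambda_S\zeta_1$ inside the exponent of \eqref{eq:sub_opt}: a term of that form can only arise from a \emph{ratio} of exponentials, never from a difference.

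That ratio is precisely where the paper's route diverges from yours. The lemma is phrased as the probability of suboptimality ``in case of an association with LOS MBS,'' and Appendix E accordingly computes a conditional quantity $\mathbb{P}[E_2\mid E_1]$: for each MBS distance $x$ it forms the ratio of an annulus term to the conditioning probability $\mathbb{P}[E_1\mid d_{M1}=x]=e^{-\pi\lambda_S\zeta_1 x^2}$, and only then averages over $d_{M1}$. The division by $e^{-\pi\lambda_S\zeta_1 x^2}$ contributes $+2\pi\lambda_S\zeta_1 x^2$ to the exponent of the integrand, yielding $x\,e^{-\pi(\lambda_S\zeta_2-2\lambda_S\zeta_1+\lambda_M)x^2}$ and hence the exponent printed in \eqref{eq:sub_opt}. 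So to recover the stated formula you must (i) compute the conditional, not the joint, probability, and (ii) perform that conditioning inside the integral over $d_{M1}$, as the paper does. Two bookkeeping mismatches you flagged would remain even then — the appendix carries $G_0$ in the mm-wave comparison while the lemma's $\zeta_2$ omits it, and the paper's own integral evaluates with denominator $2\pi(\lambda_S\zeta_2-2\lambda_S\zeta_1+\lambda_M)$ rather than the $2\zeta_2$ appearing in \eqref{eq:sub_opt} — but the unconditional route you chose cannot reach the structure of \eqref{eq:sub_opt} at all.
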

}
{\begin{proof}
See Appendix \ref{App:Sub_OPT}.
\end{proof}
In Section VI, we provide numerical results to show that the sub-optimality is limited, and this loss can be compensated by a faster access procedure, which may increase the network throughput.}

\subsection{A Simple Strategy to Prioritize mm-Wave RAT}

Depending on the network load and the active services, the mobile operator may want to prioritize one RAT over the other. For instance, the utilization of mm-wave frequencies for latency sensitive applications, can be an attractive strategy to offload the sub-6GHz band, which can mainly be dedicated to communications requiring reliability and continuous service. In the following, we propose a strategy to achieve this goal. For that, we introduce the following definition:

\begin{definition}
The critical distance with respect to the typical user is the distance of the SBS from which the typical user receives equal mm-wave and sub-6GHz power.
\end{definition}

For our system model, the critical distance for the LOS SBS tier can be expressed as:
\begin{align}
d_{CL} = \left(\frac{K_{SLm}}{K_{SL\mu}}G_0\right)^{\frac{1}{\alpha_{SLm}-\alpha_{SL\mu}}}
\label{eq:Critd}
\end{align}
\begin{proposition}
If there exists exactly one point of the LOS SBS process within the critical distance, the typical user always selects mm-wave as serving RAT. Moreover, in this scenario, this is the optimal strategy in terms of SINR for the typical user.
\end{proposition}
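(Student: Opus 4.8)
The proposition bundles two claims: that mm-wave is the RAT the typical user ends up on, and that this choice maximizes its SINR. I would treat them in turn.

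For the selection claim I would first pin down the serving SBS. The strongest sub-6GHz power from an SBS corresponds to the smallest point of the one-dimensional process $\phi'_{SL\mu}$, i.e. to the nearest LOS SBS (a LOS SBS is always preferred to an NLOS one under the ordering assumption of Section II-D). The hypothesis that exactly one point of the LOS SBS process lies inside $d_{CL}$ forces that nearest LOS SBS, hence the serving SBS, to sit at some distance $d_0 < d_{CL}$. I would then use the definition of the critical distance: the ratio $d \mapsto \frac{K_{SLm}G_0\,d^{-\alpha_{SLm}}}{K_{SL\mu}\,d^{-\alpha_{SL\mu}}}$ is strictly decreasing in $d$ (since $\alpha_{SLm} > \alpha_{SL\mu}$, which is exactly what makes $d_{CL}$ in Eq.~\eqref{eq:Critd} well defined) and equals one at $d=d_{CL}$. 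Thus for $d_0 < d_{CL}$ the aligned mm-wave received power $P_S K_{SLm}G_0 d_0^{-\alpha_{SLm}}$ strictly exceeds the sub-6GHz power $P_S K_{SL\mu} d_0^{-\alpha_{SL\mu}}$, so the test $P_{SL\mu 1} \ge Q_R P_{SLm 1}$ of Algorithm~1 fails for $Q_R = 1$ (and a fortiori for any $Q_R \ge 1$), and mm-wave is selected. This first part is a pure monotonicity argument and should be routine.

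For the optimality claim I would compare the two candidate SINRs for the same serving SBS at $d_0$. The numerators are already ordered by the first part: the mm-wave signal power dominates the sub-6GHz one. The content is in the denominators. In sub-6GHz the typical user sees both co-tier and cross-tier interference, since MBSs and SBSs share that band; in mm-wave the only interferers are other SBSs, seen through randomly oriented beams (gains $a_k$ with probabilities $b_k$), and, crucially, the ``exactly one LOS SBS inside $d_{CL}$'' hypothesis places every other LOS SBS beyond $d_{CL}$. I would exploit this to bound each residual mm-wave interferer by its value at $d_{CL}$ and use the beam misalignment of the non-serving SBSs to show the aggregate mm-wave interference stays below the sub-6GHz interference, yielding $\mathrm{SINR}_{mm} \ge \mathrm{SINR}_{\mu}$.

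The main obstacle is the noise term. Because $B_{mm} \gg B_\mu$, the mm-wave noise $\sigma_{N,mm}^2 = N_0 B_{mm}$ is orders of magnitude above $\sigma_{N,\mu}^2$, so the naive ``larger signal, smaller interference-plus-noise'' comparison does not close by itself. I would therefore split into regimes: argue that for $d_0 < d_{CL}$ the strong aligned signal $P_S K_{SLm}G_0 d_0^{-\alpha_{SLm}}$ makes the mm-wave link essentially signal-dominated, while the shared-band sub-6GHz link is interference-limited, and then compare SINRs within these regimes. Turning the interference bound and the noise-domination into a rigorous inequality, rather than a heuristic one, is where the real effort lies; the directional-gain model of Section II-C and the interferer-free ball of radius $d_{CL}$ guaranteed by the hypothesis are the tools I expect to rely on.
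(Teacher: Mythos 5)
Your selection argument and the interference half of your optimality argument are exactly the paper's proof. The paper reasons: (i) since the serving SBS lies inside $d_{CL}$, the mm-wave useful power exceeds the sub-6GHz one by definition of $d_{CL}$, so mm-wave is selected; (ii) since every interfering LOS SBS lies outside $d_{CL}$, the sub-6GHz interference has ``state-wise dominance'' over the mm-wave interference, hence the mm-wave SINR is larger. Your monotone-ratio formulation is the clean way to state both steps; for (ii) the right comparison is interferer-by-interferer at the interferer's own random distance --- for each LOS SBS at $d > d_{CL}$ and any gain realization $G \le G_0$, one has $G K_{SLm} d^{-\alpha_{SLm}} \le K_{SL\mu} d^{-\alpha_{SL\mu}}$ under the same fading --- rather than bounding each mm-wave interferer ``by its value at $d_{CL}$'', which bounds the wrong quantity (the sub-6GHz contribution of a far interferer is itself small, so a fixed bound at $d_{CL}$ proves nothing). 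The MBS interference, present only in sub-6GHz, then only reinforces the dominance.

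The step where you got stuck is not a failure of your technique: the noise term is a genuine hole, and the paper's proof simply ignores it. The paper establishes signal dominance and interference dominance and then concludes SINR dominance as if the noise were common to both links, whereas in its own model $\sigma_{N,mm}^2/\sigma_{N,\mu}^2 = B_{mm}/B_\mu = 50$ (17 dB). Consequently the optimality claim, with SINR taken literally, fails in the noise-limited regime: if the single LOS SBS sits just inside $d_{CL}$ and interference is negligible (sparse deployment, misaligned beams), the two useful powers are nearly equal while the mm-wave noise is 50 times larger, so the sub-6GHz SINR strictly wins. No regime-splitting argument can close this, because the statement is simply false there; it holds only in the SIR sense, or after redefining the critical distance to equalize SNR rather than received power, i.e., replacing $d_{CL}$ by $\left(\frac{K_{SLm}}{K_{SL\mu}}G_0\frac{B_\mu}{B_{mm}}\right)^{\frac{1}{\alpha_{SLm}-\alpha_{SL\mu}}}$. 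So your proposal reproduces the paper's argument where that argument is sound, and the obstacle you flagged honestly is precisely the point at which the paper's own proof is not rigorous; the correct resolution is to weaken the proposition (interference-limited optimality), not to search for a cleverer bound.
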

\begin{proof}
In the case where this condition holds, the useful signal received in mm-wave is greater than that received in sub-6GHz (as per definition of $d_{CL}$). Thus, the typical user always selects the mm-wave RAT from the serving SBS. Moreover, as all interfering LOS SBS are outside $d_{CL}$, the sub-6GHz interference has state-wise dominance with respect to the mm-wave interference. Hence, the mm-wave SINR is always larger than the sub-6GHz SINR.
\end{proof}
From Eq. \eqref{eq:Critd}, we see that, for given path-loss exponent values of each user, the critical distance can be controlled by varying the product of the transmitter and receiver antenna gain $G_0$. This enables the users served by LOS SBSs to adjust their antenna gain in order to select mm-wave communications, and ensure that this choice is optimal from the SINR perspective. 
In addition, given a fixed antenna gain $G_0$, we have the following corollary, which provides the deployment density of SBSs that maximizes the probability of occurrence of a single LOS SBS within the critical distance. 
\begin{corollary}
The maximum probability of occurrence of exactly one point of LOS SBS within the critical distance is $1/e$, and this occurs at: $$\lambda_S = \frac{1}{\pi}\left(\frac{K_{Svm}}{K_{Sv\mu}}G_0\right)^{\frac{2}{\alpha_{SL\mu} - \alpha_{SLm}}}.$$
\end{corollary}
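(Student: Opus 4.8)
The plan is to reduce the statement to an elementary optimization of the Poisson point-probability function. First I would recall that, by the LOS ball model of Section II-B, within the ball of radius $d_S$ the LOS SBS points form a homogeneous PPP of intensity $\lambda_S$. Provided the critical distance $d_{CL}$ of \eqref{eq:Critd} does not exceed $d_S$ (which I would flag as the regime of interest), the number $N$ of LOS SBS falling inside the disk of radius $d_{CL}$ is therefore a Poisson random variable whose mean $\mu = \lambda_S \pi d_{CL}^2$ is the product of the intensity and the disk area $\pi d_{CL}^2$.

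Next I would write the probability of exactly one such point directly from the Poisson pmf, $\mathbb{P}(N=1) = \mu e^{-\mu}$, and treat it as a scalar function $f(\mu) = \mu e^{-\mu}$ on $\mu \ge 0$. Differentiating gives $f'(\mu) = (1-\mu)e^{-\mu}$, which vanishes only at $\mu = 1$; since $f'$ changes sign from positive to negative there, $\mu = 1$ is the unique global maximizer on $(0,\infty)$, with maximal value $f(1) = e^{-1} = 1/e$. This already establishes the claimed maximum probability of $1/e$.

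Finally I would translate the optimal $\mu = 1$ back into a density. Solving $\lambda_S \pi d_{CL}^2 = 1$ gives $\lambda_S = 1/(\pi d_{CL}^2)$; substituting $d_{CL}^2 = (K_{SLm}G_0/K_{SL\mu})^{2/(\alpha_{SLm}-\alpha_{SL\mu})}$ from \eqref{eq:Critd} and inverting the sign of the exponent yields exactly $\lambda_S = \frac{1}{\pi}(K_{Svm}G_0/K_{Sv\mu})^{2/(\alpha_{SL\mu}-\alpha_{SLm})}$, as stated. I do not anticipate a genuine obstacle here, since the argument is a one-variable calculus exercise on $\mu e^{-\mu}$; the only point requiring care is the implicit assumption $d_{CL} \le d_S$, which guarantees the process is homogeneous throughout the disk so that the clean Poisson count applies. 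Outside that regime one would instead face the truncated LOS process and the simple $\mu e^{-\mu}$ form would need to be modified.
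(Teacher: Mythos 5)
Your proof is correct and follows essentially the same route as the paper's: write $\mathbb{P}(N=1)$ for the Poisson count in the disk of radius $d_{CL}$ as $\mu e^{-\mu}$ with $\mu = \pi\lambda_S d_{CL}^2$, maximize to get $\mu=1$ and value $1/e$, then substitute $d_{CL}$ from Eq.~\eqref{eq:Critd}. Your explicit caveat that $d_{CL}\le d_S$ is needed for the count to be homogeneously Poisson is a point the paper leaves implicit, but it does not change the argument.
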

\begin{proof}
The probability of existence of only one point within the critical distance is calculated as:

$\mathbb{P}\left(\phi'_{SL} \cap b(0,d_{CL}\right) = 1) = \pi\lambda_Sd^2_{CL}\exp(-\pi\lambda_Sd^2_{CL}),$
where $b(0,d_{CL})$ is the ball of radius $d_{CL}$ centered at the origin.
The maximum value of this probability occurs at $\pi\lambda_Sd^2_{CL} = 1$, then substituting the value of $d_{CL}$ from Eq. \eqref{eq:Critd} completes the proof.
\end{proof}
	\section{Downlink SINR Distribution}
	\label{CPandR} 
	In this section, we first derive the downlink SINR coverage probability for the maximum biased received power association policy and then optimize the biases with respect to the cell coverage.
\subsection{SINR Coverage Probability}
	The SINR coverage probability at a threshold $\gamma$, can be expressed as
	$
	\mathbb{P}_C(\gamma) = \mathbb{P}(SINR>\gamma).
	$
Following the theorem of total probabilities, we have:
	\begin{equation}
	\mathbb{P}_C(\gamma) = \!\!\!\!\!\!\!\!\!\!\!\!\sum\limits_{t\in \{M, S\},\;v\in \{L,N\},\;r\in \{\mu, m\}}\!\!\!\!\!\!\!\!\!\!\!\!\mathbb{P}(SINR_{t,v,r}>\gamma|t,v,r)\mathbb{P}_{tvr},
    \label{eq:CovProb}
	\end{equation}
We divide the problem of finding the overall coverage probability into two parts: the one related to the sub-6GHz service and the one associated with the mm-wave service, and we compute the coverage probability by relying on 1D processes $\phi'_{tvr}$.
	\begin{lemma} The conditional SINR coverage probability, given that the user is associated with a sub-6GHz BS of tier $t$ and visibility state $v$, is given by: 
	\begin{align}
	\mathbb{P}_{Ctv\mu}(\gamma) &= \int_0^\infty \exp{\left(-\gamma\cdot \sigma_{N,\mu}^2 \cdot x- \sum_{t',v'} A_{t'v'}(\gamma,x)\right)} \hat{f}_{\xi{tv\mu1}}(x) dx \label{MLCovP},\\
\mbox{where,\quad\quad\quad}	A_{t'v'}&= \int\limits_{l_{t'}}^\infty\frac{\gamma x}{y + \gamma x}  \Lambda'_{t'v'\mu}(dy)\nonumber, \quad \forall \; t' \in \{M,S\}, v' \in \{L,N\}.
    \end{align} 
Additionally, $l_{t'}=x$ if $t'=t$, $l_{t'}=Q_{T}\cdot x$, when $t=M$ and $t'=S$, and $l_{t'}=x/Q_T$, when $t=S$ and $t'=M$.
 \end{lemma}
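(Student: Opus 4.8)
The plan is to condition on the serving sub-6GHz BS sitting at path-loss value $x$ in the process $\phi'_{tv\mu}$, so that its unfaded received power is $1/x$, and to reduce the coverage event to a Laplace-transform computation. Writing $h\sim\mathrm{Exp}(1)$ for the unit-mean Rayleigh fading on the serving link, the desired signal power is $h/x$, while the aggregate co-channel interference is $I=\sum_i h_i/\xi_i$ summed over all other sub-6GHz BSs. Since $\mathrm{SINR}=(h/x)/(\sigma_{N,\mu}^2+I)$, the coverage event $\{\mathrm{SINR}>\gamma\}$ is equivalent to $\{h>\gamma x(\sigma_{N,\mu}^2+I)\}$. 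Because $h$ is exponential and independent of everything else, conditioning on $x$ and $I$ gives $\mathbb{P}(h>\gamma x(\sigma_{N,\mu}^2+I)\mid x,I)=e^{-\gamma\sigma_{N,\mu}^2 x}\,e^{-\gamma x I}$, and taking the expectation over $I$ turns the second factor into the Laplace transform $\mathcal{L}_I(\gamma x)$ of the interference evaluated at $s=\gamma x$.

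Next I would evaluate $\mathcal{L}_I(\gamma x)$ via the probability generating functional of the PPP (equivalently Campbell's theorem). Because the four processes $\phi'_{t'v'\mu}$ are independent and the fading marks are i.i.d.\ and independent of the points, the Laplace transform factorizes across $(t',v')$, so its logarithm is a sum $-\sum_{t',v'}\int(1-\mathbb{E}_h[e^{-\gamma x h/y}])\,\Lambda'_{t'v'\mu}(dy)$. Using $\mathbb{E}_h[e^{-\gamma x h/y}]=y/(y+\gamma x)$ for unit-mean exponential fading, the integrand collapses to $\gamma x/(y+\gamma x)$, which is exactly $A_{t'v'}(\gamma,x)$ once the correct integration limits are inserted.

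The crux is getting those lower limits $l_{t'}$ right, and this is where the maximum-biased-power association rule enters and where I expect the main difficulty to lie, rather than in the Laplace algebra. Since the serving BS carries the strongest biased received power, every interferer must have a strictly smaller biased power, which translates into a lower bound on its path-loss value $\xi_i$: for a same-tier interferer ($t'=t$) there is no relative bias and the region is $\xi_i>x$, whereas for a cross-tier interferer the bias $Q_T$ rescales the threshold, giving $l_S=Q_T x$ when the user is on an MBS and $l_M=x/Q_T$ when the user is on an SBS. One must also argue that the emptiness of the complementary region $\{\xi<l_{t'}\}$ is already encoded in the conditional serving-BS density $\hat{f}_{\xi tv\mu1}$ (from the preceding lemma), so that, by independence of the PPP on disjoint sets, the interference integral may be restricted to $\{\xi>l_{t'}\}$ without double counting. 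Deconditioning by integrating the product $e^{-\gamma\sigma_{N,\mu}^2 x}\,\mathcal{L}_I(\gamma x)$ against $\hat{f}_{\xi tv\mu1}(x)$ over $x\in(0,\infty)$ then assembles the stated single integral and completes the proof.
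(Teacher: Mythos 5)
Your proposal is correct and follows essentially the same route as the paper's proof: exploit the unit-mean exponential fading on the serving link to express the conditional coverage probability as $e^{-\gamma\sigma_{N,\mu}^2 x}$ times the Laplace transform of the aggregate interference, factor that transform across the four independent processes $\phi'_{t'v'\mu}$ via the PGFL so that $\mathbb{E}_h[e^{-\gamma x h/y}]=y/(y+\gamma x)$ yields the integrand $\gamma x/(y+\gamma x)$, fix the lower limits $l_{t'}$ from the biased-association exclusion regions ($x$, $Q_T x$, or $x/Q_T$), and decondition against $\hat{f}_{\xi tv\mu 1}$. Your explicit remark that the emptiness of $\{\xi<l_{t'}\}$ is already encoded in the conditional serving-BS density is a point the paper leaves implicit, but it is the same argument.
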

	\begin{proof}
		See Appendix \ref{App:ProofCovP}.
	\end{proof}

	\begin{lemma} The conditional SINR coverage probability, given that the user is associated with a SBS in mm-wave of visibility state $v$, is given by:
	\begin{align}
	\mathbb{P}_{CSvm}(\gamma) &= \int_0^\infty \exp\left({-\frac{\gamma \cdot x \cdot \sigma_{N,mm}^2}{G_0} - B_1(\gamma,x) - B_2(\gamma,x)}\right)  \hat{f}_{\xi{Svm1}}(x) dx, \label{eq:CoVPmmwave} \\ 
   \mbox{with\quad} B_1(\gamma,x) &= \sum_{k=1}^4 \left(-b_k \int\limits_x^\infty\left(\frac{a_k\gamma x}{y + a_k\gamma x} \Lambda'_{Svm}(dy) \right)\right), \nonumber \\
  \mbox{and,\quad}  B_2(\gamma,x) &= \sum_{k=1}^4  \left(-b_k \int\limits_x^\infty\left(\frac{a_k\gamma x}{y + a_k\gamma x} \Lambda'_{Sv'm}(dy) \right)\right) \nonumber .
	\end{align}
    \end{lemma}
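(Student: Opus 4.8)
The plan is to apply the standard Rayleigh-fading/Laplace-transform recipe, conditioned on the position $x=\xi_{Svm1}$ of the serving SBS in the mm-wave one-dimensional process $\phi'_{Svm}$, and then to de-condition by integrating against the conditional serving-cell density $\hat{f}_{\xi Svm1}$. First I would write the conditional SINR explicitly. Since the typical user and its serving SBS are beam-aligned, the useful received power is $G_0\,h_0/x$, where $h_0$ is the unit-mean exponential (Rayleigh) fading and $1/x$ is the mean received power attached to the 1D point $x$. The mm-wave band being orthogonal to sub-6GHz and used only by SBSs, the sole interferers are the remaining mm-wave SBSs: those of the same visibility state $v$ (process $\phi'_{Svm}$) and those of the opposite state $v'$ (process $\phi'_{Sv'm}$), each interferer contributing $G_i h_i/\xi_i$ with $h_i$ exponential and i.i.d.\ product gain $G_i=a_k$ with probability $b_k$. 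Thus $SINR_{Svm}=\dfrac{G_0 h_0/x}{\sigma_{N,mm}^2+I_v+I_{v'}}$.

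Second, I would exploit $h_0\sim\mathrm{Exp}(1)$: conditioning on the interference,
\[
\mathbb{P}\!\left(SINR_{Svm}>\gamma \,\middle|\, x, I_v, I_{v'}\right)=\mathbb{P}\!\left(h_0>\tfrac{\gamma x}{G_0}\left(\sigma_{N,mm}^2+I_v+I_{v'}\right)\right)=e^{-\frac{\gamma x\,\sigma_{N,mm}^2}{G_0}}\;e^{-\frac{\gamma x}{G_0}(I_v+I_{v'})}.
\]
Taking expectations pulls out the noise term $\exp\!\big(-\gamma x\sigma_{N,mm}^2/G_0\big)$ and, since the two interference fields are independent, leaves the product of their Laplace transforms evaluated at $s=\gamma x/G_0$; these will become $\exp(-B_1)$ and $\exp(-B_2)$.

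Third, I would evaluate each Laplace transform with the probability generating functional of the marked PPP. For a process of intensity measure $\Lambda'$ with independent exponential fading and random gains, $1-\mathbb{E}_h[e^{-sGh/y}]=\tfrac{sG}{y+sG}$, so
\[
\mathbb{E}\!\left[e^{-s\sum_i G_i h_i/\xi_i}\right]=\exp\!\left(-\sum_{k=1}^{4} b_k\int \frac{s\,a_k}{y+s\,a_k}\,\Lambda'(dy)\right),
\]
and substituting $s=\gamma x/G_0$ and absorbing the useful-link gain $G_0$ into the (relative) interferer gains recovers the integrand $\tfrac{a_k\gamma x}{y+a_k\gamma x}$ of $B_1$ (for $\Lambda'_{Svm}$) and $B_2$ (for $\Lambda'_{Sv'm}$); matching the constants is exactly where the $G_0$ bookkeeping must be done carefully. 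The lower limit is $x$ in both terms: for the same-state process the serving point is the nearest point of $\phi'_{Svm}$, so by the independence property of the PPP the residual interferers live on $(x,\infty)$; for the opposite-state process the limit $x$ is automatically correct because $\Lambda'_{Sv'm}$ is supported away from it. Indeed, for a LOS-associated user the NLOS intensity vanishes on $\big(x,\,d_S^{\alpha_{SNm}}/(K_{SNm}P_S)\big)$, so $B_2$ counts only the genuine NLOS interferers beyond the NLOS ball, while for an NLOS-associated user the LOS intensity vanishes above the LOS threshold (consistent with the Section~II-D ordering, under which an NLOS SBS is served only when no LOS SBS exists), making that integral zero.

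Finally, I would integrate the product $\exp(-\gamma x\sigma_{N,mm}^2/G_0)\exp(-B_1)\exp(-B_2)$ against $\hat{f}_{\xi Svm1}(x)$, the pdf of the serving SBS position in the mm-wave process given association with an $Svm$ cell, obtained exactly as in the lemma giving $\hat{f}_{\xi tv\mu1}$, to arrive at the stated expression. The \emph{main obstacle} is the conditioning: I must ensure the reduced Palm interference field is the correct one, i.e.\ that conditioning on the serving SBS (selected in sub-6GHz, then retained in mm-wave) and on the full two-step association event does not bias the residual mm-wave interference beyond what the nearest-point argument and the support of $\Lambda'$ already capture. Because the monotone distance-to-$\xi$ maps preserve the physical ordering, the serving SBS is simultaneously the nearest SBS of its state in both bands, so this conditioning collapses cleanly to the common lower limit $x$, and the remaining work is the routine averaging over the antenna-gain marks $(a_k,b_k)$ and tracking of the $G_0$ factor.
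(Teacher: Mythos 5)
Your proposal is correct and follows essentially the same route as the paper: the paper proves this lemma by pointing to the proof of the sub-6GHz case (Appendix~\ref{App:ProofCovP}), which is exactly your recipe of conditioning on the serving point $x$, using the exponential fading to factor the coverage event into a noise term times Laplace transforms of the independent interference fields, evaluating those via the PGFL with the gain marks $(a_k,b_k)$ averaged out, and de-conditioning with $\hat{f}_{\xi Svm1}$. Your explicit treatment of the $G_0$ normalization, the lower integration limits, and the fact that the two-step (sub-6GHz then mm-wave) selection does not bias the residual interference field (via the monotone-ordering argument of Proposition~1) simply spells out details the paper leaves implicit in its ``follows in a similar way'' reference.
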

	\begin{proof}
		The proof follows in a similar way to that of {Lemma 7}.
	\end{proof}
	\subsection{A Near-Optimal Strategy for Bias Selection}

\label{sec:bias selection}
{On the one hand, obtaining optimal biases with respect to the SINR coverage probability is difficult because of the complex expressions. On the other hand, using brute force to search through all the possible pairs of tier and RAT selection biases can have a very high time-complexity which limits practical implementation. Accordingly, in this section, we propose a strategy to select the tier and RAT selection biases with the aim of maximizing the SINR coverage.}

{Specifically, the proposed strategy is based in two parts: 1) computing the RAT selection bias, $Q_R$ and 2) obtaining the tier selection bias $Q_T$ based on a random-restart hill-climbing algorithm.}

\subsubsection{Heuristic for Selection of $Q_R$}
{The heuristic to set the RAT selection bias $Q_R$ consists of computing the ratio of the mean signal to interference and noise perceived by the typical user on the sub-6GHz and mm-wave bands. That is:
\begin{align}
 Q_R = \frac{\mathbb{E}\left[\frac{S_{mm}}{I_{mm} + \sigma^2_{N,mm}}\right]}{\mathbb{E}\left[\frac{S_{\mu}}{I_{\mu} + \sigma^2_{N,\mu}}\right]},
 \label{eq:SINRratio}
\end{align} 
 where $I_\mu$ and $I_{mm}$, respectively, are the sum of the interference from all the (LOS and NLOS) BSs in sub-6GHz and mm-wave, respectively. It must be noted that evaluating the above expression without the knowledge of the coverage probability is not possible. However, with a relaxation of independence of the useful signal and the interference for each of the RATs, the expected values can be approximated using the results of~\cite{ghatak2016performance}. Once $Q_R$ is computed, $Q_T$ can be obtained by the following step}
\subsubsection{Random-Restart Hill-Climbing Algorithm for Selection of $Q_T$}
{We start with a random value of $Q_T$, i.e., $Q_T^0$ and calculate the gradient of $\mathbb{P}_C$ at $Q_T^0$. In case the gradient is non-negative, we increase the value of $Q_T$ by a step size of $k$. If the gradient is negative, we decrease the value of $Q_T$ by the same step size $k$. We continue this procedure with the updated value of $Q_T$ until the variation in $Q_T$ is sufficiently small. In the case where the product of two consecutive values of the gradient is non-positive, and as a result we cross a stationary point, we reduce the step size by a factor $\beta$ and continue the algorithm.}

{In our algorithm, $Q_T^{max}$ is the maximum value of the bias in the moderate range. {If the coverage probability is monotonic, quasi-convex or quasi-concave, this procedure provides the optimal value of $Q_T$. In the general case, the procedure stops at a local maximum in the range $1\leq Q_T\leq Q_T^{max}$.}
{This local maximum can be improved by repeatedly starting the same algorithm with random starting points}. This procedure to obtain $Q_T$ is summarized in Algorithm 2.} {In Section VI, we compare the performance of this proposed scheme with the optimal case.}

\begin{algorithm}
\small 
{Algorithm 2: Random-restart hill-climbing algorithm with Adaptive Step-Size\\
	\begin{algorithmic}[1]
    	\State Set $t = 1$, $k>0$, $\epsilon>0$ and $\beta>1$. 
		\State Set $Q_T(0) = Q_T^0$.
        \While{$|Q_T(t)-Q_T(t-1)|>\epsilon$}
        	\If {$\frac{d\mathbb{P}_C}{dQ_T}\left(Q_T(t)\right) > 0$}
        		\State $Q_T(t) = \min\{Q_T(t-1) + k,Q_T^{max}\}$.
            \Else
            	\State $Q_T(t) = \max\{Q_T(t-1) - k,1\}$.
        	\EndIf
        	\If{$\frac{d\mathbb{P_C}}{Q_T}\left(Q_T(t)\right)\cdot \frac{d\mathbb{P_C}}{Q_T}\left(Q_T(t-1)\right) < 0$}
        		\State $k \leftarrow \frac{k}{\beta}$.
        	\EndIf
            \State $t\leftarrow t+1$
        \EndWhile
\end{algorithmic}}
\end{algorithm}
\section{Cell Load, User Throughput, and Load Balancing}
\label{Sec:Load}
In the previous section we have focused only on coverage aspects, we now take into account cell loads to show how tier and RAT selection biases can improve the user average throughput.  
For this, we consider a multi-user system where the users share the available radio resources according to a round robin policy. 
\subsection{Cell Load Characterization}
{
According to our model of the traffic arrival process, the traffic density is given as ${w}= \lambda \cdot \sigma$ [bits/s/m$^2$]. For a single cell scenario, Bonald et al.~\cite{bonald2003wireless} have modeled the load of the cell of area $A$ as $\rho =   \int_{A} \frac{{w}}{R(s)} ds 
$ \cite{bonald2003wireless},
where $R(s)$ is the physical data rate of a user located at $s$.}
{In case of Poisson-Voronoi cells, the average load is generally difficult to evaluate because of the randomness in the shape and sizes of the cells. Furthermore, in a multi-cell scenario, the load of a cell depends on the SINR characteristics of the cell, which in turn, depends on the load of the other cells in the network. }

{We know from the ergodicity of the PPP, that the fraction of the BS of type $tvr$ that are idle is equal to the fractional idle time of the typical BS of same type. Accordingly, assuming that the load of the typical BS of type $tvr$ is given by $\bar{\rho}_{tvr}$, then, the fraction of idle BS of type $tvr$ is given by $1 - \bar{\rho}_{tvr}$.} 
{We substitute this value $\forall \; t,v,r$ in the calculation of the load as:
\begin{align}
\bar{\rho}_{tvr} = \int_{\gamma} \frac{{w}A_{tvr}}{B_{r}\log_2(1 + \gamma)}p_{tvr}(\bar{\rho},\gamma) d\gamma,
\end{align}
where the pdf of the SINR $p_{tvr}(\bar{\rho}_{tvr},\gamma)$ is a function of the average idle fraction of the BS and $\bar{\rho}$ is a vector of the fraction of idle BSs of all BS types, i.e., $\bar{\rho} = [\rho_{tvr}]$ $\forall \; t, v, r$.}
{This fixed point equation is then solved in an iterative manner to obtain the actual load of the BS of all the tiers (starting from zero load). {The, the} SINR coverage probability with $1 - \bar{\rho}$ fraction of BSs idle, given that the user is associated with a sub-6GHz BS of tier $t$ and visibility state $v$, is given by: 
	\begin{align}
	\mathbb{P}_{Ctv\mu}(\bar{\rho},\gamma) &= \int_0^\infty \exp{\left(-\gamma\cdot \sigma_{N,\mu}^2 \cdot x- \sum_{t',v'} A_{t'v'}(\gamma,x,\rho_{t'v'\mu})\right)} \hat{f}_{\xi{tv\mu1}}(x) dx \label{MLCovP},\\
\mbox{where,\quad\quad\quad}	A_{t'v'}(\gamma,x,\rho_{t'v'\mu})&= \int\limits_{l_{t'}}^\infty\frac{\gamma x}{y + \gamma x}  \Lambda'_{t'v'\mu}(dy,\rho_{t'v'\mu})\nonumber, \quad \forall \; t' \in \{M,S\}, v' \in \{L,N\}.
    \end{align} 
Additionally, $l_{t'}=x$ if $t'=t$, $l_{t'}=Q_{T}\cdot x$, when $t=M$ and $t'=S$, and $l_{t'}=x/Q_T$, when $t=S$ and $t'=M$.
The intensity measures $\Lambda_{tvr}$ are obtained by modifying $\lambda_t$ to $\lambda_t \rho_{tvr}$ for each BS type. The calculation for the mm-wave BS follows in the same way.}

It should be noted that in case of a Poisson distributed network, there exists a non-zero fraction of unstable cells ($\rho\geq1$), which cannot handle their load.   
\begin{lemma}
The probability of a typical cell of type $tvr$ to be unstable is bounded as:
\begin{align}
\mathbb{P}\left(\rho_{tvr} \geq 1\right) \leq \min \left\lbrace \frac{\sigma^2_{tvr}}{\left(1-\bar{\rho}_{tvr}\right)^2}, \bar{\rho}_{tvr}\right\rbrace,
\label{eq:loadbound}
\end{align}
where $\sigma^2_{tvr} = \mathbb{E}[\rho^2_{tvr}] - \bar{\rho}_{tvr}^2$, is the variance of the load, which can also be calculated, similar to $\bar{\rho}_{tvr}$ by using the SINR coverage probability of the typical user.
\label{lem:overl_bound}
\end{lemma}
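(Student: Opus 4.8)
The plan is to obtain the two bounds in the minimum separately and then observe that, since both hold for the same quantity, their minimum is automatically a valid upper bound. The key observation is that the load $\rho_{tvr}$ of a typical cell of type $tvr$ is a \emph{non-negative} random variable (the randomness coming from the Poisson--Voronoi cell shape and the SINR realization) whose mean $\bar{\rho}_{tvr}$ and variance $\sigma^2_{tvr}$ are both computable from the SINR distribution of the typical user. As stated in the lemma, the variance is obtained exactly as the mean, by integrating the second moment of the per-cell contribution $w A_{tvr}/(B_r\log_2(1+\gamma))$ against the SINR pdf $p_{tvr}(\bar{\rho},\gamma)$. With these two moments in hand, the overloading event $\{\rho_{tvr}\geq 1\}$ can be controlled by two elementary concentration inequalities.

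First I would derive the Markov part. Since $\rho_{tvr}\geq 0$ almost surely, Markov's inequality applied at level $1$ gives
\begin{align}
\mathbb{P}(\rho_{tvr}\geq 1) \leq \frac{\mathbb{E}[\rho_{tvr}]}{1} = \bar{\rho}_{tvr}, \nonumber
\end{align}
which is the second term in the minimum and requires nothing beyond non-negativity of the load. Next I would derive the Chebyshev part. Assuming the cell is stable in the mean, i.e. $\bar{\rho}_{tvr}<1$, the deviation $1-\bar{\rho}_{tvr}$ is strictly positive, so I can rewrite the overloading event as $\{\rho_{tvr}-\bar{\rho}_{tvr}\geq 1-\bar{\rho}_{tvr}\}$ and enclose it in the symmetric event $\{|\rho_{tvr}-\bar{\rho}_{tvr}|\geq 1-\bar{\rho}_{tvr}\}$. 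Chebyshev's inequality then yields
\begin{align}
\mathbb{P}(\rho_{tvr}\geq 1) \leq \mathbb{P}\left(|\rho_{tvr}-\bar{\rho}_{tvr}|\geq 1-\bar{\rho}_{tvr}\right) \leq \frac{\sigma^2_{tvr}}{(1-\bar{\rho}_{tvr})^2}, \nonumber
\end{align}
which is the first term in the minimum.

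Because both inequalities bound the same probability $\mathbb{P}(\rho_{tvr}\geq 1)$, taking the smaller of the two right-hand sides gives the claimed result and completes the proof. The point worth stressing is \emph{why} the minimum is taken rather than either bound alone: the Markov bound is loose precisely when $\bar{\rho}_{tvr}$ approaches $1$, whereas the Chebyshev bound is only meaningful for $\bar{\rho}_{tvr}<1$ and deteriorates when the load variance is large, so each dominates in a different operating regime. I do not expect any genuinely hard probabilistic step here — both are textbook tail inequalities — and I anticipate that the only matter requiring a little care is the bookkeeping claim that $\sigma^2_{tvr}$ is finite and can indeed be evaluated through the same mean-cell integral used for $\bar{\rho}_{tvr}$, together with noting explicitly the standing assumption $\bar{\rho}_{tvr}<1$ under which the Chebyshev term is written.
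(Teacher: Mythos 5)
Your proof is correct and follows essentially the same route as the paper's: Chebyshev's inequality applied to the deviation $1-\bar{\rho}_{tvr}$ yields the first term, Markov's inequality at level $1$ yields the second, and the minimum of the two valid bounds is itself a valid bound. Your added remarks on the standing assumption $\bar{\rho}_{tvr}<1$ and on which bound dominates in which regime are sensible but do not change the argument.
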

\begin{proof}
We have for every $k>0$, 
\begin{align}
\mathbb{P}\left[\left(\rho_{tvr} - \bar{\rho}_{tvr}\geq k\sigma_{tvr}\right) \right] \leq \mathbb{P}\left[\left|\rho_{tvr} - \bar{\rho}_{tvr}\right|\geq k\sigma_{tvr} \right]\stackrel{(a)}{\leq} \frac{1}{k^2}, \nonumber
\end{align}
where, (a) is from Chebyshev inequality. Substituting $k \cdot \sigma_{tvr} = 1- \bar{\rho}_{tvr}$, we obtain the first term of the right hand side in  \eqref{eq:loadbound}.
The second term is a direct result of Markov inequality.
\end{proof}
\subsection{Average User Throughput}
The average downlink throughput that a user receives from a BS of type $tvr$ is 
$T_{tvr} \overset{\Delta}{=} \frac{wA_{tvr}}{N_{tvr}}$,
where $N_{tvr}$ is the average number of active users in a cell, which can be approximated by using the mean cell approach \cite{blaszczyszyn2014user}.
The mean cell is defined as a hypothetical cell that has the same average load as that of a typical cell. 

\begin{lemma} The downlink average user's throughput in a non-overloaded mean cell of type $tvr$ is:
$$
T_{tvr} = \lambda \cdot \sigma \frac{1-\bar{\rho}_{tvr}}{\bar{\rho}_{tvr}} {A}_{tvr}.
$$
\end{lemma}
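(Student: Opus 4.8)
The plan is to derive the statement directly from the definition $T_{tvr} \overset{\Delta}{=} \frac{w A_{tvr}}{N_{tvr}}$ by computing the mean number of active users $N_{tvr}$ in the mean cell. First I would model the mean cell of type $tvr$ as an M/G/1 queue operated under the processor-sharing (PS) discipline, exactly as prescribed by the dynamic traffic model of Section II-E and following Bonald--Prouti\`ere~\cite{bonald2003wireless} and Blaszczyszyn--Karray~\cite{blaszczyszyn2014user}: new users arrive into the cell as a Poisson process (the restriction of the arrival process to the cell area, with rate $\lambda A_{tvr}$), each carries an independent file of mean size $\sigma$, and whenever $n$ users are present they share the cell capacity equally in round-robin fashion.

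The key step is to invoke the insensitivity property of the PS discipline. For a non-overloaded cell, i.e. under the stated hypothesis $\bar{\rho}_{tvr}<1$, the stationary number of users $N$ in an M/G/1/PS queue of load $\bar{\rho}_{tvr}$ depends on the service requirement only through its mean and is geometrically distributed,
\[
\mathbb{P}(N=n) = \left(1-\bar{\rho}_{tvr}\right)\bar{\rho}_{tvr}^{\,n}, \qquad n\ge 0 .
\]
Taking the expectation of this geometric law yields the mean cell occupancy
\[
N_{tvr} = \mathbb{E}[N] = \frac{\bar{\rho}_{tvr}}{1-\bar{\rho}_{tvr}} .
\]
It then remains only to substitute this expression together with $w=\lambda\sigma$ into the definition of $T_{tvr}$, giving
\[
T_{tvr} = \frac{w A_{tvr}}{N_{tvr}} = w A_{tvr}\,\frac{1-\bar{\rho}_{tvr}}{\bar{\rho}_{tvr}} = \lambda\cdot\sigma\,\frac{1-\bar{\rho}_{tvr}}{\bar{\rho}_{tvr}}\,A_{tvr},
\]
which is the claimed formula.

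The main obstacle is not the algebra but the justification of the geometric stationary law in this coupled multi-cell setting. The instantaneous service rate of the queue is governed by the SINR distribution, which itself depends on the activity of the interfering base stations and hence on the network-wide idle-fraction vector $\bar{\rho}$ through the fixed-point relation solved in Section V-A. The mean-cell approximation of~\cite{blaszczyszyn2014user} resolves this circularity by replacing the genuine random Poisson--Voronoi cell with a single hypothetical cell whose load is frozen at the average value $\bar{\rho}_{tvr}$; once the load is fixed, the cell reduces to a standard M/G/1/PS queue and the insensitivity result applies. I would therefore state explicitly that the identity holds under this mean-cell approximation and under the stability (non-overloading) condition $\bar{\rho}_{tvr}<1$, and recall that the complementary event $\bar{\rho}_{tvr}\ge 1$ is precisely the one bounded in Lemma~\ref{lem:overl_bound}.
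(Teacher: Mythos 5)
Your proof is correct and takes essentially the same route as the paper: the paper's proof is a one-line deferral to Bonald--Prouti\`ere, and the argument contained there is precisely what you spell out, namely the M/G/1/PS insensitivity property giving the geometric stationary law, hence $N_{tvr}=\bar{\rho}_{tvr}/(1-\bar{\rho}_{tvr})$, followed by substitution into $T_{tvr}=wA_{tvr}/N_{tvr}$ with $w=\lambda\sigma$. Your added remarks on the mean-cell approximation and the stability condition $\bar{\rho}_{tvr}<1$ are consistent with how the paper uses this quantity elsewhere (e.g., $N_{Svm}=\bar{\rho}_{Svm}/(1-\bar{\rho}_{Svm})$ in Section V-C).
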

\begin{proof}
The proof is similar to that presented in \cite{bonald2003wireless}.
\end{proof}

The average user throughput is then given by theorem of total probability as:
$T = \sum_{tvr}\mathbb{P}_{tvr}T_{tvr}$. 
 Due to the different operating bandwidths, the bias values which provide the optimal user throughput may lead to weak SINR, which in turn increases the outage. Thus, to guarantee the communication reliability, it is necessary to consider an SINR constraint on the selection of the optimal biases. We define the outage probability with respect to a SINR threshold $\gamma_{\min}$ as:
\begin{align}
\mathbb{P}_{o,tvr}(\gamma_{\min}) = 1 - \mathbb{P}_{Ctvr}(\gamma_{\min}).
\label{eq:outage}
\end{align}
Therefore, we introduce the notion of effective throughput, which measures the throughput of the users, which are not in outage, as:
$T_{eff}(\gamma_{\min}) = \sum_{tvr}\mathbb{P}_{tvr}\cdot T_{tvr} \cdot \mathbb{P}_{Ctvr}(\gamma_{\min}).$ In Section~\ref{sec:PTD}, we optimize $Q_T$ and $Q_R$ so as to maximize the average effective user throughput $T_{eff}(\gamma_{\min})$ under the constraint of a maximum outage probability $\mathbb{P}_{o,tvr}(\gamma_{\min})\leq \bar{\mathbb{P}}_o$ for every BS type $tvr$.

\subsection{{Delay-Throughput Trade-off of the One-Step Association Scheme.}}

{
It must be noted that the sub-optimality in biased received power does not always deteriorate the downlink user throughput, specially for larger access delays. To illustrate this, let us assume that the initial access using mm-wave suffers from a delay given by $\Delta$. In this regard, the throughput for the users associated to the SBSs in mm-wave RAT is given by:}
{
\begin{align}
T_{Svm} = \frac{\sigma}{\frac{N_{Svm}}{\Lambda} + \Delta},
\label{eq:delay_throughput}
\end{align}
{where $\Lambda = \lambda \cdot A_{Svm}$ is the traffic arrival rate in terms of users per second in the mm-wave cell of visibility state $v$ of coverage area $A_{Svm} = \frac{\mathbb{P}_{Svm}}{\lambda_S}$,   $N_{Svm} = \frac{\bar{\rho}_{Svm}}{1 - \bar{\rho}_{Svm}}$ is the number of active users in the cell, and $\frac{N_{Svm}}{\Lambda}$ is the transmission time according to Little's theorem~\cite{leon2008probability}.}
In Section VI, we provide some numerical results to show that in case of realistic access delay with the mm-wave RAT, our scheme performs better in terms of the downlink throughput.}

	\section{Simulation Results}
    \label{sec:PTD}
In this section, we first validate our path-loss exponent approximation with respect to 3GPP values. Then, we study the effects of biases on SINR and user throughput. Finally, we discuss the selection of optimal biases. 
\subsection{Validation of the Path-loss Exponent Approximation}
Fig. 1 shows the comparison of our analytical results {using the approximated path-loss exponents from Table I}  (see Eq. 9) with Monte-Carlo simulations {with actual path-loss exponents from the 3GPP recommendations~\cite{36.814,38.900}} in terms of SINR coverage probability for various tier, RAT selection biases, and density values. Our results indicate that the analytical expressions based on approximated path-loss exponents provide good approximations to the simulated results with 3GPP values of exponents. {Hence, this approximation} can be used for analyzing the system performance. 

\begin{figure} 
\centering
\includegraphics[width=8cm,height = 4.5cm]{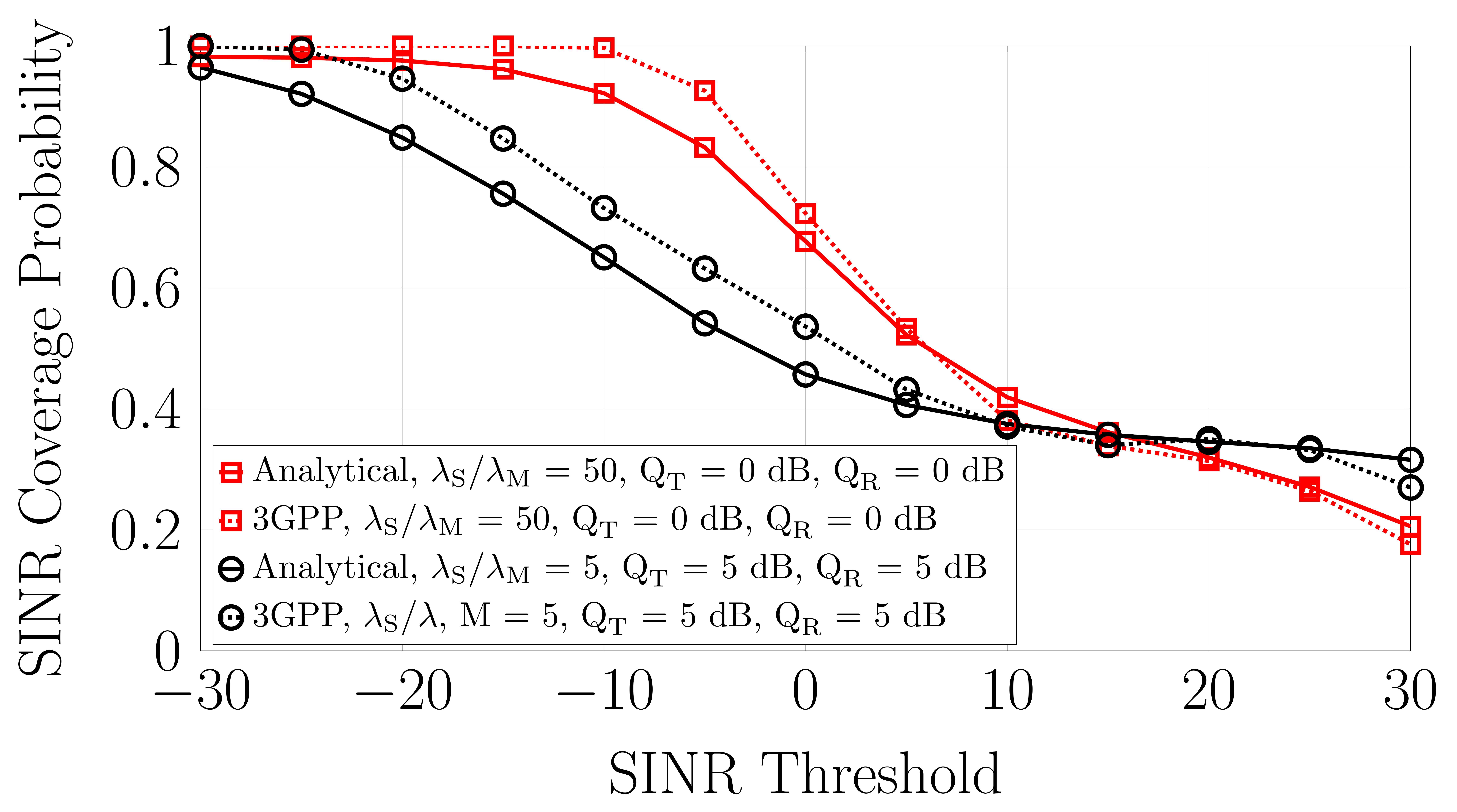}
\caption{Validation of the approximated path-loss exponents with 3GPP parameters.}
\label{fig:AnaSim}
\end{figure}
\subsection{Trends in Cell Association Probabilities}
\begin{figure*}[!t]
\centering
{\includegraphics[width=8cm,height = 4.5 cm]{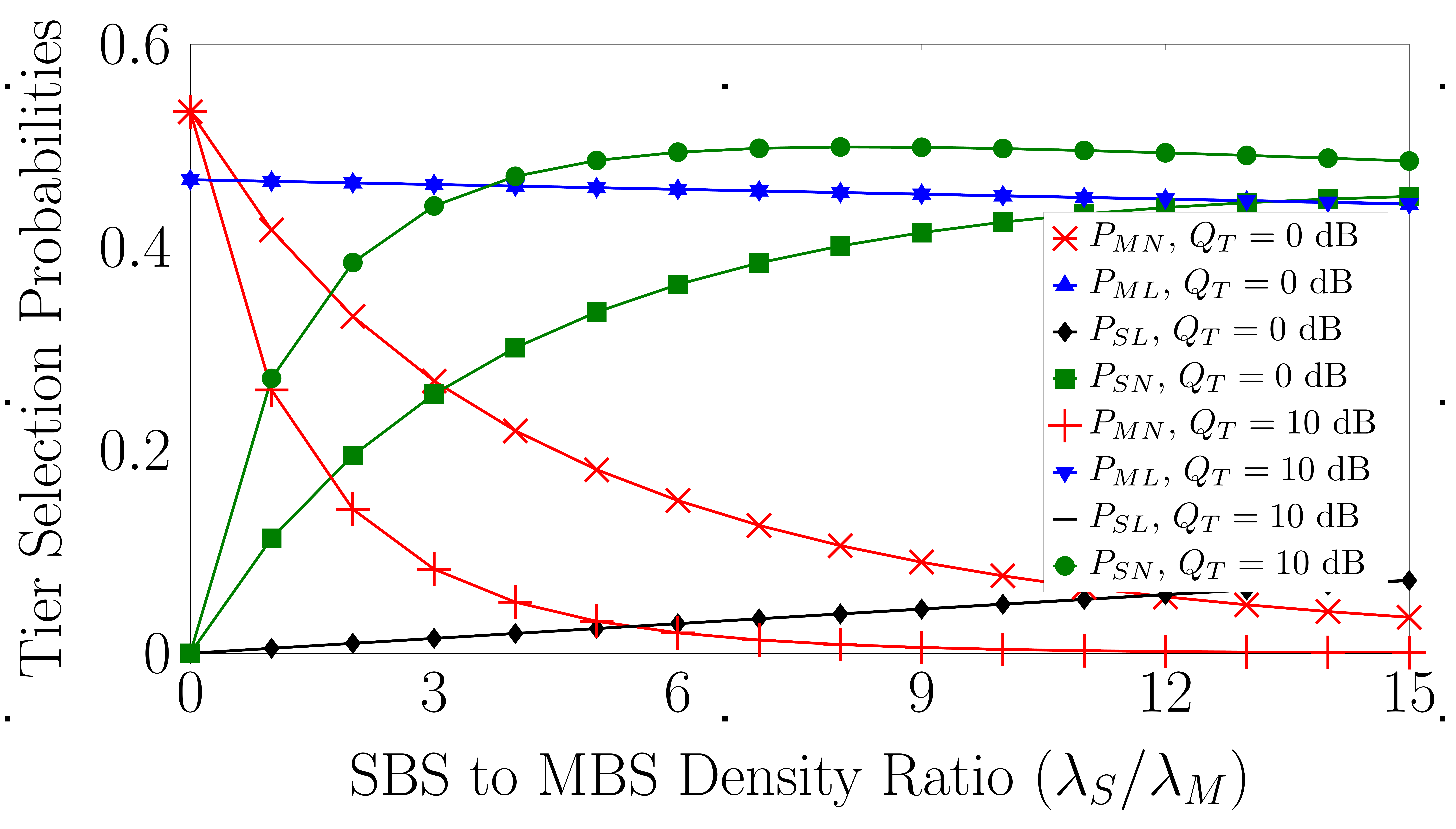}
\label{fig:TierSelection}}
\hfil
{\includegraphics[width=8cm,height =4.5cm]{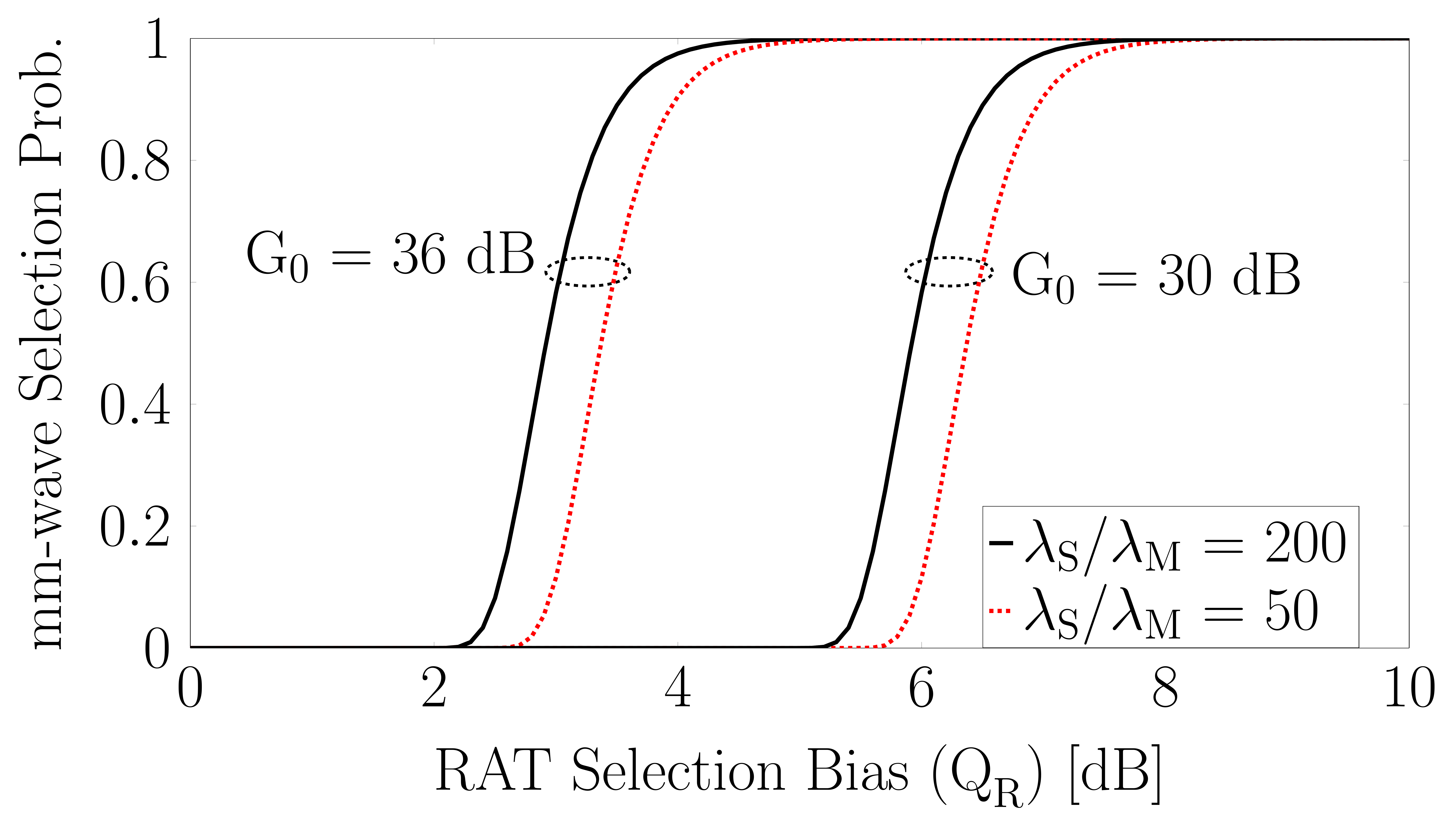}
\label{fig:RATSelection}}
\caption{left) Tier selection probability; right) Conditional mm-wave RAT selection probability with 3GPP parameters.}
\label{fig:ASSOP}
\end{figure*}
Fig. \ref{fig:ASSOP} (left) shows the tier selection probabilities with respect to the ratio of the MBS and SBS densities $\lambda_S/\lambda_M$ with $Q_T = 10$ dB and with $Q_T = 0$ dB. As expected, the association to MBSs decreases as $\lambda_S/\lambda_M$ increases or when $Q_T$ increases. However, the association to LOS BSs does not change appreciably when increasing $Q_T$ from 0 to $10$~dB. Only cell edge users, which are more likely to be in NLOS visibility, are indeed affected by moderate values of $Q_T$. 


The conditional probability of mm-wave service, given that the user has associated with a SBS, is plotted in Fig. \ref{fig:ASSOP} (right), by varying $Q_R$ for two different antenna gains and deployment density ratios. As expected, this probability increases with $Q_R$. However, it is interesting to note that the maximum directional antenna gain has a large effect on the RAT selection regardless of the SBS density. For example, increasing by only 3~dB the antenna gains of transmitter and receiver each has much more impact on the mm-wave association than deploying four times more SBSs. 


\begin{figure*}[!t]
\centering
{\includegraphics[width=8cm,height =4.5cm]{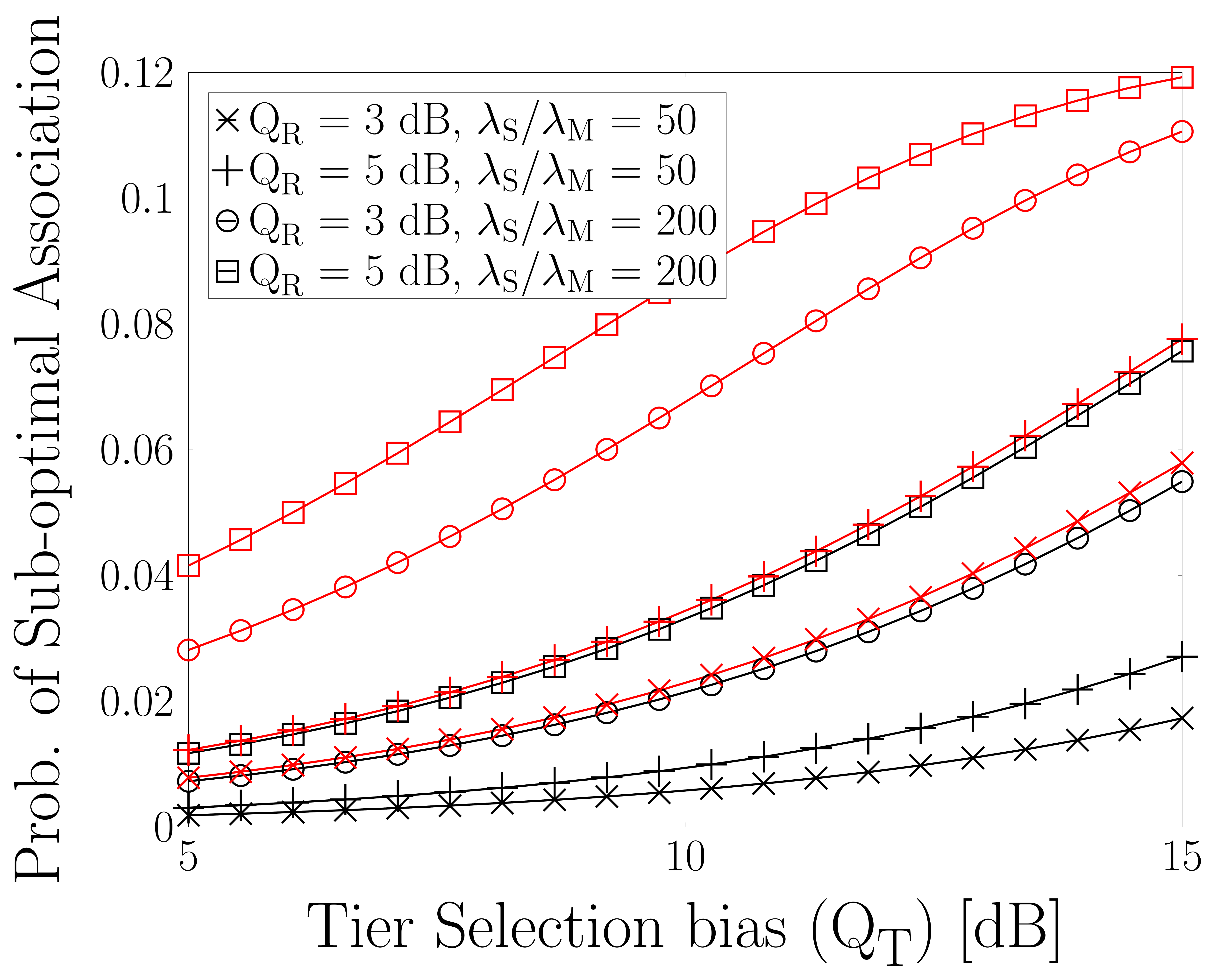}
\hfill
\includegraphics[width = 8 cm , height = 4.5 cm]{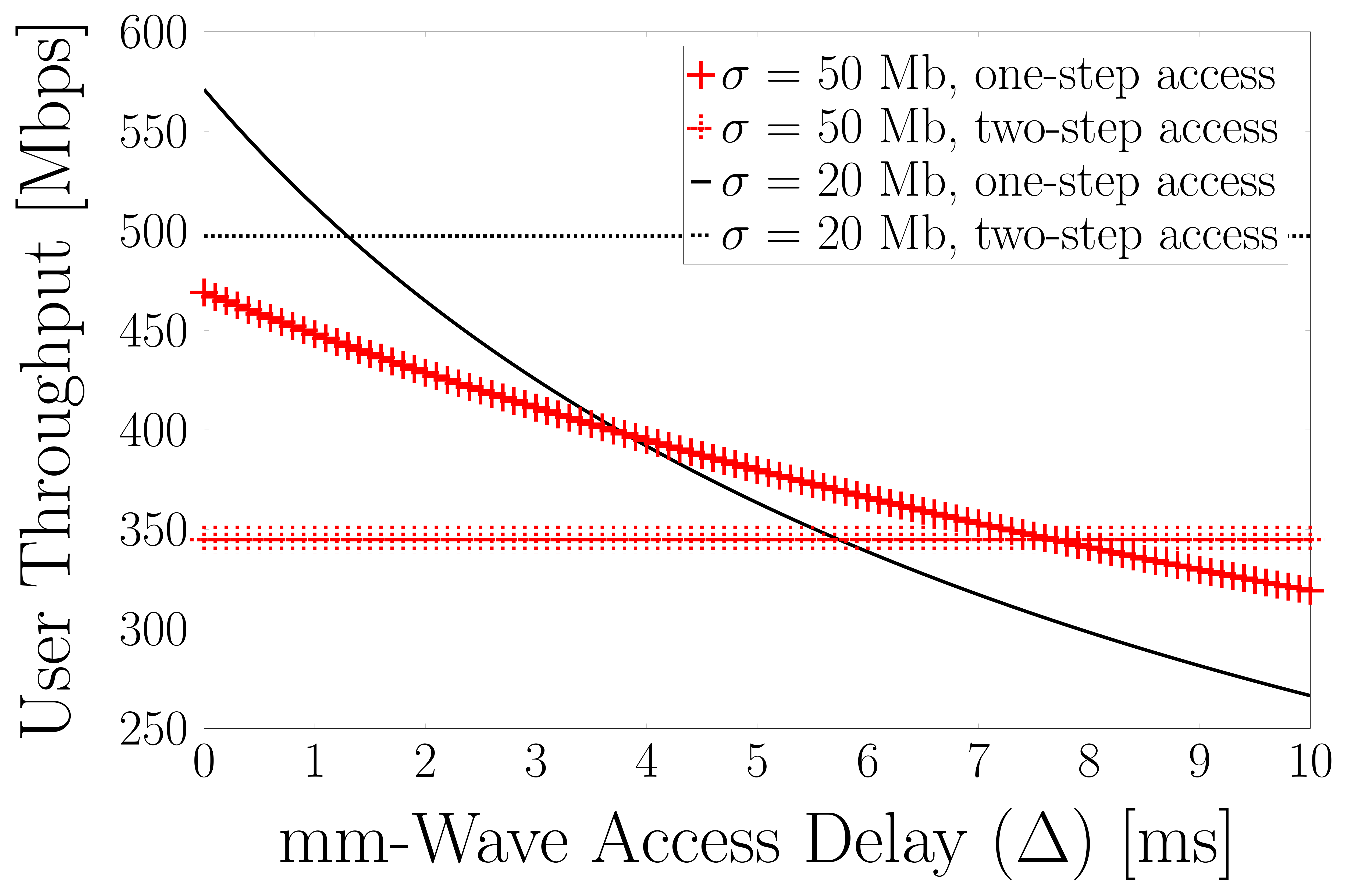}
}
\caption{left) Probability of sub-optimal association for different $Q_R$ and for different $\frac{\lambda_S}{\lambda_M}$. The black curves correspond to $G_0 = 30$ dB and the red curves correspond to $G_0 = 36$ dB; right) Delay-throughput trade-off with mm-wave initial access; $G_0 = 36$  dB, $Q_T = 10$ dB, $Q_R = 5$ dB, $\lambda = 100$ [user $\cdot$ km$^{-2}$ s$^{-1}$].}
\label{Fig:Sub_Opt}
\end{figure*}
\subsection{{Comparison {with the One-Step Association Strategy:}}}
{
We plot the probability of sub-optimal association \eqref{eq:sub_opt} in the left side of Figure~\ref{Fig:Sub_Opt}, for various tier and RAT selection biases and two antenna gains. We note that the probability of sub-optimal association is low $(\leq 12\%)$. Moreover, the probability {becomes} negligible with low tier selection bias $(\leq 1 \%)$. This is because with lower $Q_T$, the biased received power of the mm-wave transmission in SBS are lower, thereby reducing the probability of sub-optimality. Similarly, with lower antenna gain $(G_0)$, the biased mm-wave power is lower, resulting in low sub-optimality. Furthermore, we observe that the probability of sub-optimal association increases with increasing network densification, since denser networks correspond to higher mm-wave powers. However, for $G_0 = 30$ dB, the probability of sub-optimal association does not exceed $8 \%$ even for very dense deployments. }

{
In the right side of Figure~\ref{Fig:Sub_Opt}, we compare the throughput perceived by the typical user with the two approaches \eqref{eq:delay_throughput}. We plot the downlink user throughput vs the initial access delay $\Delta$, for two different file sizes $(\sigma)$. We see that with increasing $\Delta$, the throughput with the one-step association scheme decreases, and goes below the throughput achieved by using our two-step solution. In practical systems, the initial access delay in mm-wave can be of the order of several milliseconds~\cite{li2016initial}. As a result, our two step association is more efficient in terms of the user throughput as compared to the case where association is performed in one-step. }  

\subsection{Trends in SINR Coverage Probabilities}
In Fig. \ref{fig:CovPQ1}, we plot the SINR coverage probability of the typical user, with respect to $Q_T$ and various ratios of SBS to MBS densities.} In the case where the SBSs operate only in sub-6GHz band, i.e., $Q_R = -\infty$~dB (Fig. \ref{fig:CovPQ1} left), increasing the tier selection bias decreases the SINR coverage probability because some users are forced to associate with BSs providing less signal power. For $d_M=200$~m and $d_S=20$~m, Fig. \ref{fig:CovPQ1} (right) shows that the same effect can be observed when SBSs transmit data only in mm-wave, i.e., $Q_R = \infty$, regardless of the deployment densities. 

However, in the latter case, when varying the blockage characteristics, we observe two different behaviors for the SINR coverage probability. In Fig.~\ref{fig:CovP2}, we see that depending on the LOS ball radii, the SINR may increase with the tier selection bias. Indeed, the SINR may improve by associating macro cell users to SBSs transmitting data only in mm-wave, even though this SBS offers less power in sub-6GHz band, because the received power in mm-wave may be higher due to antenna gain. Additionally, the interference in mm-wave is generally lower than the one in the sub-6GHz band. However, increasing the bias further forces the users closer to the MBS to associate with a SBS that provide very limited received power, which leads to lower SINR. 
\begin{figure*}[!t]
\centering
{\includegraphics[width=8cm,height =4.5cm]{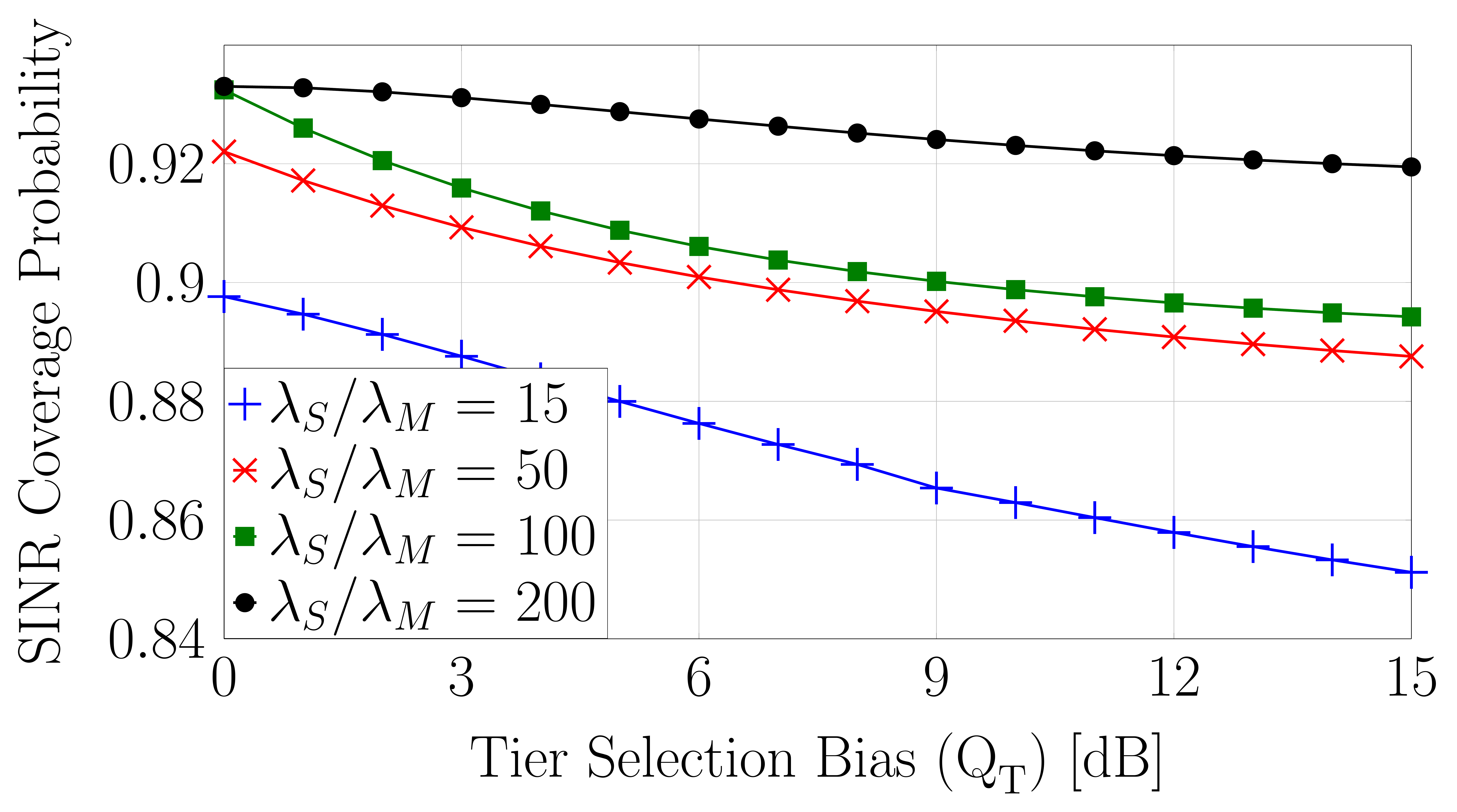}
\label{fig:CovPmu}}
\hfil
{\includegraphics[width=8cm,height =4.5cm]{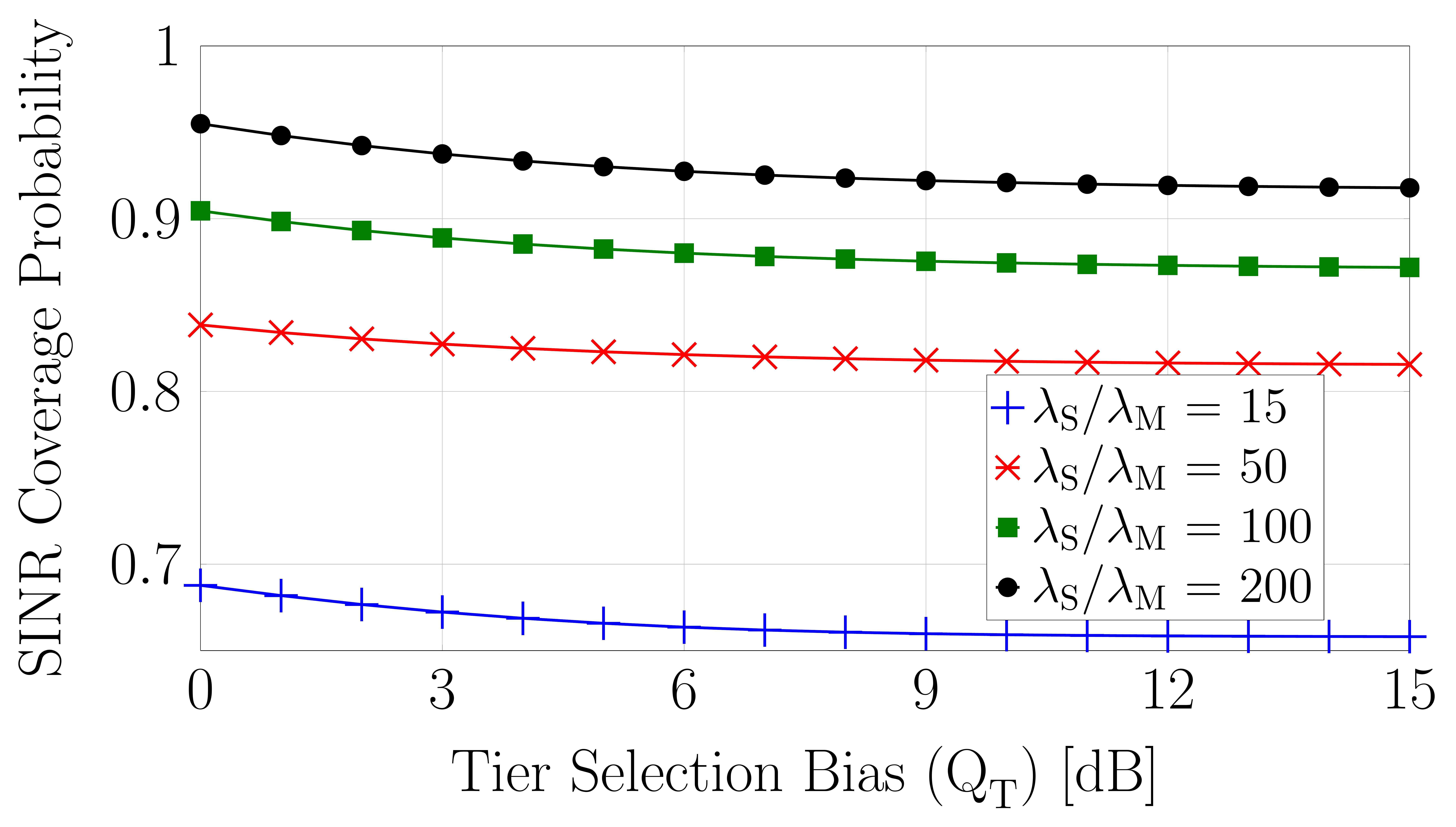}
\label{fig:CovPmm}}
\caption{SINR coverage probability vs tier selection bias at a threshold of $\gamma = -10$ dB for left) $Q_R=-\infty$~dB; right) $Q_R=\infty$, $d_M=200$~m, $d_S=20$~m.}
\label{fig:CovPQ1}
\end{figure*}
\begin{figure*}[!t]
\centering
\includegraphics[width=8cm,height = 4.5cm]{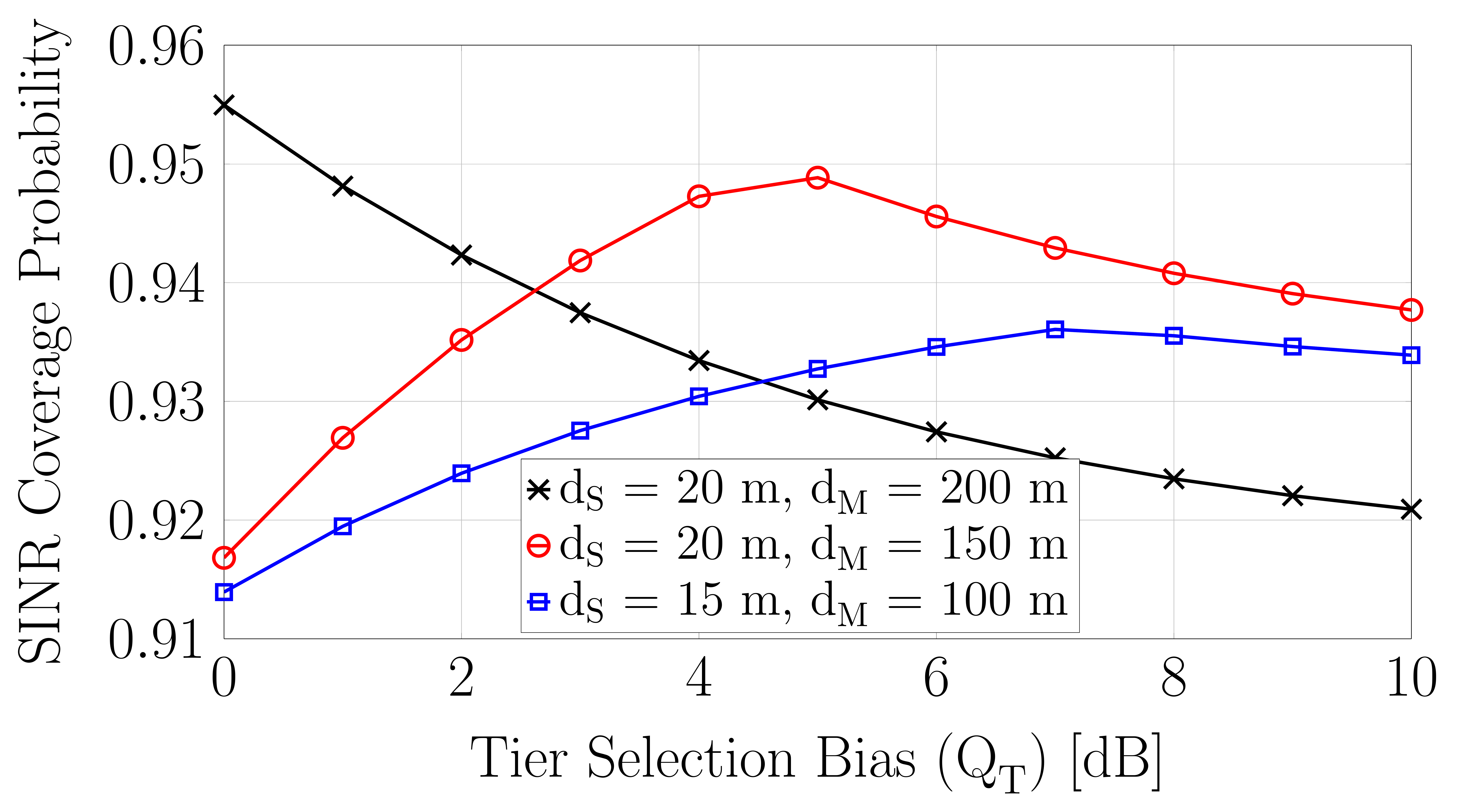}
\hfil
\includegraphics[width=8cm,height =4.5cm]{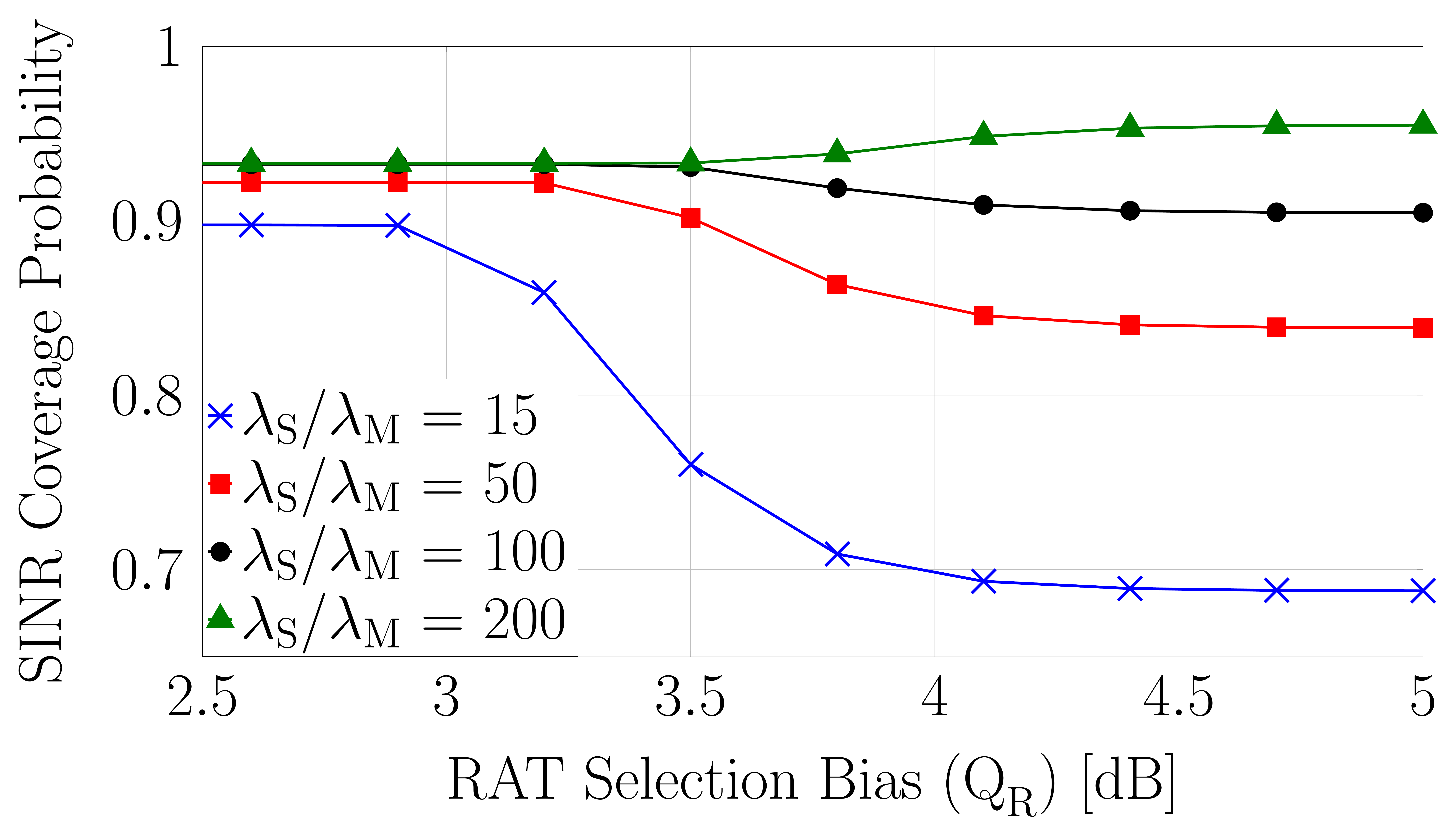}
\caption{left) SINR coverage probability with different LOS radii for $Q_R=\infty$, $\lambda_S/\lambda_M = 200$; right) SINR coverage probability vs RAT selection bias with $Q_T = 0$ dB.}
\label{fig:CovP2}
\end{figure*}

Assuming maximum power tier selection ($Q_T=0$~dB), Fig.~\ref{fig:CovP2} (right) shows that increasing $Q_R$ has contrasting effects on the SINR depending on the ratio of SBS to MBS densities. Increasing the SBS density increases co-channel interference more in sub-6GHz than in mm-wave. Moreover, as the user-SBS distance decreases, the useful signal power increases more in mm-wave than in sub-6GHz. Both effects are due to the difference in the path-loss models. As a consequence, as the SBS density increases ($\lambda_S/\lambda_M = 200$), it is more and more attractive for a user to be served by mm-wave band, which is realized by higher values of $Q_R$. On the contrary, in case of sparser SBS deployments ($\lambda_S/\lambda_M = 15, 50, 100$), increasing $Q_R$ forces users to be served from distant SBSs in mm-wave, and the gain due to the reduced interference cannot compensate the signal strength loss. Note that this contrasting effect cannot be observed with single RAT networks.  

We now study the joint effect of $Q_T$ and $Q_R$ for dense ($\lambda_S/\lambda_M = 200$) and sparse ($\lambda_S/\lambda_M = 50$) deployments in Fig. \ref{fig_PCS} left and right, respectively. For sparse SBS deployments, the conclusions drawn so far hold: high SINR regions occur at low $Q_T$ and $Q_R$. The optimum biases as marked in the figure are $Q_T = 0$ dB and $Q_R = 0$ dB. For dense deployments, however, we can observe that, for a high $Q_T$ (here for $Q_T>8$~dB), SINR coverage probability generally decreases with increasing $Q_R$, which is in contrast to the case when $Q_T$ is small. This is because, for users far away from their serving SBS, it is now preferable to get associated with sub-6GHz than with mm-wave. The optimal biases in this case are $Q_T = 0$ dB and $Q_R = 5$ dB.  


\begin{figure*}[!t]
\centering
{\includegraphics[width=8cm,height = 4.5cm]{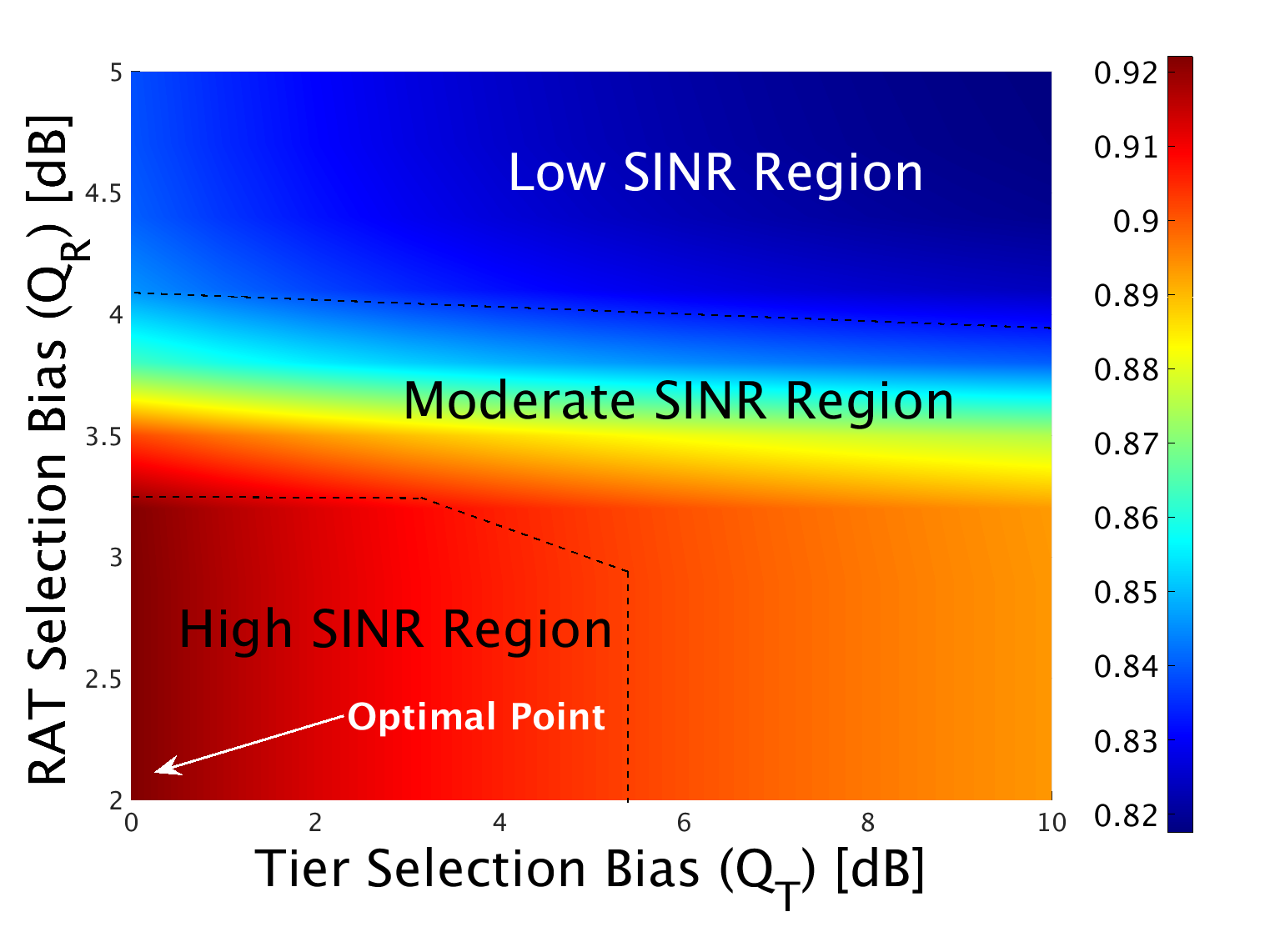}
\label{fig_PCS15}}
\hfil
{\includegraphics[width=8cm,height = 4.5cm]{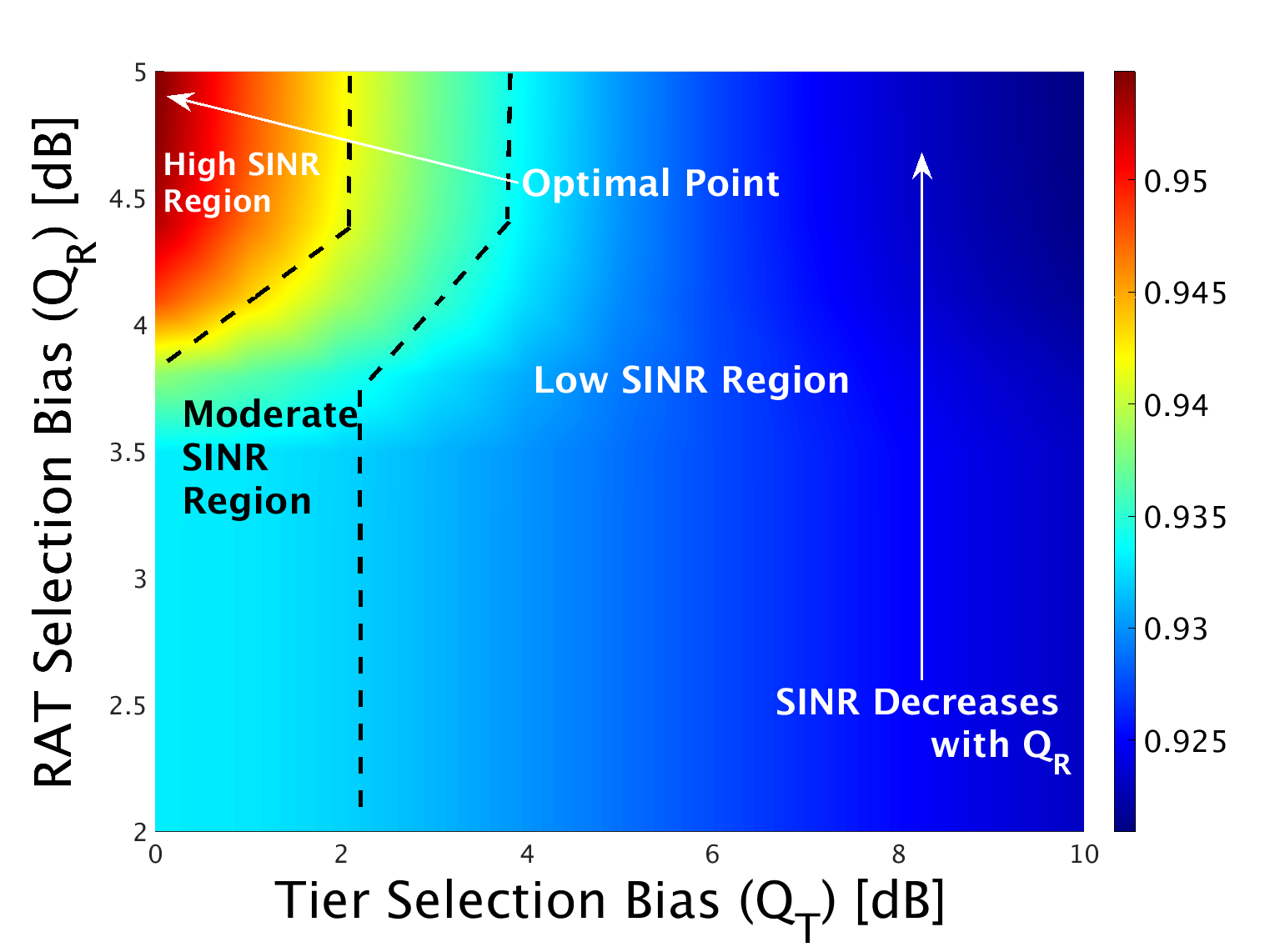}
\label{fig_PCS5}}
\caption{SINR coverage probability as a function of $Q_T$ and $Q_R$ at a threshold of $\gamma = -10$ dB for $d_M=200$~m, $d_S=20$~m for left) $\lambda_S/\lambda_M = 50$ right) $\lambda_S/\lambda_M = 200$.}
\label{fig_PCS}
\end{figure*}
\subsection{Performance of the Near-Optimal Strategy to Select Tier and RAT Biases}
{
 In Section \ref{sec:bias selection}, we have proposed a near-optimal strategy to fix the RAT and tier-selection bias, to reduce the complexity of the brute force search. In this strategy, first $Q_R$ is selected according to \eqref{eq:SINRratio}. Then, for the fixed $Q_R$, a $Q_T$ is selected according a random-restart hill-climbing algorithm as described in Algorithm 2.
 We show the convergence of the algorithm in the left side of Fig. 7 for $\lambda_S/\lambda_M = 200$ and for two pairs of LOS radii. With $k = 0.5$ and $\beta = 2$, the proposed algorithm converges at $Q_T = 3.19$ dB for $d_S = 20$~m and $d_M = 150$~m, and at  $Q_T = 7.16$ dB for $d_S = 15$~m and $d_M = 100$~m.
Fig. 7 (center) compares various bias selection strategies. We observe that our proposed strategy performs at least as good as the classical strategy based on the maximum received power. In particular, for sparse deployment of SBSs, the proposed strategy and the maximum power association perform equal to the optimal association. However, with increasing SBS density, the performance of the maximum power association decreases due to the increasing interference in the sub-6GHz RAT. On the contrary, since our strategy takes interference into account, it achieves near optimal SINR.}

\begin{figure*}[!t]
\centering
\includegraphics[width=5cm,height = 4.5cm]{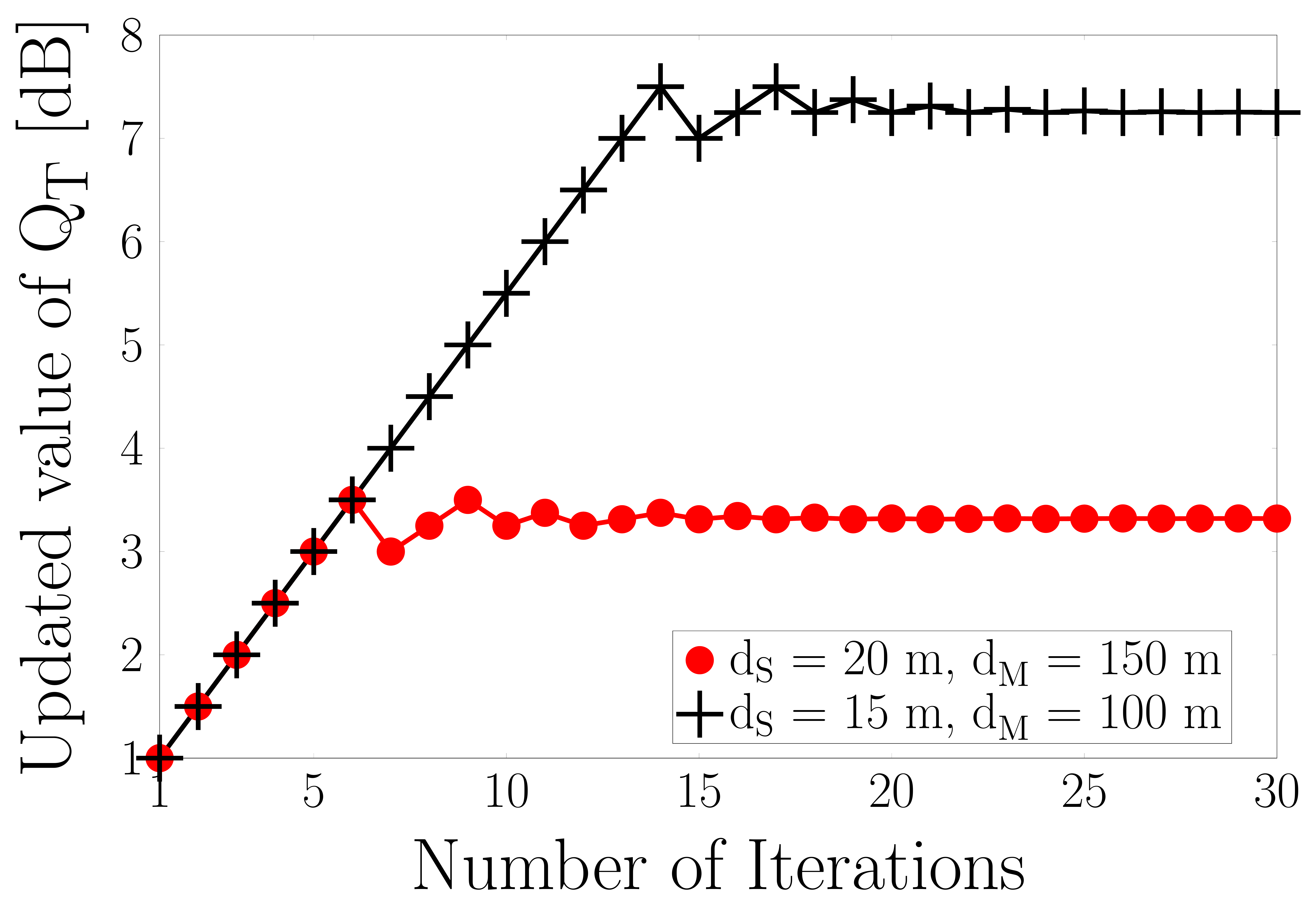}
\includegraphics[width=5cm,height = 4.5cm]{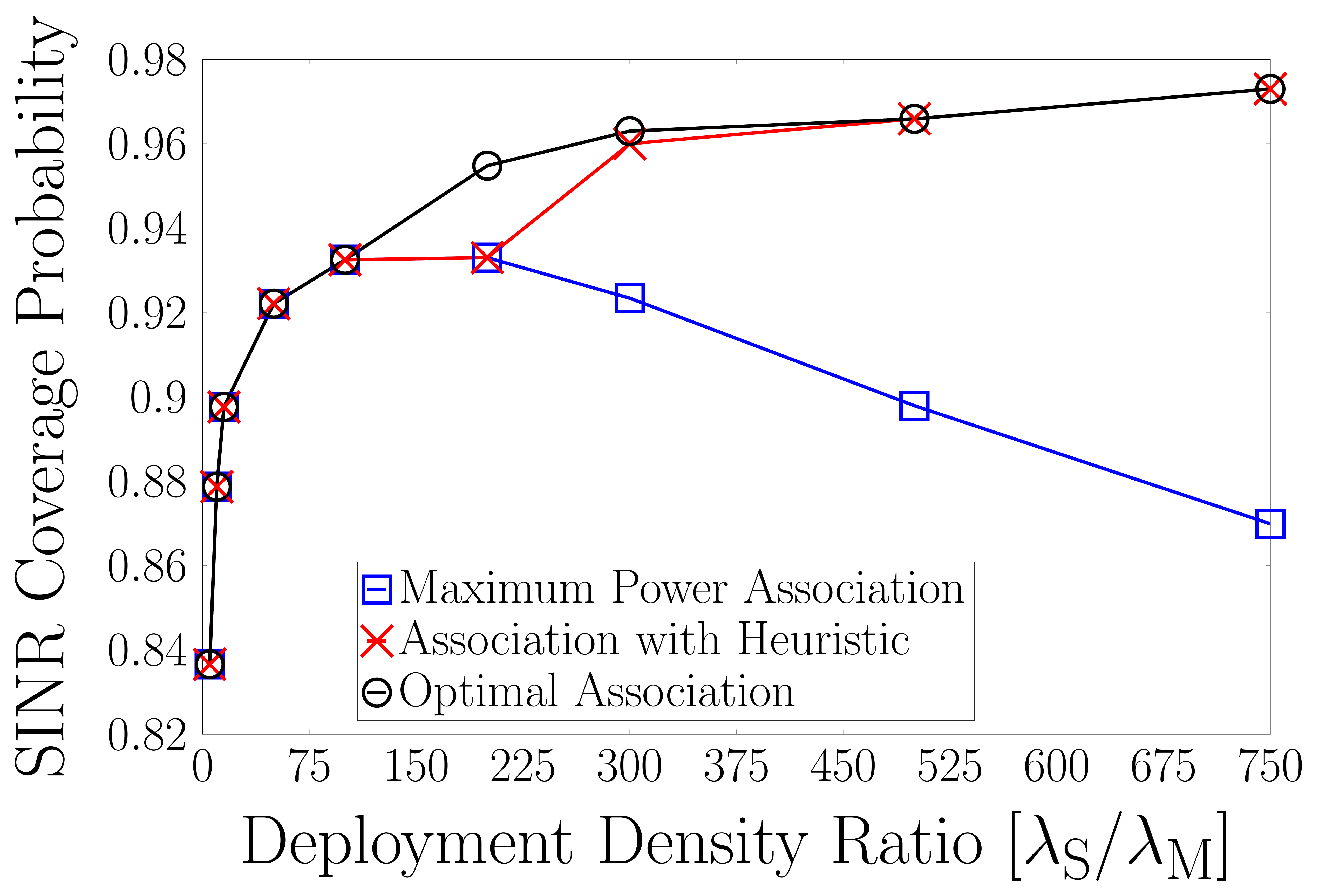}
\includegraphics[width=5cm,height = 4.5cm]{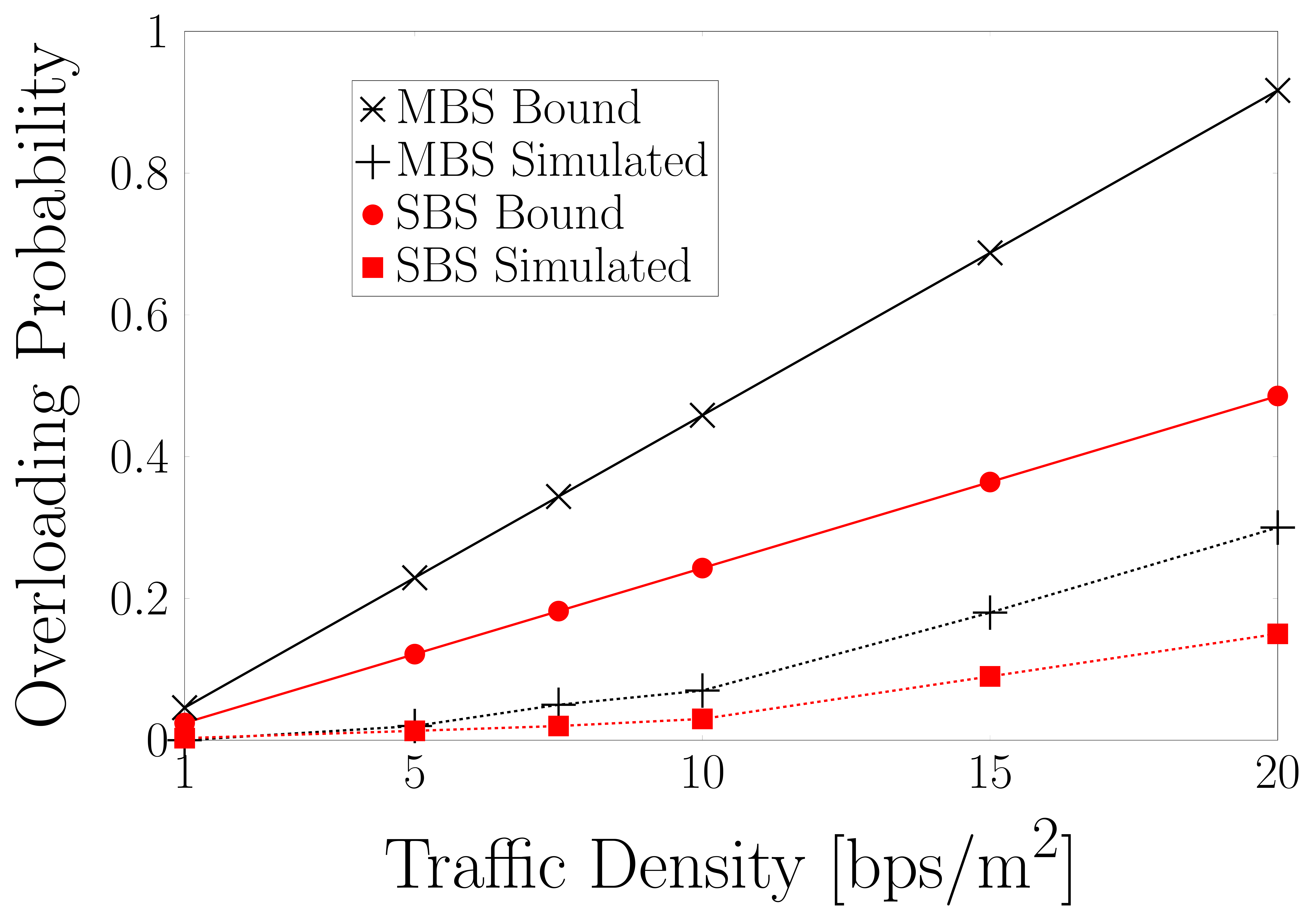}
\caption{left) Convergence of gradient descent algorithm for $\lambda_S/\lambda_M = 100, d_S = 10$ m and $d_M = 100$ m; center) Comparison of RAT selection strategies; right) Tightness of the bound on probability of overloading.}
\label{fig_RAT}
\end{figure*}
\subsection{Analysis of the Bound on Overloading Probabilities} 
In this section, we investigate the relation between the cell overloading probabilities and the traffic density, based on the analytical bound derived in Lemma \ref{lem:overl_bound}. We see in Fig. \ref{fig_RAT} (right) for $\lambda_S/\lambda_M = 5$ that the proposed bound is relatively loose but it provides the operator the guarantee that the overload probability will not exceed this value. Based on this bound and a constraint on the overall outage probability, a conservative network sizing can be derived. In Fig.~\ref{Overload} (left), we show the minimum deployment density required such that feasible biases exist to meet both theses constraints. The more stringent the constraints are, the more SBSs the operator should deploy. When the traffic density is low, the outage probability is the limiting constraint and accordingly, the minimum deployment density is the one required to maintain coverage. However, as traffic density increases, overloading probability is determining. 

\begin{figure*}[!t]
\centering
{\includegraphics[width=6cm,height = 4.5cm]{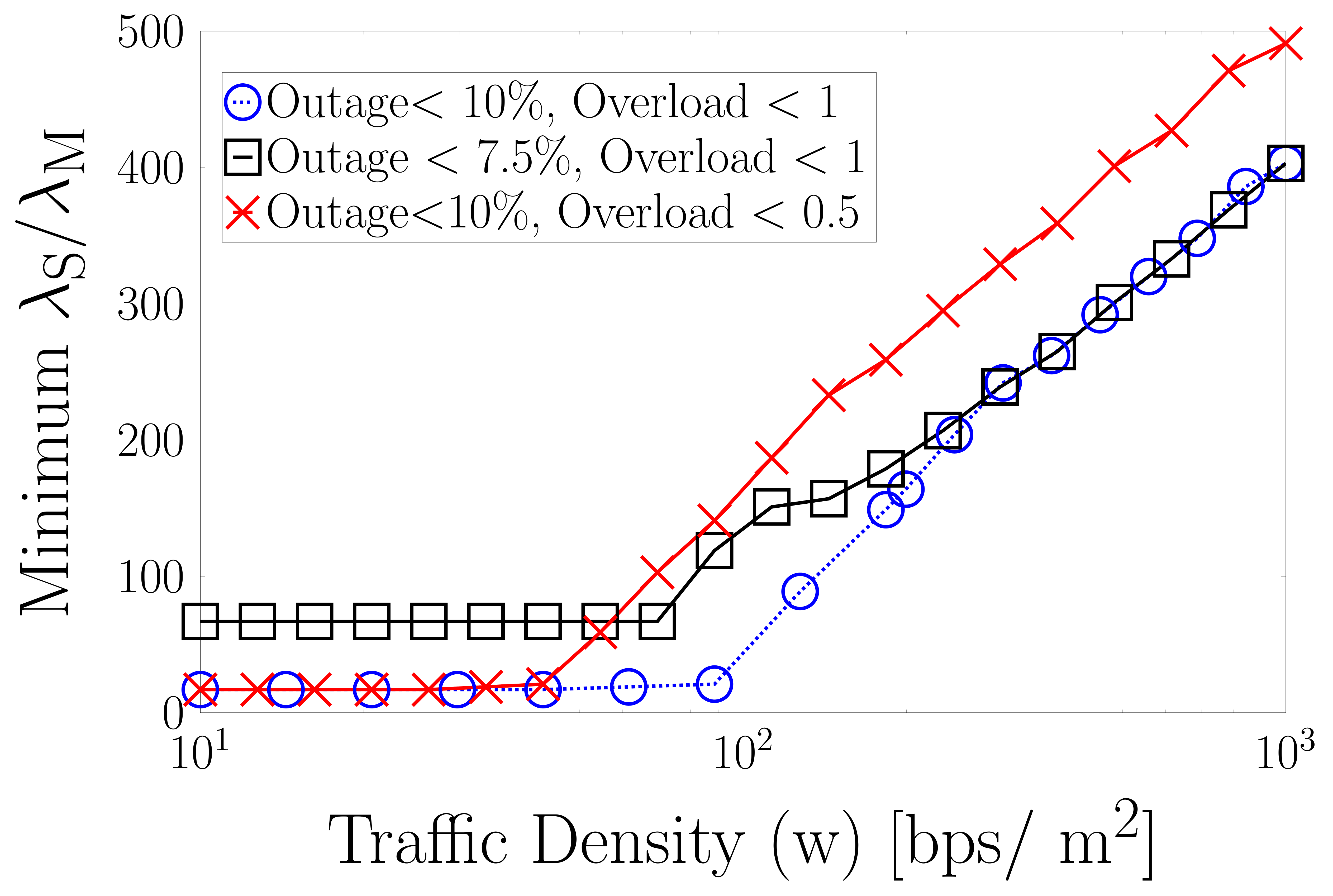}
\label{fig:loadbound}}
\hfil
{\includegraphics[width=10cm,height = 4.5cm]{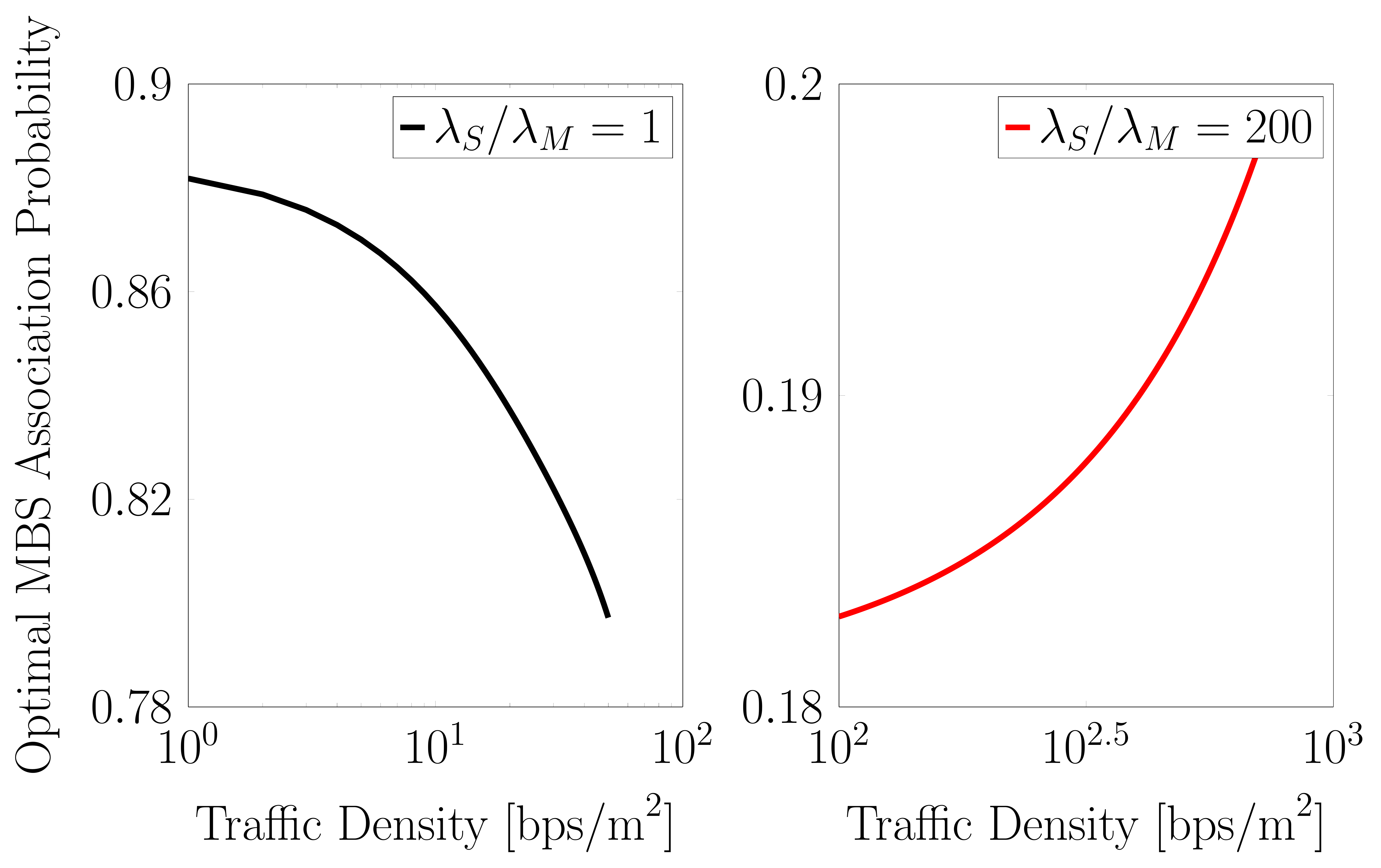}
\label{fig:loadboundSNmu}}
\caption{left) Minimum required deployment density for a given traffic density and right) Throughput optimal MBS association probabilities.}
\label{Overload}
\end{figure*}
\subsection{Rate Optimal Choice of Tier and RAT Selection Biases}
In this section, we optimize tier and RAT selection biases with respect to the average effective throughput. To guarantee a good coverage, we impose a constraint on the outage probability (from 7.5 to 12.5\%\footnote{Note that generally, PPP based modeling of cellular networks provide a pessimistic view of the network. Previous studies showed that an outage of 1\% in hexagonal model corresponds to 10\% outage in a PPP based modeling for the same network parameters~\cite{7488044}.}).     
The MBS association probability corresponding to the optimal $Q_T$ as a function of the traffic density is shown in Fig.~\ref{Overload} (right). Depending on the ratio of densities $\lambda_S/\lambda_M$, users are offloaded from MBS to SBS (for low SBS densities) or vice-versa (for high SBS densities). 


In Fig. \ref{RateetOut} (left), we show the optimal effective downlink throughput as a function of the traffic density for various deployment densities and outage constraints. We observe that more stringent outage constraints result in lower downlink throughput in the network. This is because biases are mainly optimized to guarantee coverage also for cell edge users. We also observe that increasing the SBS density not only results in higher throughput, but also increases the range of traffic densities that the network can serve, i.e., the network capacity. {In this evaluation, we have obtained the downlink throughput by considering that the users in overloaded base stations receive zero throughput. Therefore, even though the network as a whole can serve traffic densities up to 1 Gbps/m$^{2}$, the MBS tier gets overloaded for much lower traffic densities. Accordingly, the network is no longer well-dimensioned for the region of traffic densities beyond the MBS overloading points.} Furthermore, in Fig. \ref{RateetOut} (right), we plot the optimum association probabilities as a function of outage probability with $\lambda_S/\lambda_M = 50$ and traffic density of 200 bits$\cdot$s$^{-1}$m$^{-2}$. We see that for more stringent outage constraints, sub-6GHz service in SBSs becomes necessary, in addition to mm-wave service, to satisfy the QoS constraints of outage and overloading simultaneously, thus justifying the interest of deploying dual band SBSs.
\begin{figure*}[!t]
\centering
{\includegraphics[width=8cm,height = 4.5cm]{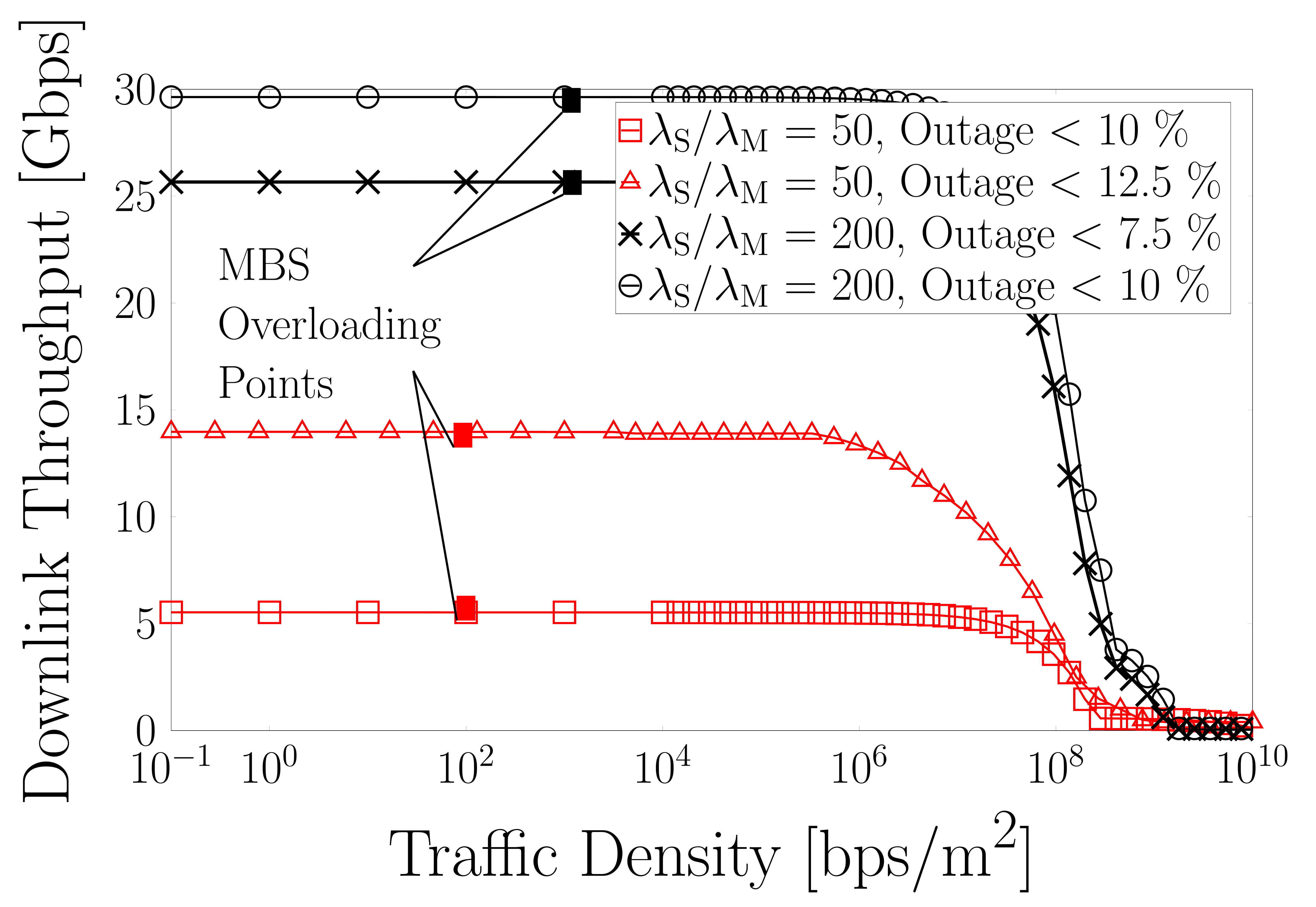}
\label{fig:RATESINR}}
\hfil
{\includegraphics[width=8cm,height = 4.5cm]{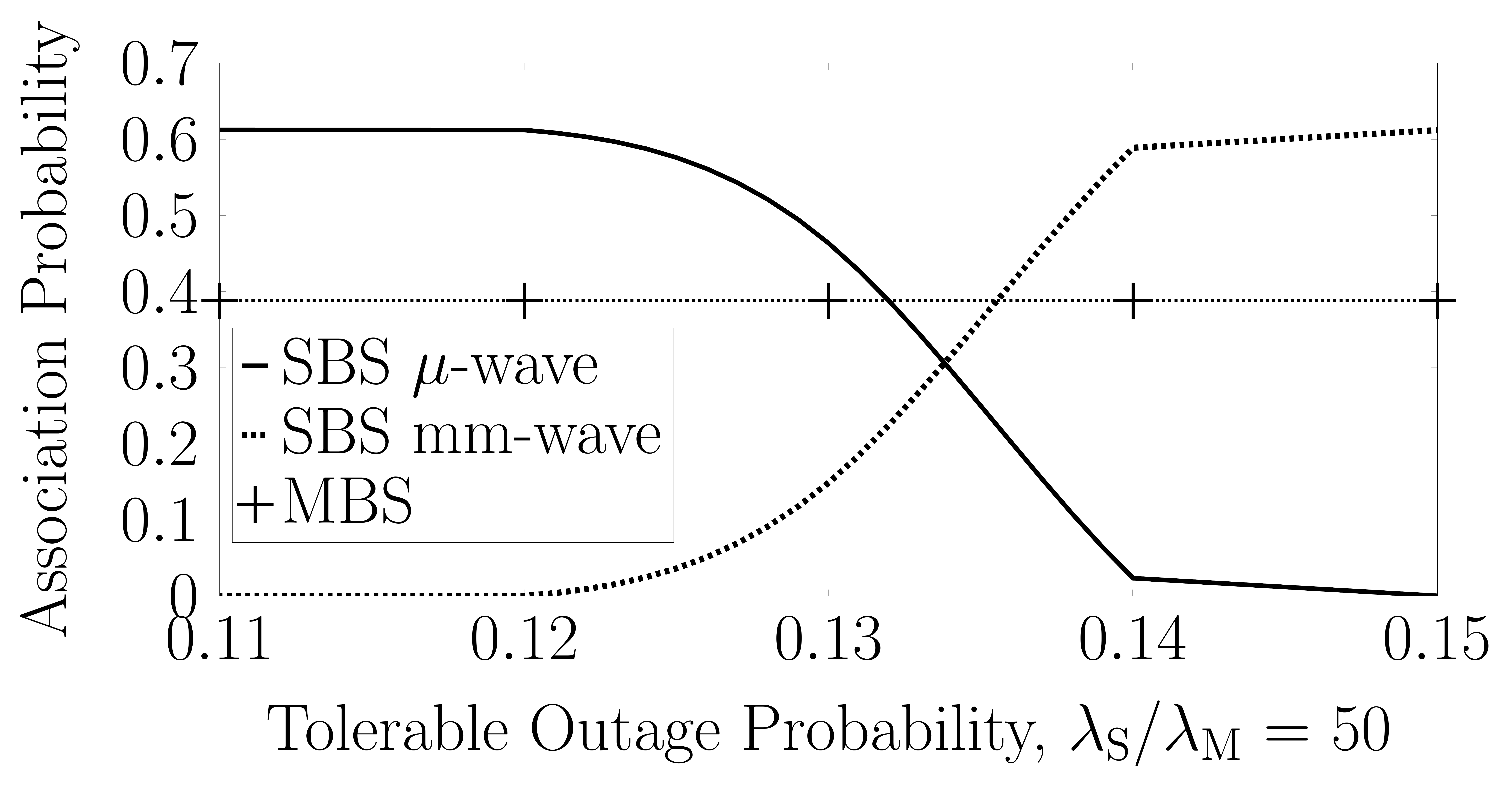}
\label{fig:Outage}}
\caption{left) Optimal downlink user throughput and right) Optimal association probabilities for different outage probability constraints.}
\label{RateetOut}
\end{figure*}
\begin{figure*}[!t]
\centering
{\includegraphics[width=8cm,height = 4.5cm]{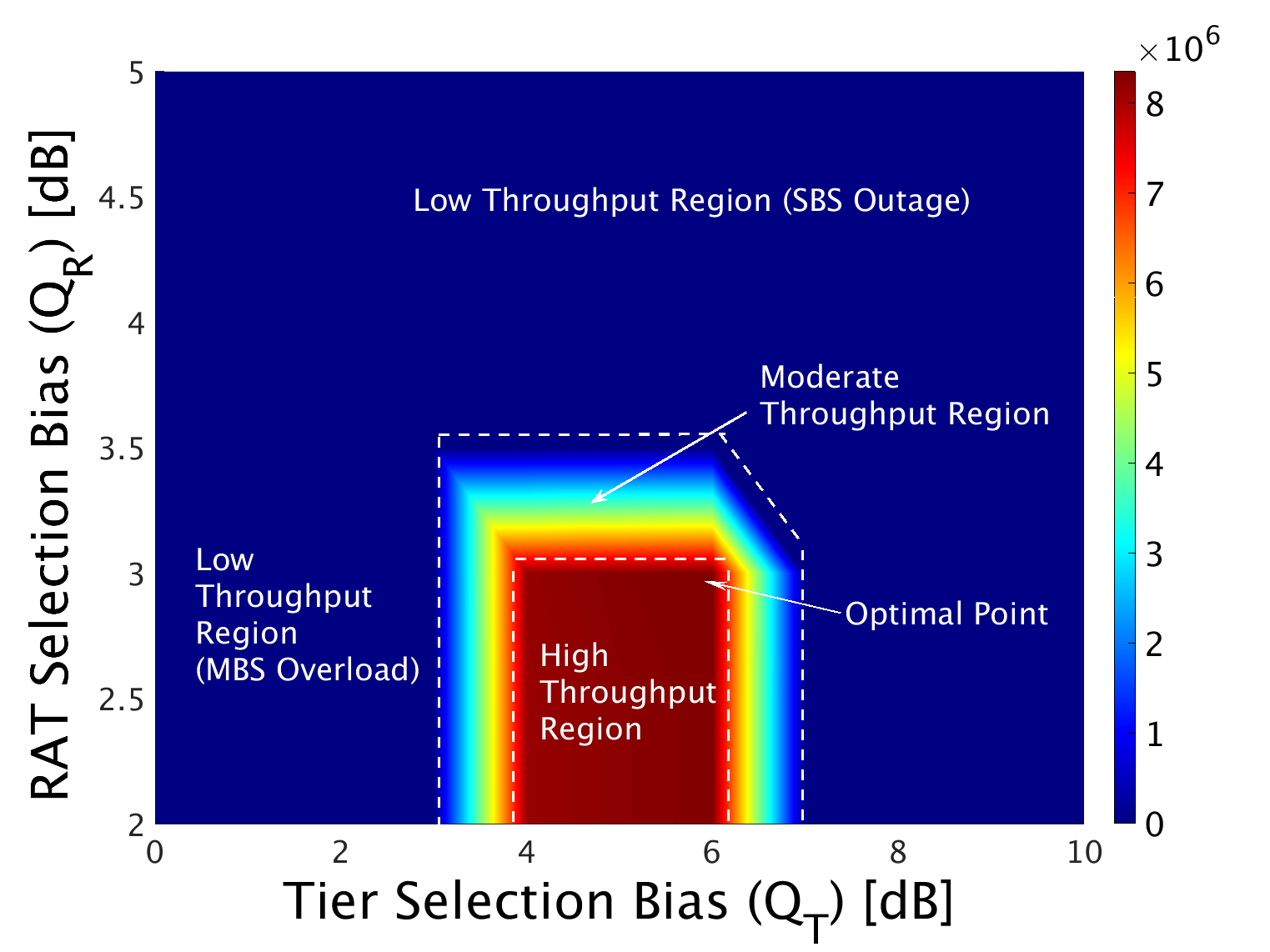}
\label{fig_PCS15}}
\hfil
{\includegraphics[width=8cm, height = 4.5cm]{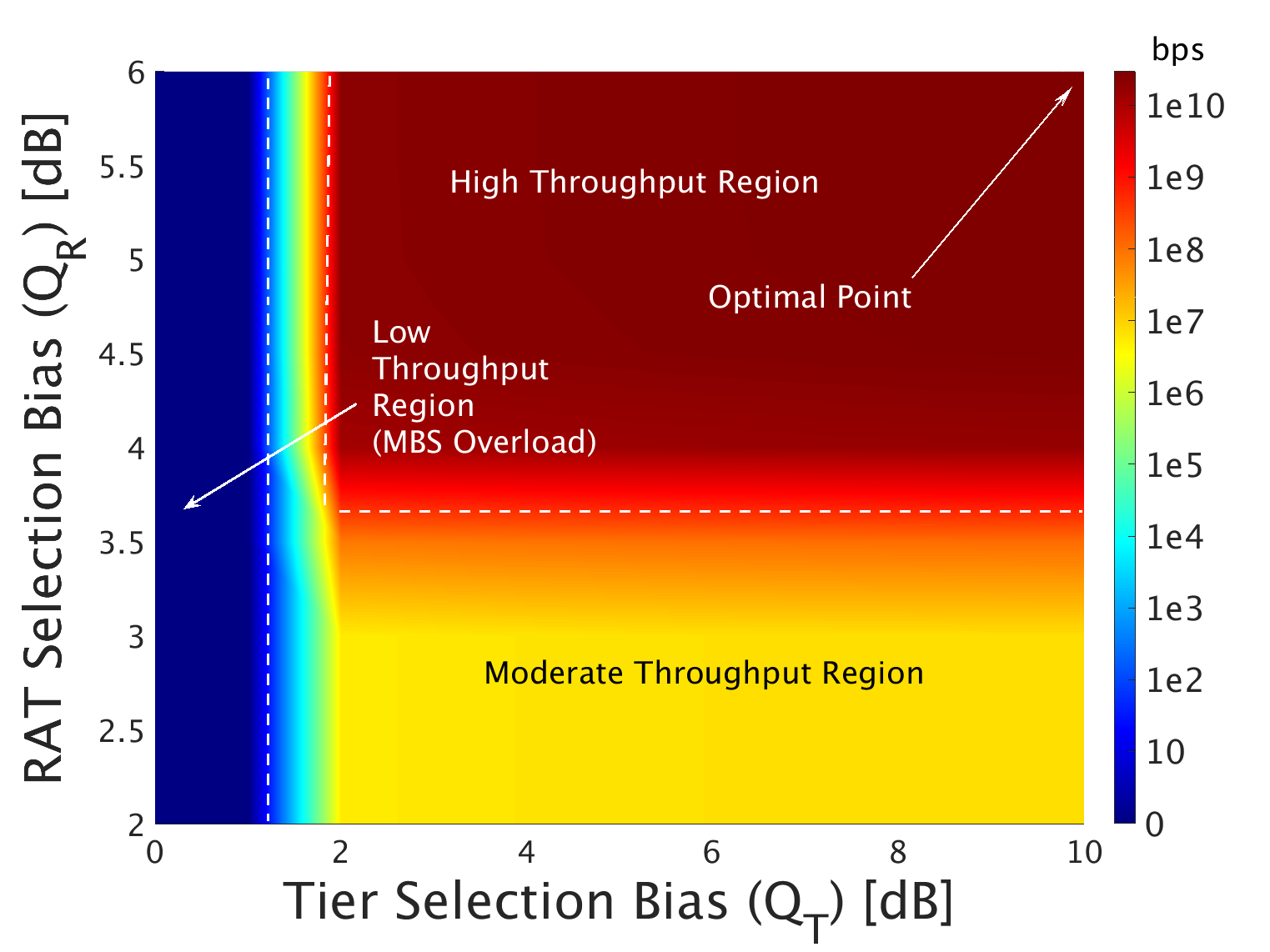}
\label{fig_PCS5}}
\caption{Effective user throughput vs $Q_T$ and $Q_R$ at a traffic density of 100  bits$\cdot$s$^{-1}$m$^{-2}$, tolerable outage probability of 0.10, $d_M=200$~m, $d_S=20$~m for left) $\lambda_S/\lambda_M = 50$ right) $\lambda_S/\lambda_M = 200$.}
\label{fig_RS}
\end{figure*}

As a conclusion, in dense SBS deployments (see Fig. \ref{fig_RS}, right), the users do not suffer from outage even in the case of high tier biases. In this case, $Q_R$ should be high enough to maximize the mm-wave association probability. In case of $\lambda_S/\lambda_M = 200$, this results in a maximum throughput of around 30 Gbps at $Q_T = 10$ dB and $Q_R = 6$ dB. In sparse SBS deployments ( Fig. \ref{fig_RS} (left)), high values of $Q_T$ are desirable to offload traffic from overloaded MBSs. However, as the SBS ranges increase, mm-wave becomes unattractive for users at the SBS cell edges. We can observe that increasing $Q_R$ beyond a certain limit pushes the SBS users in outage thereby decreasing the effective throughput. The maximum average throughput in this scenario, considering the regime of biases where the MBS tier is not overloaded, is 10  Mbps at $Q_T = 6$ dB and $Q_R = 3$ dB.

\section{Conclusion}
{ In this paper, we characterize a two tier network, consisting of classical sub-6GHz macro cells, and Multi RAT small cells, able to operate in sub-6GHz and mm-wave bands. First, we propose a {two-step} tier and RAT selection strategy {where the sub-6GHz band is used to speed-up the initial access procedure in the mm-wave RAT}, and then we investigate the effect of tier and RAT offloading in terms of SINR, cell load, and throughput. Our study highlights the fundamental trade-offs between outage probability, user throughput, and overloading probability, and, thereby, underscores the necessity of the dual band small cells to maintain outage below a certain threshold, specially in sparse deployments. In our system model, we have proposed effective approaches to optimize the user association. However, obtaining closed form solutions for the optimal biases and the maximum traffic density that the network can handle are open challenges. Moreover, the dual band nature of the base stations calls for advanced radio resource management, which is an interesting topic to be investigated.}


	\appendices
    \section{Tier Selection Probability}
	\label{App:TierSel}

The probabilities that at least one LOS MBS and LOS SBS exist are, respectively,
$\mathbb{E}[\mathds{1}(t_ML)]   
= 1 - \exp(-\pi \lambda_M d_M^2)$ and
\label{eq:VOID}
$\mathbb{E}[\mathds{1}(t_SL)]   
=1 - \exp(-\pi \lambda_S d_S^2)$.
Then, the values of $\mathbb{P}(\tilde{Q}_TP_{tv\mu1}>\tilde{Q}_TP_{t'v\mu1})$ are derived as follows:
	\begin{align}\nonumber
	&\mathbb{P}(P_{ML\mu1} > Q_T\cdot  P_{SL\mu1})  = \int_0^\infty e^{-\Lambda'_{SL\mu}(0,Q_Tr)}e^{-\Lambda'_{ML\mu}(0,r)}\lambda'_{ML\mu}(r) dr \nonumber \\
	& = \int_0^{\frac{d_S^{\alpha_{SL\mu}}}{Q_T\cdot K_{SL\mu} P_S}} e^{- \Lambda'_{SL\mu}(0,Q_Tr)}e^{-\Lambda'_{ML\mu}(0,r)}\lambda'_{ML\mu}(r) dr+ \nonumber\\ & \int_{\frac{d_S^{\alpha_{SL\mu}}}{Q_T\cdot K_{ML\mu} P_S}}^{\frac{d_M^{\alpha_{ML\mu}}}{K_{SL\mu}P_M}} e^{-\Lambda'_{SL\mu}\left(0,{\frac{d_S^{\alpha_{SL\mu}}}{K_{SL\mu}P_S}}\right)} e^{-\Lambda'_{ML\mu}(0,r)}\lambda'_{ML\mu}(r) dr  \nonumber  \\
	& = \frac{1}{1 + K_1} (1 - e^{-(K_1 + 1)t_1}) + e^{-\Lambda'_{SL\mu}\left(0,{\frac{d_S^{\alpha_{SL\mu}}}{K_{SL\mu}P_S}}\right)}\left[ \exp\left(-\Lambda'_{ML\mu}\left(0,\frac{d_S^{\alpha_{SL\mu}}}{Q_TK_{SL\mu}P_S}\right)\right) - \right. \nonumber \\ & \hspace*{6cm}\left.\exp\left(-\Lambda'_{ML\mu}\left(0,\frac{d_M^{\alpha_{ML\mu}}}{K_{ML\mu}P_M}\right)\right) \right] \nonumber \label{eq: rmlgrsl},
	\end{align} where, $K_1 = \pi \lambda_S (\frac{K_{SL\mu}P_SQ_T}{P_M})^{\frac{2}{\alpha_{SL\mu}}}(\pi\lambda_M)^{-\frac{\alpha_{ML\mu}}{\alpha_{SL\mu}}}$ and $t_1 = \pi \lambda_M (K_{ML\mu}P_M)^{\frac{2}{\alpha_{ML\mu}}}\left(\frac{d_S^{\alpha_{SL\mu}}}{Q_TK_{SL\mu}P_S}\right)^{\frac{2}{\alpha_{ML\mu}}}$.
\begin{align}
\mbox{Similarly, \quad}\mathbb{P}(P_{MN\mu1}>Q_T \cdot P_{SN\mu1})  
= \exp\left(-\Lambda'_{SN\mu}\left(0,\frac{d_S^{\alpha_{SN\mu}}}{K_{SN\mu}P_S}\right)\right)  \frac{e^{-(K_2 + 1)t_2}}{1 + K_2},\nonumber
\end{align}
where $K_2 = \pi \lambda_S (\frac{K_{SN\mu}P_SQ_T}{K_{MN\mu}P_M})^{\frac{2}{\alpha_{SN\mu}}}(\pi\lambda_M)^{-\frac{\alpha_{MN\mu}}{\alpha_{SN\mu}}}$ and $t_2 = \pi\lambda_Md_M^2(K_{MN\mu}P_M)^{\frac{2}{\alpha_{MN\mu}}-1}$. 
\begin{align}
\mbox{Finally, \quad }\mathbb{P}( Q_T\cdot  P_{SL\mu1} > P_{ML\mu1}) &= 1 - \mathbb{P}(P_{ML\mu1} > Q_T\cdot  P_{SL\mu1});\nonumber\\
 \mathbb{P}(Q_T \cdot P_{SN\mu1} > P_{MN\mu1}) &= 1 -\mathbb{P}(P_{MN\mu1}>Q_T \cdot P_{SN\mu1}). \nonumber
\end{align}
Using these expressions in Eq. \eqref{eq:LOSASSO} and Eq. \eqref{eq:NLOSASSO} completes the proof.

	\section{RAT Selection Probability}
	\label{App:RATSel}
	The power received from strongest SBS of state $v$ is $P_{Sv\mu1} = (\xi_{Sv\mu1})^{-1} = K_{Sv\mu}P_{S}||x_{Sv1}||^{-\alpha_{Sv\mu}}.$
    
	So, the estimate of the mm-wave power is:
   $ P_{Svm1}= G_0 K_{SLm} P_S ||x_{Sv1}||^{-\alpha_{Svm}}$.
    Therefore the probability of sub-6GHz service, given that the user is associated with strongest SBS of visibility state $v$, is calculated as:
	\begin{align}
	\mathbb{P}_{v\mu}  &= \mathbb{P}(P_{Sv\mu1} > Q_R\times P_{Svm1}) = \mathbb{P}\left(||x_{Sv1}|| \geq \left(\frac{K_{Svm}G_0Q_R}{K_{Sv\mu}}\right)^{\frac{1}{\alpha_{Svm} - \alpha_{Sv\mu}}}\right) \nonumber \\
    & = \exp\left(-\pi\lambda_S\left(\frac{K_{Svm}G_0Q_R}{K_{Sv\mu}}\right)^{\frac{2}{\alpha_{vm} - \alpha_{v\mu}}}\right)
	\end{align}
    The probability of mm-wave service is given by $\mathbb{P}_{Svm} = 1 - \mathbb{P}_{Sv\mu}.$	This completes the proof.
	\section{Proof of Eq. \eqref{MLCovP}}
	\label{App:ProofCovP}
    We provide the derivation only for the LOS MBS association case. The other cases follow similarly. When the user is associated with the strongest LOS MBS, it experiences interference from the other LOS MBSs, the NLOS MBSs, and the SBSs. Thus, the instantaneous SINR is:  
	\begin{equation}
	SINR_{ML\mu} = \frac{h_{\xi_{ML\mu1}}(\xi_{ML\mu1})^{-1}}{I_{ML\mu} + I_{MN\mu} + I_{SL\mu} + I_{SN\mu} + \sigma_N^2}, \nonumber
	\end{equation}
	where $I_{\{.\}}$ denote the interference terms given as   
 \begin{align}I_{ML\mu} &= \sum\limits_{\xi_{ML\mu i} \in \phi'_{ML\mu}\backslash\{\xi_{ML\mu1}\}}h_{\xi_{ML\mu i}}(\xi_{ML\mu i})^{-1}; 
	\quad\quad &I_{MN\mu} &= \sum\limits_{\xi_{MN\mu i} \in \phi'_{MN}}h_{\xi_{MN\mu i}}(\xi_{MN\mu i})^{-1};\nonumber \\
	I_{SL\mu} &= \sum\limits_{\xi_{SL\mu i} \in \phi'_{SL\mu}}h_{\xi_{SL\mu i}}(\xi_{SL\mu i})^{-1};
	\quad\quad &I_{SN\mu} &= \sum\limits_{\xi_{SN\mu i} \in \phi'_{SN\mu}}h_{\xi_{SN\mu i}}(\xi_{SN\mu i})^{-1}\nonumber.
	\end{align}
\begin{flalign}
&\mbox{Now, \quad }\mathbb{P}_{CML\mu} = \mathbb{P}(SINR_{ML\mu}>\gamma)  = \mathbb{P}\left(\frac{h_{\xi_{ML\mu1}}(\xi_{ML\mu1})^{-1}}{I_{ML\mu} + I_{MN\mu} + I_{SL\mu} + I_{SN\mu} + \sigma_N^2} > \gamma \right) \nonumber \\
	& = \mathbb{P}\left(h_{\xi_{ML\mu1}} > \frac{\gamma(I_{ML\mu} + I_{MN\mu} + I_{SL\mu} + I_{SN\mu} + \sigma_N^2)}{(\xi_{ML\mu1})^{-1}} \right) \nonumber \\
	& \stackrel{(a)}{=} \mathbb{E}_{\xi_{ML\mu1}}\left\lbrace\mathbb{E}_{\phi'_{ML\mu}}\left[\exp\left(-\frac{\gamma\cdot I_{ML\mu}}{(\xi_{ML\mu1})^{-1}}\right)\right]\mathbb{E}_{\phi'_{MN\mu}}\left[\exp\left(-\frac{\gamma\cdot I_{MN\mu}}{(\xi_{ML\mu1})^{-1}}\right)\right] \right.\nonumber \\ 
	& \left.  \mathbb{E}_{\phi'_{SL\mu}}\left[\exp\left(-\frac{\gamma\cdot I_{SL\mu}}{(\xi_{ML\mu1})^{-1}}\right)\right] \mathbb{E}_{\phi'_{SN\mu}}\left[\exp\left(-\frac{\gamma\cdot I_{SN\mu}}{(\xi_{ML\mu1})^{-1}}\right)\right]\left(\exp\left(-\frac{\gamma\cdot \sigma_N^2}{(\xi_{ML\mu1})^{-1}}\right)\right)\right\rbrace \label{eq:PCMLmutemp},
	\end{flalign}
	where (a) comes from the pdf of $h_{\xi_{ML\mu1}}$. Now, 
	\begin{align}
	&\mathbb{E}_{\phi'_{ML\mu}}\left[\exp\left(-\frac{\gamma\cdot I_{ML\mu}}{(\xi_{ML\mu1})^{-1}}\right)\right]  = \mathbb{E}\left[\exp\left(-\frac{\gamma\cdot \sum\limits_{\phi'_{ML\mu}\backslash\{\xi_{ML\mu 1}\}}h_{y}{y^{-1}}}{(\xi_{ML\mu i})^{-1}}\right)\right] \nonumber \\
	& = \mathbb{E}\left[\prod\limits_{\phi'_{ML\mu}\backslash\{\xi_{ML\mu1}\}} \mathbb{E}_{h_{y}}\left[\exp\left(-\frac{\gamma\cdot h_{y}{(y)^{-1}}}{{(\xi_{ML\mu1})^{-1}}}\right)\right]\right] \nonumber\\
	& = \exp \left(- \int\limits_{\xi_{ML\mu1}}^\infty\left(1 - \mathbb{E}_{h_y} \left[ \exp\left(-\frac{\gamma\cdot h_{y}y^{-1}}{{(\xi_{ML\mu1})^{-1}}}\right)\right] \right)\Lambda'_{ML\mu}(dy)\right) \nonumber \\
	&= \exp \left(- \int\limits_{\xi_{ML\mu1}}^\infty\left(\frac{\gamma\xi_{ML\mu1}}{y + \gamma\xi_{ML\mu1}} \Lambda'_{ML\mu}(dy) \right)\right).\nonumber \\
	\mbox{Similarly, \quad }&\mathbb{E}_{\phi'_{tv\mu}}\left[\exp\left(-\frac{\gamma\cdot I_{tv\mu}}{(\xi_{ML\mu1})^{-1}}\right)\right] = \exp \left(- \int\limits_{l_{tv}}^\infty\left(1 - \frac{y}{y + \gamma\xi_{ML\mu1}} \Lambda'_{tv\mu}(dy) \right)\right), \nonumber
\nonumber
	\end{align}
    for $tv = MN,SL$ and $SN$, respectively, where the lower indexes are:  $l_{SL} = l_{SN} = Q_T\cdot \xi_{ML\mu 1}$ and $l_{MN} = \xi_{ML\mu 1}$.
	Substituting the above results in Eq. \eqref{eq:PCMLmutemp}, and taking the expectation with respect to $\xi_{ML\mu1}$, completes the proof.

    \section{Proof of Proposition 1}
    \label{App:Same}
{Consider two LOS SBS $S_1$ and $S_2$\footnote{The analysis where there are NLOS SBS can be performed with similar reasoning.}. Let the power received by the typical user from the SBS $S_1$ in mm-wave and sub-6Ghz band be $P_{S1m}$ and $P_{S1\mu}$, respectively. Let the corresponding values for $S_2$ be $P_{S2m}$ and $P_{S2\mu}$, respectively. 
Now
\begin{eqnarray}
P_{S1\mu} \geq P_{S2\mu} & \iff & K_{\mu}P_Sd_1^{\alpha_{Sv\mu}} \geq K_{\mu}P_Sd_2^{\alpha_{Sv\mu}} \nonumber \\
& \iff & K_{m}P_Sd_1^{\alpha_{Svm}} \geq K_{m}P_Sd_2^{\alpha_{Svm}} \nonumber \\
& \iff & P_{S1m} \geq P_{S2m} \nonumber \\
& \iff & Q_R P_{S1m} \geq Q_RP_{S2m} 
\label{eq:ordering}
\end{eqnarray}
}    \section{Probability of Sub-Optimal Association}
    \label{App:Sub_OPT}
  {  Recall that $E_1$ and $E_2$ denote the events the biased received power from the strongest SBS (denoted $S_1$) in sub-6GHz band is less than that received from the strongest MBS (denoted by $M_1$) and the biased received power from $S_1$ in mm-wave is higher than the received power from $M_1$, respectively. }   
   {We have:
\begin{align}
\mathbb{P}\left[E_2\mathrel{\stretchto{\mid}{3ex}} E_1\right] &= \frac{\mathbb{P}\left[E_2 \cap E_1\right]}{\mathbb{P}\left[E_1\right]}  = \frac{\mathbb{P}\left[K_m P_S Q_R Q_T G_0 d_{S1}^{-\alpha_{Svm}} \geq K_\mu P_M d_{M1}^{-\alpha_{Mv'\mu}} \cap  P_Md_{M1}^{-\alpha_{Mv'\mu}} \geq Q_T P_S d_{S1}^{-\alpha_{Sv\mu}} \right]}{\mathbb{P}\left[P_Md_{M1}^{-\alpha_{Mv'\mu}} \geq Q_T P_S d_{S1}^{-\alpha_{Sv\mu}}\right]} \nonumber \\
& = \frac{\mathbb{P}\left[d_{S1} < \left(\frac{K_m P_S Q_R Q_T G_0}{K_\mu P_M}d_{M1}^{\alpha_{Mv'\mu}}\right)^{\frac{1}{\alpha_{Svm}}} \cap d_{S1} \geq \left(\frac{P_SQ_T}{P_M} d_{M1}^{\alpha_{Mv'\mu}}\right)^{\frac{1}{\alpha_{Sv\mu}}} \right]}{\mathbb{P}\left[d_{S1} \geq \left(\frac{P_SQ_T}{P_M} d_{M1}^{\alpha_{Mv'\mu}}\right)^{\frac{1}{\alpha_{Sv\mu}}}\right]} \nonumber \\
& = \mathbb{E}_{d_{M1}}\left[\frac{\exp\left(-\pi \lambda_S \left(\left(\zeta_2x^{\alpha_{Mv'\mu}}\right)^{\frac{2}{\alpha_Svm}} - \left(\zeta_1x^{\alpha_{Mv'\mu}}\right)^{\frac{2}{\alpha_Sv\mu}}\right)\right)}{  \exp\left(-\pi \lambda_S \left(\zeta_1x^{\alpha_{Mv'\mu}}\right)^{\frac{2}{\alpha_Sv\mu}}\right)}\right] \nonumber \\
& = 2\pi\lambda_M\int_0^{d_M}\frac{\exp\left(-\pi \lambda_S \left(\left(\zeta_2x^{\alpha_{Mv'\mu}}\right)^{\frac{2}{\alpha_Svm}} - \left(\zeta_1x^{\alpha_{Mv'\mu}}\right)^{\frac{2}{\alpha_Sv\mu}}\right)\right)}{  \exp\left(-\pi \lambda_S \left(\zeta_1x^{\alpha_{Mv'\mu}}\right)^{\frac{2}{\alpha_Sv\mu}}\right)} x \exp(-\pi\lambda_M x^2)dx  \nonumber 
\end{align}
Solving this integral with the approximated values of the path-loss exponents completes the proof.}
	\bibliography{refer.bib}
	\bibliographystyle{IEEEtran}	

	\begin{acronym}
		\acro {BS} {base station}
		\acro {ICIC} {inter-cell interference co-ordination}
		\acro {LOS} {line of sight}
		\acro {NLOS} {non-line of sight}
		\acro {MBS} {macro eNode-B}
		\acro {SBS} {small cell eNode-B}
		\acro {RAT} {radio access technique}
	\end{acronym}
\end{document}